% For Journal
\documentclass[a4paper,11pt]{article}
\usepackage[T1]{fontenc}
\usepackage{fullpage}
\usepackage{graphicx,amsmath,amssymb,mathrsfs}
\usepackage{latexsym}
\usepackage{hyperref}
\usepackage{cite}
\usepackage[linesnumbered, vlined, ruled]{algorithm2e}
\usepackage{amsthm}
\usepackage[nodayofweek]{datetime}
\usepackage{enumerate}
\usepackage{multirow}
\usepackage[mathlines]{lineno}
\usepackage{color}
\usepackage{xspace}

%=======================my definitions============================

\allowdisplaybreaks
\newtheorem{theorem}{Theorem}
\newtheorem{lemma}[theorem]{Lemma}

\newcommand{\radrestricted}[2]{\ensuremath{r^*_{#1 #2}}}
\newcommand{\maxrad}[2]{\max\{\radbd{#1}{#2}, \radbd{#2}{#1}\}}
\newcommand{\radbd}[2]{\ensuremath{\gamma({#1}, {#2})}}
\newcommand{\dintersection}{\ensuremath{\mathcal{I}}}
\newcommand{\boundaryarc}{\ensuremath{\mathcal{A}}}
\newcommand{\centerl}{\ensuremath{c_1}}
\newcommand{\centerra}{\ensuremath{c_2^{\mathrm{c}}}}
\newcommand{\centerrb}{\ensuremath{c_2^{\mathrm{cc}}}}
\newcommand{\CHPQ}{\ensuremath{\mathcal{C}_Q}}
\newcommand{\subpolygon}[2]{\ensuremath{\mathcal{P}_Q{(#1, #2)}}}
\newcommand{\subchain}[2]{\ensuremath{\CHPQ{(#1, #2)}}}
\newcommand{\subchainleft}[2]{\ensuremath{\CHPQ({#2+1},{#1})}}
\newcommand{\subchainright}[2]{\ensuremath{\CHPQ({#1+1},{#2})}}
\newcommand{\spm}[1]{\mathsf{SPM}(#1)}
\newcommand{\fvd}[1]{\mathsf{FVD}(#1)}
\newcommand{\gvd}[1]{\mathsf{VD}(#1)}

\newcommand{\bd}{\ensuremath{\partial}}
\newcommand{\upd}{\textsc{Update}\xspace}
\newcommand{\dcs}{\textsc{Decision}\xspace}
\newcommand{\out}{\textsc{Out}\xspace}
\newcommand{\iin}{\textsc{In}\xspace}
\newcommand{\defpoint}[1]{\ensuremath{\mathit{def}(#1)}}
\newcommand{\inout}[1]{\ensuremath{\mathit{io}(#1)}}
\newcommand{\vcw}[1]{\ensuremath{v_{cw}(#1)}}
\newcommand{\vccw}[1]{\ensuremath{v_{ccw}(#1)}}

\makeatletter \newbox\ProofSym \setbox\ProofSym=\hbox{%
  \unitlength=0.18ex%
  \begin{picture}(10,10) \put(0,0){\framebox(9,9){}}
    \put(0,3){\framebox(6,6){}}
  \end{picture}}
\renewenvironment{proof}[1][Proof.]{\O@proof{#1}}{\O@endproof}
\def\O@proof#1{\trivlist
  \@topsep\z@\@topsepadd\smallskipamount%
  \@ifstar{\item[]}{\item[\hskip\labelsep\it #1 ]}}
\def\O@endproof{\hfill\copy\ProofSym\endtrivlist}
\makeatother

\def\denseitems{
    \itemsep1pt plus1pt minus1pt
    \parsep0pt plus0pt
    \parskip0pt\topsep0pt}

\graphicspath{{figures/}}
%=======================End of my definitions======================

\begin{document}

%\baselineskip=14.0pt
%\linenumbers

\title{Computing a Geodesic Two-Center of Points in a Simple Polygon%
\thanks{%
  Work by Oh and Ahn was supported by the MSIT(Ministry of Science and ICT), Korea, under the SW Starlab support program(IITP-2017-0-00905) supervised by the IITP(Institute for Information \& communications Technology Promotion.)
  Work by S.W.Bae was supported by Basic Science Research Program
through the National Research Foundation of Korea (NRF) funded
by the Ministry of Education (2015R1D1A1A01057220).
}
}

%\subtitle{[Extended Abstract]}
\author{
Eunjin Oh\thanks{Department of Computer Science and Engineering,
POSTECH, Pohang, South Korea.
Email: {\tt{\{jin9082,heekap\}@postech.ac.kr}}} \and
Sang Won Bae\thanks{Department of Computer Science,
Kyonggi University, Suwon, South Korea.
Email: {\tt{swbae@kgu.ac.kr}}} \and
Hee-Kap Ahn\footnotemark[2]
}
\date{}
\maketitle
\begin{abstract}
  Given a simple polygon $P$ and a set $Q$ of points contained in $P$,
  we consider the geodesic $k$-center problem where we want to find
  $k$ points, called \emph{centers}, in $P$ to minimize the maximum
  geodesic distance of any point of $Q$ to its closest center.
  In this paper, we focus on the case for $k=2$ and present the first
   exact algorithm that efficiently computes an optimal $2$-center of $Q$
  with respect to the geodesic distance in $P$.
\end{abstract}

\section{Introduction}
Computing the centers of a point set in a metric space is a
fundamental algorithmic problem in computational geometry,
which has been extensively studied with numerous applications
in science and engineering.
This family of problems is also known as the \emph{facility location problem}
in operations research that asks an optimal placement of facilities to minimize
transportation costs.
A historical example  is the \emph{Weber problem}
in which one wants to place one facility
to minimize the (weighted) sum of distances from the facility to
input points.
In cluster analysis, the objective is to group
input points in such a way that the points in the same group
are relatively closer to each other than to those in other groups.
A natural solution finds a few number of centers and assign the points to the
nearest center, which relates to the well known \emph{$k$-center problem}.

The $k$-center problem is formally defined as follows:
for a set $Q$ of $m$ points,
find a set $C$ of $k$ points that minimizes
$\max_{q\in Q} \{\min_{c\in C} d(q,c)\}$,
where $d(x,y)$ denotes the distance between $x$ and $y$.
The $k$-center problem has been investigated for point sets in two-, three-,
or higher dimensional Euclidean spaces.
For the special case where $k=1$, the problem is equivalent to finding the
smallest ball containing all points in $Q$. It can be solved in $O(m)$
time for any
fixed dimension~\cite{Euclidean-1-center,higher1-center,plane1-center}.
%where $n$ is the number of points.
The case of $k=2$ can be solved in $O(m\log^2 m \log^2 \log m)$ time~\cite{Euclidean-2-center}
in $\mathbb{R}^2$.
If $k>2$ is part of input, it is NP-hard to approximate
the Euclidean $k$-center within approximation factor $1.822$~\cite{FederGreene88},
while an $m^{O(\sqrt{k})}$-time exact algorithm is known
for points in $\mathbb{R}^2$~\cite{Euclidean-k-center}.

There are several variants of the $k$-center problem.  One variant is
the problem for finding $k$ centers in the presence of obstacles. More
specifically, the problem takes a set of pairwise disjoint simple
polygons (obstacles) with a total of $n$ edges in addition to a set
$S$ of $m$ points as inputs.  It aims to find $k$ smallest congruent
disks whose union contains $S$ and whose centers do not lie on the
interior of the obstacles.  Here, the obstacles do not affect the
distance between two points.  For $k=2$, Halperin et
al.~\cite{2-centerobstacle} gave an expected
$O(n\log^2(mn)+mn\log^2m\log(mn))$-time algorithm for this problem.

In this paper, we consider another variant of the $k$-center problem
in which the set $Q$ of $m$ points are given in a simple $n$-gon $P$
and the centers are constrained to lie in $P$.
Here the boundary of the polygon $P$ is assumed to act as
an obstacle and the distance between any two points in $P$ is thus
measured by the length of the geodesic (shortest) path connecting
them in $P$ in contrast to~\cite{2-centerobstacle}.
We call this constrained version \emph{the geodesic
$k$-center problem} and its solution
\emph{a geodesic $k$-center}
%or more precisely
%\emph{an optimal $k$-center}
of $Q$ with respect to $P$.
%(or shortly, an optimal $k$-center if understood in context).

This problem has been investigated for the simplest case $k=1$.
The geodesic one-center  of $Q$ with respect to $P$
is proven to coincide with the geodesic one-center of the
geodesic convex hull of $Q$ with respect to $P$~\cite{FVD},
which is the smallest subset $C\subseteq P$ containing $Q$ such that
for any two points $p, q\in C$, the geodesic path between $p$ and $q$
is also contained in $C$.
Thus, the geodesic one-center can be computed by first computing
the geodesic convex hull of $Q$ in $O((m+n) \log (m+n))$ time~\cite{Toussaint89}
and second finding its geodesic one-center.
The geodesic convex hull of $Q$ forms a (weakly) simple polygon with $O(m+n)$ vertices.

Asano and Toussaint~\cite{1-center-first} studied the problem of
finding the geodesic one-center of a (weakly) simple polygon
and presented an $O(n^4\log n)$-time algorithm,
where $n$ denotes the number of vertices of the input polygon.
It was improved to $O(n\log n)$ in~\cite{1-center1989} and finally improved
again to $O(n)$ in~\cite{1-center}.
Consequently, the geodesic one-center of $Q$ with respect to $P$
can be computed in $O((m+n) \log (m+n))$ time.

However, even for $k=2$, finding a geodesic $k$-center of $Q$ with respect to
$P$ is not equivalent to finding a geodesic $k$-center of a (weakly)
simple polygon, which was addressed
in~\cite{2-centersimple}.
One can easily construct an example of $P$ and $Q$ in which the two
geodesic disks corresponding to a geodesic $2$-center of $Q$
do not contain the geodesic convex hull of $Q$.
See \figurename~\ref{fig:outside}.

In this paper, we consider the geodesic $2$-center problem and present
an algorithm to compute a geodesic $2$-center,
that is, a pair $(c_1,c_2)$ of points in $P$ such that $\max_{q \in Q} \{\min\{d(q,c_1),
d(q,c_2)\}\}$ is minimized,
where $d(x,y)$ denote the length of the shortest path connecting $x$ and
$y$ in $P$.
%Such a pair is called an \emph{optimal two-center} of $Q$ with respect to $P$.
Our algorithm takes $O(m(m+n)\log^3(m+n) \log m)$ time
using $O(m+n)$ space.
If $n$ and $m$ are asymptotically equal,
then our algorithm takes $O(n^2\log^4 n)$ using $O(n)$ space.

\section{Preliminaries}
% geodesic path and distance
Let $P$ be a simple polygon with $n$ vertices.  The \emph{geodesic
  path} between $x$ and $y$ contained in $P$, denoted by $\pi(x,y)$,
is the unique shortest path between $x$ and $y$ inside $P$.  We often
consider $\pi(x,y)$ directed from $x$ to $y$.  The length of
$\pi(x,y)$ is called the \emph{geodesic distance} between $x$ and $y$,
and we denote it by $d(x,y)$.

% geodesically convex
A subset $A$ of $P$ is \emph{geodesically convex} % with respect to $P$
if it holds that $\pi(x,y) \subseteq A$ for any $x, y \in A$.
For a set $Q$ of $m$ points contained in $P$, the
common intersection of all the geodesically convex subsets of $P$ that
contain $Q$ is also geodesically convex and it is called the
\emph{geodesic convex hull} of $Q$.  It is already known that the
geodesic convex hull of any set of $m$ points in $P$ is a weakly simple polygon and can be computed in
$O(n+m\log(n+m))$ time~\cite{shortest-path}.

Note that once the geodesic convex hull of $Q$ %with respect to $P$
is computed, our algorithm regards the geodesic convex hull as a new
polygon and never consider the parts of $P$ lying outside of the geodesic
convex hull. % become irrelevant
We simply use $\CHPQ$ to denote the geodesic convex hull of $Q$.
Each point $q\in Q$ lying on the boundary of $\CHPQ$ is called \emph{extreme}.

% boundary of a convex hull
For a set $A$, we use $\bd A$ to denote the boundary of $A$.  Since
the boundary of $\CHPQ$ is not necessarily simple, the clockwise order
of $\bd \CHPQ$ is not defined naturally in contrast to a simple curve.
Aronov et al.~\cite{FVD} presented a way to label the extreme points of
$Q$ with $v_1, \ldots, v_k$ such that the circuit
$\pi(v_1,v_2)\pi(v_2,v_3)\cdots\pi(v_k,v_1)$ is a closed walk of the
boundary of $\CHPQ$ visiting every point of $\bd \CHPQ$ at most twice.
We use this labeling of extreme points for our problem.
% and it visits every point on the boundary at most twice.
The circuit $\pi(v_1,v_2)\pi(v_2,v_3)\cdots\pi(v_k,v_1)$ is called the
\emph{clockwise traversal} of $\bd\CHPQ$ from $v_1$.
The \emph{clockwise order} follows from the clockwise
traversal along $\bd\CHPQ$.

% subpolygon and chain induced by boundary geodesic
Let $\subchain{v}{w}$ denote the portion of $\bd \CHPQ$ from $v$ to
$w$ in clockwise order (including $v$ and $w$) for $v, w \in \bd
\CHPQ$. For any two extreme points $v_i$ and $v_j$, we use $\subchain{i}{j}$
to denote the chain $\subchain{v_i}{v_j}$ for simplicity.
The subpolygon bounded by $\pi(w,v)$ and $\subchain{v}{w}$ is
denoted by $\subpolygon{v}{w}$.
Clearly, $\subpolygon{v}{w}$ is a weakly simple polygon.

% geodesic disk
The \emph{geodesic disk} centered at $c \in P$ with radius $r \in
\mathbb{R}$, denoted by $D_r(c)$, is the set of points whose geodesic
distance from $c$ is at most $r$. The boundary of $D_r(c)$ consists
of circular arcs and line segments.
Each circular arc along the boundary of $D_r(c)$ is called a \emph{boundary arc}
of disk $D_r(c)$.
Note that every point lying on a boundary arc of $D_r(c)$ is at distance $r$ from $c$.
A set of geodesic disks with the same radius satisfies the \emph{pseudo-disk
property}.
An extended form of the pseudo-disk property of geodesic disks
can be stated as follows.

\begin{lemma}[Oh et al.~{\cite[Lemma 8]{2-centersimple}}]
  \label{lem:pseudo}
  Let $\mathcal {D}=\{D_1, \ldots, D_k \}$ be a set of geodesic disks
  with the same radius and let $I$ be the common intersection of all
  disks in $\mathcal {D}$. Let $S=\langle s_1, \ldots, s_t\rangle$ be the cyclic
  sequence of the boundary arcs of geodesic disks appearing on $\bd I$ along
  its boundary in clockwise order.  For any $i \in \{1, \ldots, k\}$,
  the arcs in $\bd I \cap \bd D_i$ are consecutive in $S$.
\end{lemma}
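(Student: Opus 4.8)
\emph{Plan.} The plan is to fix an index $i$, write $c_i$ for the center of $D_i$, and prove the equivalent statement that $\bd D_i\cap I$ is a single connected sub-arc of the closed curve $\bd D_i$. This is equivalent to the assertion of the lemma: any $x\in\bd D_i\cap I$ lies on $\bd I$, since $I\subseteq D_i$ implies that every neighborhood of $x$ contains points outside $D_i$ and hence outside $I$; conversely every arc of $\bd I$ coming from $D_i$ lies in $\bd D_i\cap I$. Thus $\bd I\cap\bd D_i=\bd D_i\cap I$, and the arcs of $D_i$ are consecutive in $S$ exactly when this set is connected.

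First I would invoke the ordinary pseudo-disk property of equal-radius geodesic disks, namely that $\bd D_i$ and $\bd D_j$ cross at most twice. Consequently, for each $j\neq i$ the portion $A_j:=\bd D_i\cap D_j$ of the circle $\bd D_i$ lying inside $D_j$ is a single arc (the part of $\bd D_i$ between its at most two crossings with $\bd D_j$). Writing $r$ for the common radius, we then have $\bd D_i\cap I=\bd D_i\cap\bigcap_{j}D_j=\bigcap_{j\neq i}A_j$, so it suffices to show that this intersection of arcs is connected. Passing to complements in $\bd D_i$, set $\bar A_j:=\bd D_i\setminus D_j=\{x\in\bd D_i: d(x,c_j)>r\}$; then $\bigcap_{j\neq i}A_j$ is connected as soon as I show that $\bigcup_{j\neq i}\bar A_j$ is connected (a single arc or all of $\bd D_i$).

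The heart of the argument is a metric estimate: each $\bar A_j$ covers \emph{more than half} of $\bd D_i$. I would parametrize $\bd D_i$ cyclically by the initial direction at $c_i$ of the geodesic from $c_i$ to the boundary point --- a monotone parametrization, since shortest paths from a fixed source do not cross --- and measure arcs in this parameter. In the Euclidean model this is an exact computation: the two crossings of $\bd D_i$ with $\bd D_j$ subtend at $c_i$ the angle $2\arccos\!\bigl(\lVert c_i-c_j\rVert/(2r)\bigr)<\pi$, so $A_j$ spans strictly less than a semicircle and $\bar A_j$ strictly more. Granting this, any two of the arcs $\bar A_j$, each longer than a semicircle, must overlap; and a union of arcs each longer than a semicircle is connected, which follows by adding the arcs one at a time while maintaining the invariant that the running union is an arc of length greater than $\pi$ (the next arc, also longer than $\pi$, necessarily meets it, so the union stays a single arc or becomes all of $\bd D_i$). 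Hence $\bigcup_{j\neq i}\bar A_j$ is connected and $\bd D_i\cap I$ is a single arc, as required.

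The hard part is establishing the ``more than a semicircle'' estimate in the geodesic metric, where the clean Euclidean angle computation is unavailable. Here I would rely on the geodesic convexity of $D_i$ together with the fact that $g(x):=d(x,c_j)$ is convex along geodesics: on the far side of $\bd D_i$ from $c_j$ the distance $g$ exceeds $r$, and convexity of $g$ pins down a single arc $A_j$ whose angular extent at $c_i$ is controlled by the geodesic triangle on $c_i$, $c_j$ and a crossing point $w$ of $\bd D_i$ and $\bd D_j$ (which satisfies $d(w,c_i)=d(w,c_j)=r$). Turning this into the sharp semicircle bound, and separately handling degenerate vertices of $\bd I$ where several disk boundaries pass through a common point, are the two technical points that the proof must treat carefully.
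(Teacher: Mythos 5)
The paper never proves this lemma: it is imported verbatim from Oh et al.~\cite[Lemma 8]{2-centersimple}, so your proposal can only be judged on its own merits, and unfortunately it contains a genuine gap. Your reduction (the arcs of $D_i$ on $\bd I$ are consecutive iff $\bd D_i\cap I$ is connected) and your use of the pairwise pseudo-disk property to conclude that each $A_j=\bd D_i\cap D_j$ is a single arc of the closed curve $\bd D_i$ are both sound. But the whole proof then rests on the ``more than a semicircle'' estimate, and that estimate is \emph{false} for geodesic disks---it is not, as you frame it, a technical point to be settled by CAT(0) comparison arguments. The root cause is that $\bd D_i$ is not a metric circle: it contains portions of $\bd P$ whose points are at geodesic distance strictly less than $r$ from $c_i$, and the law-of-cosines/comparison-triangle bound you invoke requires $d(x,c_i)=r$ exactly. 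Indeed, by the triangle inequality, \emph{every} point of $\bd D_i$ at distance at most $r-d(c_i,c_j)$ from $c_i$ lies in $D_j$, regardless of the direction in which it is seen from $c_i$, and in a polygon such points can occupy almost all directions.

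Concretely, let $P=[-L,L]\times[-h,h]$ be a long thin rectangle, $c_i=(0,0)$, $c_j=(\delta,0)$, and fix $r$ with $h<r$ and $\sqrt{r^2-h^2}+\delta<L$, where $0<\delta\ll h\ll r$. Since $P$ is convex, geodesic distance equals Euclidean distance. Then $\bd D_i$ consists of two short circular arcs, each of angular extent $2\arcsin(h/r)$ (around directions $0$ and $\pi$), joined by two long stretches of the walls $y=\pm h$. A direct check shows $\bar A_j=\bd D_i\setminus D_j$ is exactly the left arc plus two wall pieces of length $\delta$: it is nonempty but subtends an angle of only about $2\arcsin(h/r)+O(\delta/r)$ at $c_i$, which tends to $0$ as $h/r\to 0$, while $A_j$ subtends nearly $2\pi$. (The incidental overlap of $\bd D_i$ and $\bd D_j$ along the walls can be removed by perturbing the walls and $c_j$ slightly without changing these extents appreciably.) So the invariant driving your induction---every nonempty complementary arc exceeds a semicircle, hence any two overlap---is unavailable. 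A secondary problem is that your parametrization by initial direction is not injective (all geodesics through a common reflex vertex share an initial direction), so even where an angular bound holds, ``two direction-arcs of length $>\pi$ overlap'' does not by itself force the corresponding subsets of $\bd D_i$ to intersect. Finally, note that connectivity of $\bigcup_{j\neq i}\bar A_j$ is, by complementation on the closed curve $\bd D_i$, \emph{equivalent} to the connectivity statement you reduced to; so once the semicircle estimate is withdrawn, the proposal has no remaining content. A repair would need a different idea, e.g., a contradiction argument on an interleaved quadruple of arcs that works with the shortest paths from the centers to those arcs and the triangle inequality, rather than any per-disk angular bound at $c_i$.
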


For a set $A\subseteq P$, we define the \emph{(geodesic) radius} of $A$
to be $\inf_{c\in P}\sup_{p\in A}d(p,c)$,
that is, the smallest possible radius of a geodesic disk that contains $A$.

We make the \emph{general position assumption} on input points $Q$ with respect to polygon $P$ that
no two distinct points in $Q$ are equidistant from a vertex of $P$.
We make use of the assumption when computing the geodesic Voronoi diagrams of $Q$ with respect to $P$, which are counterparts of the standard Voronoi diagrams with respect to the geodesic distance.
Moreover, this is the only place where we use the general position assumption.
If there is a vertex of $P$ equidistant from two distinct points $v, v'$ of $Q$,
there is a two-dimensional region consisting of the points equidistant from $v$ and $v'$,
which we want to avoid.
Indeed, the general position assumption can be achieved by considering the points
in such a two-dimensional region to be closer to $v$ than to $v'$.
Then every cell of the geodesic nearest-point (and farthest-point) Voronoi diagram is associated with only one site $t$.

\section{Bipartition by Two Centers}

We first compute the geodesic convex hull $\CHPQ$ of the point set $Q$.
%with respect to the polygon $P$
%using the algorithm by Toussaint~\cite{Toussaint89}.
% \ccheck{By definition of the geodesic
%  convex hull, the vertice on the boundary of $\CHPQ$}
Let $Q_B$ be the set of extreme points in $Q$ and let $Q_I := Q\setminus Q_B$.
Note that each $q\in Q_B$ lies on $\bd \CHPQ$ while each $q' \in Q_I$ lies in the interior of $\CHPQ$.
The points of $Q_B$ are readily sorted along the
boundary of $\CHPQ$, being labeled by $v_1,\ldots,v_k$ following
the notion of Aronov~et~al~\cite{FVD}.

 \begin{figure}[ht]
   \begin{center}
     \includegraphics[width=0.5\textwidth]{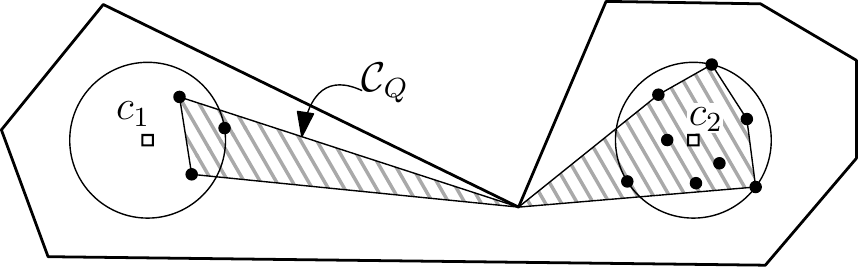}
     \caption{\small The dashed region is the geodesic convex hull $\CHPQ$.
     	The center $c_1$ lies outside of $\CHPQ$, while
     	the center $c_2$ lies inside $\CHPQ$.
       \label{fig:outside}}
     \end{center}
   \end{figure}

Note that it is possible that an optimal two-center has one of its two
centers lying outside of $\CHPQ$ (See \figurename~\ref{fig:outside}).
However,
there always exists an optimal two-center of $Q$ with respect to $P$
such that both the centers are contained in $\CHPQ$ as stated in the
following lemma.
Thus we may search only $\CHPQ$ to find an optimal two-center of $Q$ with
respect to $P$.

\begin{lemma} \label{lem:chpq}
  There is an optimal two-center $(c_1, c_2)$ of $Q$ with respect to $P$ such that
  both $c_1$ and $c_2$ are contained in the geodesic convex hull $\CHPQ$ of $Q$.
\end{lemma}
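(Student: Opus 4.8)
The plan is to start from \emph{any} optimal two-center and rewrite it so that each of its two centers is pushed into $\CHPQ$ without increasing the objective. Fix an optimal two-center $(c_1, c_2)$ and let $r^*$ be its optimal radius, i.e. $r^* = \max_{q\in Q}\min\{d(q,c_1), d(q,c_2)\}$. Assign each $q\in Q$ to a nearest of the two centers, breaking ties arbitrarily, and let $Q_1$ and $Q_2$ be the resulting subsets, so that $Q = Q_1\cup Q_2$ and $d(q,c_i)\le r^*$ for every $q\in Q_i$. In particular $c_i$ is the center of a geodesic disk of radius $r^*$ containing $Q_i$, so by the definition of geodesic radius the radius of $Q_i$ is at most $r^*$ for $i\in\{1,2\}$.

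Next I would replace $c_i$ by the geodesic one-center $o_i$ of the point set $Q_i$, and the crux is to argue that $o_i$ lies in $\CHPQ$. By the one-center result of Aronov et al.~\cite{FVD}, the geodesic one-center of $Q_i$ coincides with the geodesic one-center of its geodesic convex hull $\mathcal{C}_{Q_i}$, and the latter lies in $\mathcal{C}_{Q_i}$ (geodesic convexity of $\mathcal{C}_{Q_i}$ guarantees that geodesic distances within $\mathcal{C}_{Q_i}$ agree with those in $P$, so its one-center is a point of $\mathcal{C}_{Q_i}$ rather than of some larger region). Since $Q_i\subseteq Q$, monotonicity of the geodesic convex hull gives $\mathcal{C}_{Q_i}\subseteq\CHPQ$; hence $o_i\in\mathcal{C}_{Q_i}\subseteq\CHPQ$ for $i\in\{1,2\}$.

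It then remains to check that $(o_1,o_2)$ is still optimal. By construction $o_i$ realizes the geodesic radius of $Q_i$, which we bounded above by $r^*$, so $d(q,o_i)\le r^*$ for every $q\in Q_i$. Assigning each $q\in Q$ to the center $o_i$ of the part $Q_i$ to which it belongs shows $\max_{q\in Q}\min\{d(q,o_1),d(q,o_2)\}\le r^*$; since $r^*$ is the optimal value, equality holds, so $(o_1,o_2)$ is an optimal two-center with both centers in $\CHPQ$. The degenerate case where one part is empty, say $Q_2=\emptyset$, reduces to a one-center instance: take $o_1$ as above and place $o_2$ at any point of $\CHPQ$.

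The one substantive ingredient, and the step I expect to require the most care, is the claim that the geodesic one-center of a point set lies in its geodesic convex hull; this is precisely where the cited one-center result~\cite{FVD} enters, combined with the monotonicity $Q_i\subseteq Q\Rightarrow\mathcal{C}_{Q_i}\subseteq\CHPQ$. Conceptually the same conclusion could be obtained by projecting each external center to its geodesic nearest point on $\CHPQ$ and showing that this projection does not increase the distance to any point of $\CHPQ$ (a nonexpansiveness property of nearest-point projection onto a geodesically convex subset of a simple polygon); the one-center reduction above is what lets me sidestep developing that machinery.
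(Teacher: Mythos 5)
Your proposal is correct and follows essentially the same route as the paper's proof: replace each center by the center of the smallest enclosing geodesic disk of the points it covers, and invoke the result of Aronov et al.~\cite{FVD} that this center lies in the geodesic convex hull of those points, which by monotonicity is contained in $\CHPQ$. The only cosmetic differences are that you use a disjoint nearest-center partition of $Q$ where the paper uses the (possibly overlapping) sets $D_r(c_i)\cap Q$, and that you derive the containment from the one-center coincidence result rather than citing Corollary 2.3.5 of~\cite{FVD} directly.
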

 \begin{proof}
  Let $(c_1, c_2)$ be an optimal two-center of $Q$ with respect to $P$
  and let $r$ be the smallest radius satisfying $D_r(c_1) \cup D_r(c_2)$ contains $Q$.
  By Corollary 2.3.5 of \cite{FVD},
  the center $c_1'$ of the smallest-radius geodesic disk containing $D_r(c_1)\cap Q$
  lies in the geodesic convex hull of $D_r(c_1)\cap Q$.
  Thus, by definition of the geodesic convexity, $c_1' \subset \CHPQ$
  and $D_r(c_1)\cap Q \subset D_r(c_1')$.
  Similarly, the center $c_2'$ of the smallest-radius geodesic disk containing
  $D_r(c_2) \cap Q$ lies in $\CHPQ$, and $D_r(c_2)\cap Q \subset D_r(c_2')$.
  Thus, $(c_1',c_2')$ is also an optimal two-center of $Q$ with respect to $P$.
  Moreover, both the centers are contained in $\CHPQ$.
\end{proof}

Let $c_1$ and $c_2$ be two points in $\CHPQ$ such that $D_r(c_1) \cup D_r(c_2)$ contains $Q$.
By Lemma~\ref{lem:pseudo}, the boundaries of the two geodesic disks cross
each other at most twice.
If every extreme point on $\subchainleft{i}{j}$ % \cap Q$
is contained in $D_r(c_1)$,
so is the whole chain $\subchainleft{i}{j}$
since $D_r(c_1)$ is geodesically convex and therefore $\pi(v,v')\subset D_r(c_1)$ for
any two extreme points $v$ and $v'$ on $\subchainleft{i}{j}$.
 Thus there exists a pair $(i,j)$
of indices such that $\subchainleft{i}{j}$ is contained in $D_{r}(c_1)$ and
$\subchainright{i}{j}$ is contained in $D_{r}(c_2)$.
We call such a pair $(i,j)$ of indices a \emph{partition pair} of $D_r(c_1)$ and $D_r(c_2)$.
If $(c_1,c_2)$ is an optimal two-center and $r = \max_{q \in Q}
\{\min\{d(q,c_1), d(q,c_2)\}\}$, then every partition pair of $D_r(c_1)$
and $D_r(c_2)$ is called an \emph{optimal partition pair}.

For a pair $(i,j)$ of indices, an \emph{optimal $(i,j)$-restricted two-center}
is defined as a pair of points $(c_1,c_2)$ that minimizes $r>0$ satisfying
$\subchainleft{i}{j} \subset D_r(c_1)$,
$\subchainright{i}{j} \subset D_r(c_2)$, and $Q \subset D_r(c_1) \cup D_r(c_2)$.
Let $\radrestricted{i}{j}$ be the \emph{radius} of an optimal $(i,j)$-restricted two-center,
defined to be the infimum of all values $r$
satisfying $\subchainleft{i}{j} \subset D_r(c_1)$,
$\subchainright{i}{j} \subset D_r(c_2)$, and $Q \subset D_r(c_1) \cup D_r(c_2)$ for
some $(i,j)$-restricted two-center $(c_1,c_2)$.
Obviously, an optimal two-center of $Q$ is
an optimal $(i, j)$-restricted two-center of $Q$ for some pair $(i, j)$.

In this paper, we give an algorithm for computing an optimal two-center
of $Q$ with respect to $P$.  The overall algorithm is described in
Section~\ref{sec:algorithm}.  As subprocedures, we use the decision
and the optimization algorithms described in
Sections~\ref{sec:decision} and~\ref{sec:opt}, respectively.
The decision algorithm determines whether $r \geq
\radrestricted{i}{j}$ for a given triple $(i,j,r)$ and the optimization
algorithm computes $\radrestricted{i}{j}$ for a given pair $(i,j)$.
While executing the whole algorithm, we call the decision and the
optimization algorithms repeatedly with different inputs.

\section{Decision Algorithm for a Pair of Indices}
\label{sec:decision}

In this section, we present an algorithm that decides whether or not $r \geq
\radrestricted{i}{j}$ given a pair $(i, j)$ and a radius $r\geq 0$.
Note that $r \geq \radrestricted{i}{j}$ if and only if there is a pair $(c_1,c_2)$ of points
in $\CHPQ$ such that $D_r(c_1)$ contains $\subchainleft{i}{j}$, $D_r(c_2)$
contains $\subchainright{i}{j}$ and $D_r(c_1) \cup D_r(c_2)$ contains $Q$.
We call such a pair $(c_1, c_2)$ an \emph{$(i,j,r)$-restricted two-center}.
As discussed above, the set $Q_B$ is partitioned by the pair $(i, j)$
into two subsets, $Q_1=Q_B\cap\subchainleft{i}{j}$ % = \{v_{j+1}, \ldots, v_i\}$
and $Q_2=Q_B\cap\subchainright{i}{j}$. % = \{v_{i+1}, \ldots, v_j\}$.

%If $r$ is sufficiently large, the decision can be made relatively easy.
If $r$ is at least the radius of the smallest geodesic disk containing $\CHPQ$,
which can be computed in time linear to the complexity of $\CHPQ$,
then our decision algorithm surely returns ``yes.''
%%%%%%%%%%%%%%%%%%%%%%%%%%%%%%%%%%%%%%%%%%%%%%%%%%%%%%%%%%%%%%%%%%%%%%%%%%%%
Another easy case is when $r$ is large enough so that at least one of
the four vertices $v_i$, $v_{i+1}$, $v_j$, $v_{j+1}$ is contained in
both $D_r(c_1)$ and $D_r(c_2)$ for some $(i,j,r)$-restricted two-center $(c_1,c_2)$.
\begin{lemma} \label{lem:decision_easy}
 Given two indices $i, j$ and a radius $r$,
 an $(i,j,r)$-restricted two-center $(c_1, c_2)$ can be computed in
 $O((m+n) \log^2 (m+n))$ time using $O(m+n)$ space,
 provided that $D_r(c_1) \cap D_r(c_2)$ contains one of the four vertices:
 $v_i$, $v_{i+1}$, $v_j$, and $v_{j+1}$.
\end{lemma}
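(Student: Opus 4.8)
The plan is to reduce the four-way hypothesis to four structurally identical subproblems, one per candidate junction vertex lying in $D_r(c_1)\cap D_r(c_2)$; since there are only four such vertices, solving each in $O((m+n)\log^2(m+n))$ time and returning any feasible outcome yields the bound. Fix the candidate, say $v_i\in D_r(c_1)\cap D_r(c_2)$. Since $v_{i+1}$ is the first vertex of $\subchainright{i}{j}$ and is contained in $D_r(c_2)$, while $v_i\in D_r(c_2)$ by assumption, the geodesic convexity of $D_r(c_2)$ forces $\pi(v_i,v_{i+1})\subseteq D_r(c_2)$, so $D_r(c_2)$ actually covers the whole chain $\subchain{i}{j}$ from $v_i$ to $v_j$. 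Symmetrically $D_r(c_1)$ covers $\subchainleft{i}{j}$ from $v_{j+1}$ to $v_i$. Thus the two disks must jointly cover all of $\bd\CHPQ$ as two overlapping geodesically convex regions meeting at $v_i$, and the only free choices that remain are the split of the interior points $Q_I$ between the two disks and the centers themselves.

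Next I would reduce each chain-covering condition to a disk-intersection condition. A geodesically convex chain is contained in $D_r(c_1)$ if and only if $D_r(c_1)$ contains every vertex of the chain, because consecutive chain vertices are joined by straight geodesic segments that any geodesically convex set spanning their endpoints must contain. Hence $D_r(c_1)\supseteq\subchainleft{i}{j}$ is equivalent to $c_1\in\bigcap_w D_r(w)$ over the $O(m+n)$ chain vertices $w$ (the extreme points of $Q_1$ and the reflex vertices of $P$ on the chain), and likewise for $c_2$ after adjoining the single disk $D_r(v_i)$. Consequently, once the interior-point assignment is fixed, testing feasibility of $c_1$ (resp.\ $c_2$) is exactly testing nonemptiness of a common intersection of $O(m+n)$ equal-radius geodesic disks, which I would carry out with the known subroutine for intersecting congruent geodesic disks in $O((m+n)\log(m+n))$ time and $O(m+n)$ space.

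The crux is to show that the interior points need not be distributed over exponentially many subsets, but only over a linear family of contiguous splits. I would order $Q_I$ by the clockwise order in which the geodesic paths $\pi(v_i,q)$ emanate from the shared vertex $v_i$ (equivalently, by their order in the shortest-path tree rooted at $v_i$). Using the geodesic convexity of each disk together with the pseudo-disk property of Lemma~\ref{lem:pseudo} — which guarantees that $\bd D_r(c_1)$ and $\bd D_r(c_2)$ cross at most twice — I would argue that in any valid assignment the points covered by $D_r(c_1)$ form a contiguous prefix of this order and those covered by $D_r(c_2)$ a contiguous suffix. Proving this contiguity rigorously, ruling out interleaved coverage, is the main obstacle, and it is precisely where Lemma~\ref{lem:pseudo} is indispensable: the bound of two crossings forces a single switch between the two disks. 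Granting it, feasibility becomes monotone in the split position $s$, since as $s$ grows the feasible region $R_1(s)$ for $c_1$ only shrinks (more disks are intersected) while the region $R_2(s)$ for $c_2$ only grows; thus $c_1$ is feasible for $s\le s_1$ and $c_2$ for $s\ge s_2$.

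Given this monotonicity, I would locate the thresholds $s_1$ and $s_2$ by two binary searches over the $O(m+n)$ split positions, each test being one common-intersection nonemptiness check costing $O((m+n)\log(m+n))$ time. For the fixed candidate $v_i$, an $(i,j,r)$-restricted two-center exists if and only if $s_2\le s_1$, and then any split $s\in[s_2,s_1]$ yields centers by choosing $c_1\in R_1(s)$ and $c_2\in R_2(s)$. The preprocessing — building the shortest-path map from $v_i$ and sorting $Q_I$ in its clockwise order — costs $O((m+n)\log(m+n))$, and the two binary searches contribute one further logarithmic factor, giving $O((m+n)\log^2(m+n))$ time and $O(m+n)$ working space. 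Repeating the procedure over the four candidate junction vertices affects only the constant, which establishes the claim.
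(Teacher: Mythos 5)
Your proposal is correct and follows essentially the same route as the paper: fix the shared vertex $v_i$, use the key (and in the paper likewise only asserted) observation that the bipartition of the remaining points is radially contiguous around $v_i$, i.e., realizable by a splitting geodesic path $\pi(v_i,w)$ to the boundary, and binary-search over the $O(m+n)$ contiguous splits with an $O((m+n)\log (m+n))$-time test per step. The only difference is cosmetic: the paper runs a single binary search minimizing the larger of the two smallest-enclosing-geodesic-disk radii, whereas you run two feasibility-threshold searches via nonemptiness of intersections of radius-$r$ disks --- equivalent tests, since an intersection of congruent geodesic disks is nonempty exactly when the enclosing radius of their centers is at most $r$.
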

\begin{proof}
Assume without loss of generality that $v_i$ is contained in both $D_r(c_1)$ and $D_r(c_2)$.
Let $Q' := Q \setminus \{v_i \}$.
Then, we observe that $Q'$ can be bipartitioned into $Q'_1$ and $Q'_2$
by a geodesic path $\pi(v_i, w)$ from $v_i$ to some $w\in \bd \CHPQ$
such that $Q'_1 \cup \{v_i\} \subset D_r(c_1)$ and $Q'_2 \cup\{v_i\} \subset D_r(c_2)$.
We thus search the boundary $\bd \CHPQ$ for some $w\in \bd\CHPQ$ that implies
such a bipartition $(Q'_1, Q'_2)$ of $Q'$.

For the purpose, we decompose $\CHPQ$ into triangular cells by extending the shortest path tree rooted at $v_i$
for the vertices of $\CHPQ$ and the points in $Q$ in direction opposite to the root.
There are $O(m+n)$ triangular cells in the resulting planar map $\mathsf{M}$.
All the vertices of $\mathsf{M}$ lie on $\bd \CHPQ$ and
every $q\in Q$ lies on $\pi(v_i, w)$ for some vertex $w$ of $\mathsf{M}$.

We sort the vertices of the cells along $\bd \CHPQ$ in clockwise order from $v_i$,
and then apply a binary search on them to find a vertex $w$
that minimizes the radius of the larger of smallest geodesic disks
containing $(Q' \cap \subpolygon{v_i}{w}) \cup \{v_i\}$
and $(Q' \setminus \subpolygon{v_i}{w}) \cup \{v_i\}$, respectively.
An $(i,j,r)$-restricted two-center then corresponds to the bipartition
obtained in the above binary search.
This takes $O((m+n) \log^2 (m+n))$ time and $O(m+n)$ space.
\end{proof}

Note that
one can also decide in the same time bound
if this is the case where there is an $(i,j,r)$-restricted two-center
$(c_1, c_2)$ such that $v_i \in D_r(c_1) \cap D_r(c_2)$
by running the procedure described in Lemma~\ref{lem:decision_easy}.

In the following, we thus assume that
there is no $(i,j,r)$-restricted two-center $(c_1,c_2)$ such that
any of the four vertices $v_i$, $v_{i+1}$, $v_j$, $v_{j+1}$ is contained in
both $D_r(c_1)$ and $D_r(c_2)$.
This also means that
$D_r(c_1) \cap \{v_i, v_{i+1}, v_j, v_{j+1}\} = \{v_i, v_{j+1}\}$,
$D_r(c_2) \cap \{v_i, v_{i+1}, v_j, v_{j+1}\} = \{v_{i+1}, v_j\}$,
and hence $c_1 \neq c_2$ for any $(i,j,r)$-restricted two-center $(c_1,c_2)$.

%%%%%%%%%%%%%%%%%%%%%%%%%%%%%%%%%%%%%%%%%%%%%%%%%%%%%%%%%%%%%%%%%%%%%%%%%%%%
%In the following subsections, we address the remaining case where
%the radius of the smallest-geodesic disk containing $\CHPQ$
%is at least $r$.
%The algorithm consists of two steps which are described in the following
%two subsections.
%We analyze the correctness and the running time of the decision algorithm
%in the last subsection.

\subsection{Intersection of Geodesic Disks and Events}
We first compute the common intersection of the geodesic disks of radius $r$
centered at extreme points on each subchain,
$\subchain{i+1}{j}$ and $\subchain{j+1}{i}$.
% using
%the farthest-point geodesic Voronoi diagrams of $Q_1$ and of $Q_2$.
%It can be done in $O((m+n)\log(m+n))$ time
That is, we compute $\dintersection_1 = \bigcap_{q\in Q_1} {D_{r}(q)}$ and $\dintersection_2 = \bigcap_{q\in Q_2} {D_{r}(q)}$.
Let $t \in \{1, 2\}$ in the following.
Given the geodesic farthest-point Voronoi diagram of $Q_t$,
the intersection $\dintersection_t$ can be computed in $O(m+n)$ time.
The set $\dintersection_t$ consists of points that are at distance at
most $r$ from all points $Q_t$ and its boundary consists
of parts of boundary arcs of geodesic disks centered at points in $Q_t$
and parts of polygon boundary. Note that any point in a
boundary arc of $\dintersection_t$ has geodesic distance $r$ to some point
in $Q_t$.
We denote the union of the boundary arcs of $\dintersection_t$ by $\boundaryarc_t$.
Note that $r<r^*_{ij}$ if $\dintersection_1 = \emptyset$ or $\dintersection_2 = \emptyset$.
Our decision algorithm returns ``no'' immediately if this is the case.
Otherwise, both $\boundaryarc_1$ and $\boundaryarc_2$ are nonempty because $r$ is smaller than
the radius of a smallest disk containing $\CHPQ$.
Also, if $Q_I = \emptyset$, then our algorithm returns ``yes'' immediately.

\begin{lemma}
  \label{lem:extreme_determinator}
  There is an $(i,j,r)$-restricted two-center $(c_1,c_2)$
  such that $c_1 \in \boundaryarc_1$ and $c_2 \in \boundaryarc_2$,
  provided that $r \geq \radrestricted{i}{j}$.
\end{lemma}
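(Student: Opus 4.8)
The plan is to first restate the feasibility condition in a form that isolates what "$c_1\in\boundaryarc_1$" means. Since \(D_r(c_1)\) is geodesically convex and \(\subchainleft{i}{j}\) is the concatenation of geodesics between consecutive extreme points of \(Q_1\), we have \(\subchainleft{i}{j}\subset D_r(c_1)\) if and only if \(c_1\in\dintersection_1\); symmetrically for \(c_2\). As \(Q=Q_1\cup Q_2\cup Q_I\) and \(Q_1\subseteq D_r(c_1)\), \(Q_2\subseteq D_r(c_2)\) hold automatically, a pair \((c_1,c_2)\) is an \((i,j,r)\)-restricted two-center exactly when \(c_1\in\dintersection_1\), \(c_2\in\dintersection_2\), and \(Q_I\subseteq D_r(c_1)\cup D_r(c_2)\). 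Since \(r\geq\radrestricted{i}{j}\) such a pair exists, and \(\dintersection_1,\dintersection_2\neq\emptyset\). Moreover, a point \(c_1\in\dintersection_1\) lies on a boundary arc of \(\dintersection_1\), i.e.\ \(c_1\in\boundaryarc_1\), precisely when some \(Q_1\)-constraint is tight, that is \(g_1(c_1):=\max_{q\in Q_1}d(c_1,q)=r\) (and symmetrically \(g_2(c_2)=r\) for \(\boundaryarc_2\)). So the goal reduces to exhibiting a feasible pair with \(g_1(c_1)=g_2(c_2)=r\).

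I would obtain this by a pushing argument. Start from any feasible \((c_1,c_2)\) and suppose \(c_1\notin\boundaryarc_1\), so \(g_1(c_1)<r\) and every \(Q_1\)-disk constraint is slack at \(c_1\); hence \(c_1\) lies in the relative interior of \(\dintersection_1\) (after first nudging \(c_1\) off \(\bd\CHPQ\) into the interior of the polygon if it happens to sit on a wall part of \(\bd\dintersection_1\), which only relaxes the covering constraints). I then move \(c_1\) continuously in a direction that increases \(g_1\) — intuitively away from its currently farthest \(Q_1\)-point — stopping at the first instant a \(Q_1\)-point reaches distance exactly \(r\). Because all disk constraints were slack, such a feasible direction exists and the motion remains inside \(\dintersection_1\subseteq\CHPQ\) until that instant, at which point \(g_1(c_1)=r\), i.e.\ \(c_1\in\boundaryarc_1\). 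Applying the symmetric push to \(c_2\) then places it on \(\boundaryarc_2\).

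The step that actually carries the proof — and the main obstacle — is maintaining feasibility of \(Q_I\) throughout the push, since as \(c_1\) moves some \(q\in Q_I\) may leave \(D_r(c_1)\) and must be recaptured by \(D_r(c_2)\). Here I would co-move \(c_2\) within \(\dintersection_2\) and control the exchange using the pseudo-disk property of Lemma~\ref{lem:pseudo}: as \(D_r(c_1)\) and \(D_r(c_2)\) have equal radius, their boundaries cross at most twice, so the region shed by \(D_r(c_1)\) during the push is a single connected arc-region lying on the side of the dropped points, and the arcs of each disk along \(\bd(D_r(c_1)\cup D_r(c_2))\) stay consecutive. I would use the hard-case structure proved just before the lemma — namely \(D_r(c_1)\cap\{v_i,v_{i+1},v_j,v_{j+1}\}=\{v_i,v_{j+1}\}\) and \(D_r(c_2)\cap\{\dots\}=\{v_{i+1},v_j\}\) — to pin the two crossings of \(\bd D_r(c_1)\) with \(\bd\CHPQ\) inside the two gaps, thereby forcing the shed region to lie on the \(c_2\)-side of the configuration. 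The remaining work is to show that the dropped set \(U\subseteq Q_I\) stays jointly coverable, i.e.\ \(\dintersection_2\cap\bigcap_{q\in U}D_r(q)\neq\emptyset\) is preserved along the motion; I expect to verify this by exploiting the geodesic convexity of \(\dintersection_2\) together with the consecutiveness from Lemma~\ref{lem:pseudo}, and, if a clean tandem motion proves awkward, to recast the whole argument as maximizing \(g_1\) over the (compact) set of feasible configurations and deriving a contradiction from any maximizer with \(g_1<r\) by producing such a feasibility-preserving improving move.
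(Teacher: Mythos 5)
Your reduction of the lemma to ``exhibit a feasible pair with a tight $Q_1$-constraint and a tight $Q_2$-constraint'' is fine, and your diagnosis of the obstacle is accurate --- but the proposal stops exactly at that obstacle and never overcomes it. The step ``push $c_1$ until some $Q_1$-disk constraint becomes tight while co-moving $c_2$ so that every $q\in Q_I$ shed by $D_r(c_1)$ is recaptured by $D_r(c_2)$'' \emph{is} the entire content of the lemma, and neither the consecutiveness from Lemma~\ref{lem:pseudo} nor geodesic convexity of $\dintersection_2$ delivers it. Concretely: the set of positions of $c_1$ that keep a fixed point $q\in Q_I\setminus D_r(c_2)$ covered is $\dintersection_1\cap D_r(q)$, and this can be nonempty while $D_r(q)\cap\boundaryarc_1=\emptyset$ (the intersection then meets $\bd\dintersection_1$ only along pieces of $\bd P$); the paper itself allows exactly this configuration when it discusses the case $M_1=\emptyset$. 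In that situation no push of $c_1$ with $c_2$ held fixed can reach $\boundaryarc_1$, so $c_2$ must move to take over $q$ --- but $c_2$ has its own $Q_I$-points to hold, and you face a circular exchange problem that your sketch does not resolve. Your fallback (maximize $g_1$ over the compact feasible set and contradict a maximizer with $g_1<r$) has the identical gap: the contradiction requires producing a feasibility-preserving improving move, which is precisely the unproved step.

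The paper closes this gap with a different variational principle: among all $(i,j,r)$-restricted two-centers it fixes one minimizing $d(c_1,c_2)$, calls a point $q$ with $d(c_1,q)=r$ and $d(c_2,q)>r$ a \emph{determinator} of $D_r(c_1)$, and shows that for the minimizing pair some determinator lies in $Q_1$ (which immediately puts $c_1$ on $\boundaryarc_1$). That argument is not soft: it partitions $\CHPQ$ into three regions by the curve $\pi(c_1,w_i)\cup\pi(c_1,w_j)$ and the geodesic $\pi(v_{i+1},v_j)$, eliminates determinators region by region, and in the hard region needs quantitative geometry --- an angle of at least $\pi$ between the first segments of the paths to the extreme determinators (forced by the minimality of $d(c_1,c_2)$), chords perpendicular to last path segments, and the estimate $d(v_{i+1},v_j)>2r$, which contradicts $v_{i+1},v_j\in D_r(c_2)$. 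None of this machinery, nor any substitute for it, appears in your sketch; as written, the proposal is a plan whose decisive step is still open.
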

\begin{proof}
  For a fixed $r \geq \radrestricted{i}{j}$, there is at least one
  $(i,j,r)$-restricted two-center.
  Among all such two-centers, let $(c_1,c_2)$ be the one with minimum
  geodesic distance $d(c_1,c_2)$. Recall that $c_1$ and $c_2$ are distinct.
  A point $q \in Q$ with $d(c_1,q)=r$ and $d(c_2,q) > r$
  is called a \emph{determinator} of $D_r(c_1)$.
  We claim that there exists a determinator of $D_r(c_1)$ lying in $Q_1$,
  which implies that $c_1\in \boundaryarc_1$.
  % proves the lemma.

\begin{figure}[ht]
  \begin{center}
    \includegraphics[width=0.9\textwidth]{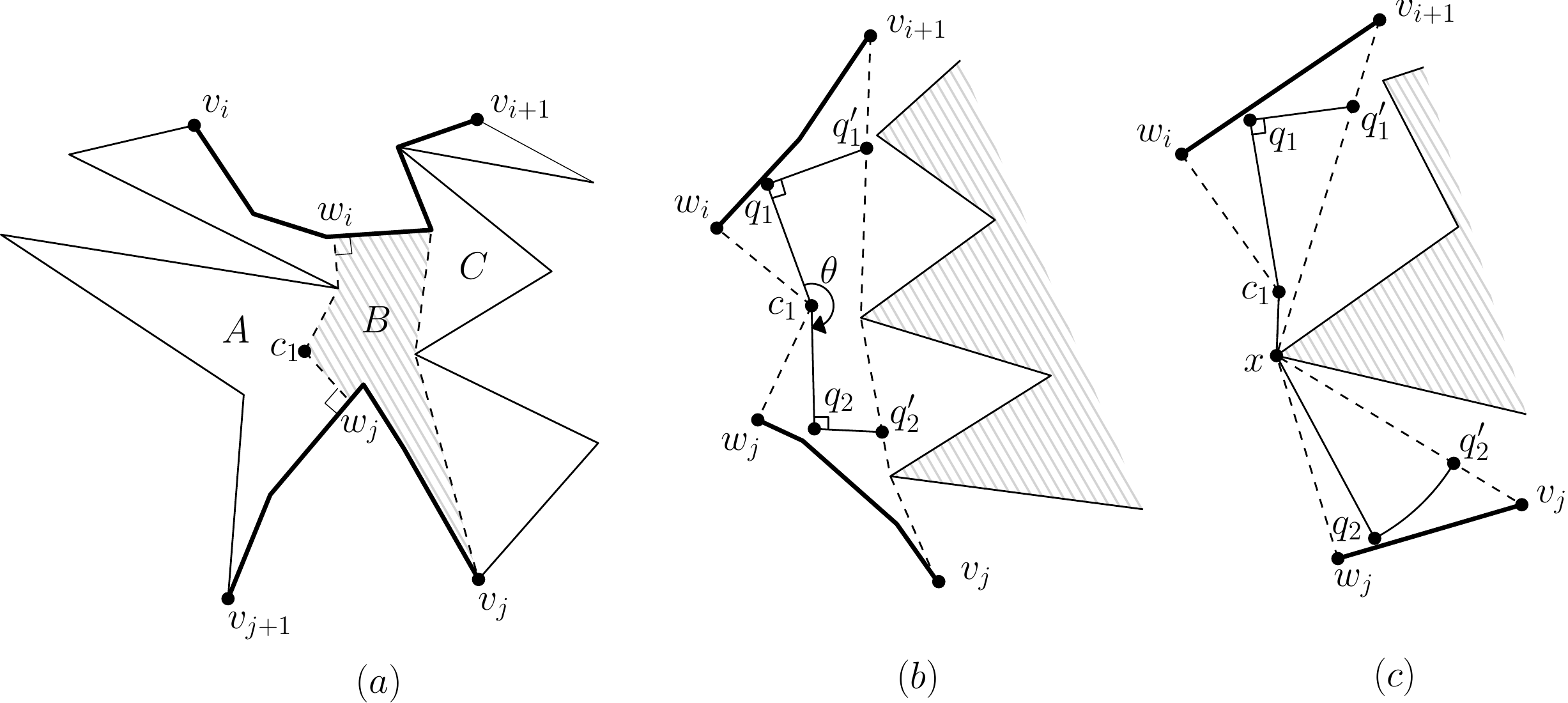}
    \caption{\small (a) The region $A$ is the set of points bounded by $\pi(c_1,w_i)$,
    	$\pi(c_1,w_j)$, and $\subchain{w_j}{w_i}$. The region $C$ is the subpolygon
    	$\subpolygon{v_{i+1}}{v_j}$. The set of points which are not contained either $A$ or $C$
    	is the region $B$.
    	(b) $\pi(c_1,w_i) \cup \pi(c_1,v_{i+1}) =\{c_1\}$ and
    	  $\pi(c_1,w_j) \cap \pi(c_1,v_{j})=\{c_1\}$.
    	(c)
    	$\pi(c_1,w_i) \cup \pi(c_1,v_{i+1}) =\{c_1\}$, but
    	$\pi(c_1,w_j) \cap \pi(c_1,v_{j}) \neq \{c_1\}$.
    \label{fig:boundary_proof}}
  \end{center}
\end{figure}

  Assume to the contrary that no determinators of $D_r(c_1)$ lie in $Q_1$.
  Since $d(c_1,c_2)\leq d(c,c')$ for all $(i,j,r)$-restricted two-centers $(c,c')$,
  there exists at least one determinator of
  $D_r(c_1)$. We subdivide $\CHPQ$ by the following two curves:
  the concatenation of $\pi(c_1,w_i)$ and $\pi(c_1,w_j)$, and
  $\pi(v_{i+1},v_j)$, where $w_i$ and $w_j$ are the points on $\pi(v_i,v_{i+1})$
  and $\pi(v_j,v_{j+1})$ closest to $c_1$, respectively. See
  \figurename~\ref{fig:boundary_proof}(a). These two curves may share some points
  but do not cross.
  Let $A, B,$ and $C$ be the regions of $\CHPQ$ subdivided by the curves,
  as shown in the figure.

  Since every point in $Q$ lying in $C$ must be at distance at most
  $r$ from $c_2$, there is no determinator of $D_r(c_1)$ in $C$.
  We claim that there is no determinator of $D_r(c_1)$ in $A$. Recall that we already assume that
  no determinator of $D_r(c_1)$ lies in $Q_1$.
  If there is a determinator $q$ in $A\setminus Q_1$,
  we show that there is an extreme
  point $q'$ in $Q_1$ such that $d(c_1,q')>r$. Let $x$ be a point on
  $\bd\CHPQ$ from $w_j$ to $w_i$ in clockwise order satisfying
  $q\in\pi(c_1,x)$. Clearly, $x$ lies on one of the following three
  geodesic paths:
  $\pi(v_k,v_{k+1})$ for $v_k,v_{k+1}\in Q_1$, $\pi(v_i,w_i)$, and
  $\pi(w_j,v_{j+1})$. In any case, one endpoint of the geodesic path containing
  $x$ is an extreme point $q'$ of $Q_1$ such that $d(c_1,q')>r$.

  Therefore, the only remaining possibility is that all determinators of
  $D_r(c_1)$ lie in $B$. Consider first the case that there are at least two
  determinators of
  $D_r(c_1)$ in $B$. Let $q_1$ and $q_2$ be the first and last determinators of
  $D_r(c_1)$ that $\pi(c_1,x)$ hits while the point $x$ moves from $w_i$ to
  $w_{j}$ along $\bd\CHPQ$ in clockwise order. Let $\theta$ be
  the clockwise angle from the first segment of $\pi(c_1,q_1)$
  to the first segment of $\pi(c_1,q_2)$ as shown in
  \figurename~\ref{fig:boundary_proof}(b). We claim that there is no pair $(x,y)$
  of points on $\bd\CHPQ$ such that $c_1\in\pi(x,y)$, and
  both $q_1$ and $q_2$ are contained in the region bounded by $\pi(x,y)$
  containing $c_2$. If there is such a pair $(x,y)$, then we can always move
  $c_1$ to $c'_1$ in the direction orthogonal to $\pi(x,y)$ at $c_1$
  infinitesimally such that $D_r(c'_1)$ still contains $Q\cap D_r(c_1)$
  and $d(c'_1,c_2)<d(c_1,c_2)$.
  Therefore we have $\theta\geq\pi$.

  Now we show that $d(v_{i+1},v_j)>2r$, which implies that $d(c_2,v_{i+1})>r$
  or $d(c_2,v_j)>r$, a contradiction.
  There are three possible subcases depending on whether the conditions
  (A) $\pi(c_1,w_i) \cap \pi(c_1,v_{i+1})=\{c_1\}$ and
  (B) $\pi(c_1,w_j) \cap \pi(c_1,v_{j})=\{c_1\}$ hold:
  (1) both (A) and (B) hold,
  (2) either (A) or (B) holds,
  (3) neither (A) nor (B) holds.
  The last case cannot occur since $\theta\geq\pi$ and $c_1$ lies in $A\cup B$
    but not in $\pi(v_{i+1}, v_j)$.
    Thus we show that $d(v_{i+1},v_j)>2r$ for subcases (1) and (2).

  Consider subcase (1).
  The chord passing through $q_1$ and perpendicular to the last segment of $\pi(c_1,q_1)$
  intersects $\pi(v_{i+1},v_j)$. We denote the intersection point
  by $q'_1$. See \figurename~\ref{fig:boundary_proof}(b).
  Similarly, we denote by $q'_2$ the intersection point of $\pi(v_{i+1},v_j)$
  with the chord passing through $q_2$ and perpendicular to the last segment
  of $\pi(c_1,q_2)$.
  Since $\theta\geq\pi$,
  the Euclidean distance between $q_1'$ and $q_2'$ is at least $2r$.
  Therefore, $d(v_{i+1},v_j)=d(v_{i+1},q_1')+d(q_1',q_2')+d(q_2',v_j)>2r$.

  Consider subcase (2).
  Without loss of generality, assume that
  condition (A) holds, but (B) does not hold.
  See \figurename~\ref{fig:boundary_proof}(c).
  Let $x$ be the vertex of $\pi(c_1,v_j)$ next to $c_1$.
  Since (B) does not hold, $\pi(c_1,w_j)$ contains $x$.
  The concatenation of $\pi(v_{i+1},x)$ and $\pi(x,v_j)$ is $\pi(v_{i+1},v_j)$,
  and the concatenation of $\pi(c_1,x)$ and $\pi(x,q_2)$ is $\pi(c_1,q_2)$.
  Moreover, there is the point $q_2'$ in $\pi(x,v_j)$ with
  $d(q_2,x)=d(q_2',x)$ because $d(c_1,v_j)>r=d(c_1,q_2)$.
  Again, let $q_1'$ be the intersection point of $\pi(v_j,v_{i+1})$ and
  the chord passing through $q_1$ and perpendicular to $\pi(c_1,q_1)$.
  The Euclidean distance between $x$ and $q_1'$ is strictly greater than
  $d(x,c_1)+d(c_1,q_1)$ and
  therefore $d(v_{i+1},v_j)=d(v_{i+1},q_1')+d(q_1',x)+d(x,q_2')+d(q_2',v_j)>2r$.

  Now consider the case that there is exactly one determinator $q$ of $D_r(c_1)$
  in $B$.
  Then $c_1$ lies in $\pi(q,c_2)$. Otherwise, we can always reduce both
  $d(c_1,c_2)$ and $d(q,c_1)$
  by moving $c_1$ slightly.
  There is a point $q' \in \pi(c_2,v_{i+1})\cup \pi(c_2,v_j)$ with
  $d(c_2,q)=d(c_2,q')$. Then we have either $d(c_2,v_{i+1})>d(c_2,q)>r$
  or $d(c_2,v_j)>d(c_2,q)>r$, a contradiction.

  In conclusion, if we choose an $(i,j,r)$-restricted two-center $(c_1,c_2)$
  with minimum geodesic distance, at least one of the determinators of
  $D_r(c_1)$ lies in $Q_1$ and at least one of the determinators of
  $D_r(c_2)$ lies in $Q_2$. This implies that
  $c_1 \in \boundaryarc_1$ and $c_2 \in \boundaryarc_2$.
\end{proof}

Let $t \in \{1, 2\}$.
Since the boundary of $\dintersection_t$ is a simple, closed curve,
the clockwise and the counterclockwise directions along $\bd \dintersection_t$
are naturally induced.
We consider the intersection of $D_r(q)$ with $\boundaryarc_t$ 
for each $q\in Q_I$.
Since $\dintersection_t \cap D_r(q)$ is also the intersection of
geodesic disks of the same radius $r$,
by the pseudo-disk property stated in Lemma~\ref{lem:pseudo}, 
the boundary arcs of $\bd D_r(q)$ along $\bd (\dintersection_t \cap D_r(q))$
appears to be consecutive.
This also implies that the union $\boundaryarc_t$ of boundary arcs  
along $\bd \dintersection_t$ 
is divided into two parts, $\boundaryarc_t \cap D_r(q)$ and the rest,
and the arcs contained in each part appear to be consecutive.

We then pick two points $x$ and $x'$ from $\boundaryarc_t \cap D_r(q)$
as follows:
let $x$ and $x'$ be the first points in $\boundaryarc_t \cap D_r(q)$
that we meet while traversing $\bd \dintersection_t$ in clockwise 
and counterclockwise directions, respectively,
from any point $y \in \boundaryarc_t \setminus D_r(q)$.
By Lemma~\ref{lem:pseudo}, the two points $x$ and $x'$
are uniquely defined, regardless of the choice of $y$,
unless $\boundaryarc_t \subseteq D_r(q)$ or 
$\boundaryarc_t \cap D_r(q) = \emptyset$.
We call $x$ and $x'$ the \emph{events} of $q$ 
on $\boundaryarc_t$.
We also associate each event with its defining point $\defpoint{\cdot}$
and a Boolean value $\inout{\cdot} \in \{\iin, \out\}$ as follows: 
$\defpoint{x} = \defpoint{x'} = q$,
$\inout{x} = \iin$, and $\inout{x'} = \out$.
Note that for any $z \in \boundaryarc_t \cap D_r(q)$ with $z \notin \{x, x'\}$,
we meet $x$, $z$, and $x'$ in this order 
during a traversal along $\bd \dintersection_t$ in clockwise direction.
%More specifically, if there are two events $x, x'$ of $q$ on $\boundaryarc_t$
%and, for every $y \in D_r(q) \cap \boundaryarc_t$ with $y \notin \{x, x'\}$,
%we meet $x$, $y$, and $x'$ in this order 
%during a traversal along $\bd \dintersection_t$ in clockwise direction,
%then $\inout{x} = \iin$ and $\inout{x'} = \out$;
%if there is a unique event $x$ of $q$ on $\boundaryarc_t$,
%then $D_r(q)$ is tangent to $\dintersection_t$ at $x$,
%so we regard $x$ as two distinct events $x$ and $x'$ at a common position
%with $\inout{x} = \iin$ and $\inout{x'} = \out$.
%Note that if $\boundaryarc_t \subseteq D_r(q)$ or $\boundaryarc_t \cap D_r(q) = \emptyset$,
%then no event of $q$ is defined.

Let $M_t$ be the set of events of all $q\in Q_I$ on $\boundaryarc_t$.
Clearly, the number of events is $|M_t| = O(m)$, and they can be computed as follows:
\begin{lemma}
  \label{lem:time_decision_marking}
  The sets $\boundaryarc_1$, $\boundaryarc_2$, $M_1$, and $M_2$ can be computed
  in $O((m+n) \log^2 (m+n))$ time using $O(m+n)$ space.
\end{lemma}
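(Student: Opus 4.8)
The plan is to first build the geometric structures describing each $\dintersection_t$ and then to extract the events of the points of $Q_I$ by binary search. First I would compute the geodesic farthest-point Voronoi diagram $\fvd{Q_t}$ of $Q_t$ for each $t\in\{1,2\}$, which takes $O((m+n)\log(m+n))$ time and $O(m+n)$ space. As already observed before the lemma, from $\fvd{Q_t}$ the intersection $\dintersection_t=\bigcap_{q\in Q_t}D_r(q)$ can be obtained in $O(m+n)$ time, and reading off the circular-arc pieces along $\bd\dintersection_t$ yields $\boundaryarc_t$ in the same time. Since $\boundaryarc_t$ has $O(m+n)$ arcs, this stage stays within the claimed bound.

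For the events, I would additionally preprocess $\CHPQ$ (a weakly simple polygon with $O(m+n)$ vertices) into a two-point geodesic-distance data structure so that $d(q,z)$ for any $q\in Q_I$ and any point $z$ on $\boundaryarc_t$ can be evaluated in $O(\log(m+n))$ time; building such a structure takes $O((m+n)\log(m+n))$ time and $O(m+n)$ space. Then, for each of the $O(m)$ points $q\in Q_I$, I would locate its two events as follows. Because $\dintersection_t$ is an intersection of geodesically convex sets it is itself geodesically convex, so the restriction of $d(q,\cdot)$ to the simple closed curve $\bd\dintersection_t$ attains a single minimum and a single maximum as one traverses it; equivalently every sublevel set along $\bd\dintersection_t$ is a single arc. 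Using $O(\log(m+n))$ distance queries I would first locate the minimizing point $z_{\min}$ and the maximizing point $z_{\max}$. If $d(q,z_{\min})>r$ then $\boundaryarc_t\cap D_r(q)=\emptyset$, and if $d(q,z_{\max})\le r$ then $\boundaryarc_t\subseteq D_r(q)$; in either case $q$ contributes no event. Otherwise $z_{\min}$ lies inside $D_r(q)$ and $z_{\max}$ outside, so each of the two arcs of $\bd\dintersection_t$ bounded by $z_{\min}$ and $z_{\max}$ carries exactly one transition between inside and outside, and a binary search on each arc pins it down with $O(\log(m+n))$ further queries. The two transition points are the events $x$ and $x'$; I would set $\defpoint{x}=\defpoint{x'}=q$ and assign $\inout{\cdot}\in\{\iin,\out\}$ according to whether the transition is outside-to-inside or inside-to-outside along the clockwise traversal of $\bd\dintersection_t$. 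Each $q$ is thus handled in $O(\log^2(m+n))$ time, for a total of $O(m\log^2(m+n))$, and summing with the $O((m+n)\log(m+n))$ spent on the diagrams and the distance oracle gives the claimed $O((m+n)\log^2(m+n))$ bound within $O(m+n)$ space.

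The step I expect to be the main obstacle is justifying the binary searches, namely that $d(q,\cdot)$ rises and falls exactly once (is ``bimodal'') along $\bd\dintersection_t$, so that both the min/max search and the level-$r$ crossing searches are correct. Lemma~\ref{lem:pseudo} already gives that $\boundaryarc_t\cap D_r(q)$ is a single contiguous run of arcs, since its defining disks all share the radius $r$; but the searches additionally rely on the analogous contiguity of the sublevel sets at radii $\rho<r$. I would derive this from the geodesic convexity of $\dintersection_t$ together with the fact that two geodesic disks cross at most twice, which forces each level curve $\{z\in\bd\dintersection_t : d(q,z)=\rho\}$ to consist of at most two points. A secondary technical point is that the reported events must lie on $\boundaryarc_t$ rather than on the polygon-boundary portions of $\bd\dintersection_t$; this is handled by restricting attention to the arc pieces of $\boundaryarc_t$ in the cyclic order along $\bd\dintersection_t$, which is exactly the order in which the events of $q$ are defined in the text preceding Lemma~\ref{lem:time_decision_marking}.
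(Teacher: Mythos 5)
Your first stage (computing $\fvd{Q_t}$ and extracting $\dintersection_t$ and $\boundaryarc_t$) matches the paper, but the gap sits exactly where you predicted: the bimodality claim is false, and your proposed derivation of it does not go through. You need every sublevel set $\{z\in\bd\dintersection_t : d(q,z)\le\rho\}$ to be a single arc for \emph{all} levels $\rho$, not just $\rho=r$. Lemma~\ref{lem:pseudo} gives contiguity only at level $r$, because it applies exclusively to geodesic disks of a common radius; for $\rho<r$ the disks $D_\rho(q)$ and $D_r(q')$, $q'\in Q_t$, are not such a family, and ``two geodesic disks cross at most twice'' is not available (geodesic convexity of $\dintersection_t$ is no substitute either: two convex sets in the plane can have boundaries crossing arbitrarily many times, e.g.\ a disk and an inscribed triangle). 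Indeed the claim already fails in the convex Euclidean situation: take $Q_t=\{q_1,q_2\}$ with $|q_1q_2|$ close to $2r$, so that $\dintersection_t$ is a long thin lens with tips at $(\pm L,0)$ and top and bottom points $(0,\pm c)$, $c\ll L$, and let $q=(0,H)$ lie above the lens with $L^2>2Hc+c^2$. Then $d(q,\cdot)$ along $\bd\dintersection_t$ has local minima at $(0,\pm c)$ and local maxima at \emph{both} tips, and for every $\rho$ with $H+c<\rho<\sqrt{L^2+H^2}$ the circle $\bd D_\rho(q)$ crosses $\bd\dintersection_t$ four times. Consequently your very first step collapses: no $O(\log (m+n))$-query binary or ternary search can locate the global maximum (or certify the minimum) of a function with two separated local maxima, so $z_{\min}$ and $z_{\max}$ cannot be found this way. (Your subsequent bisections for the two level-$r$ transitions would be fine once you had a point inside $D_r(q)$ and a point outside, since at level exactly $r$ Lemma~\ref{lem:pseudo} guarantees there are only two transitions.)

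The paper avoids this trap by never searching for extrema of $d(q,\cdot)$ along the curve. It builds the geodesic nearest-point and farthest-point Voronoi diagrams $\gvd{\boundaryarc_t}$ and $\fvd{\boundaryarc_t}$ of the \emph{finite set of arc endpoints} of $\boundaryarc_t$, answers two point-location queries to get the closest and the farthest arc endpoint from $q$, and then uses Lemma~\ref{lem:pseudo} to extract, from the at most four arcs incident to these endpoints, the needed points $x_c\in D_r(q)$ and $x_f\notin D_r(q)$; only after that does it run the binary searches you describe. So the fix is to replace your extremum search by these two Voronoi-diagram queries. A second, more minor, omission: a bisection ``pinning down'' a transition point never terminates with exact coordinates. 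To output the event $x$ exactly one must identify the vertex $\beta$ anchoring the last segment of $\pi(q,x)$; the paper does this with an extra $O(\log^2 n)$ search over the subdivision of the arc induced by extending the edges of $\pi(q,\alpha_1)\cup\pi(q,\alpha_2)$, and only then computes $x$ in closed form.
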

\begin{proof}
We will make use of known geometric structures, namely,
geodesic (nearest-point) Voronoi diagrams and geodesic farthest-point Voronoi diagrams,
which are counterparts of the standard Voronoi diagrams and farthest-point Voronoi diagrams
with respect to the geodesic distance~\cite{VD, FVD}.
For a set $S$ of $N$ point sites in $P$, both
the geodesic Voronoi diagram $\gvd{S}$ and the geodesic farthest-point Voronoi diagram
$\fvd{S}$ can be computed in $O((n+N) \log (n+N))$ time using $O(n+N)$ space.
They support an $O(\log (n+N))$-time closest- or farthest-site query with respect to
the geodesic distance for any query point in $P$~\cite{VD, FVD}.

We first compute the intersections $\dintersection_1$ and $\dintersection_2$ of geodesic disks
in $O((m+n) \log (m+n))$ time by using the geodesic farthest-point Voronoi diagrams
$\fvd{Q_1}$ and $\fvd{Q_2}$ of $Q_1$ and $Q_2$, respectively.
A cell in $\fvd{Q_1}$ corresponding to a site $t$
consists of the points $p \in \CHPQ$ such that $t$ is the site
farthest from $p$ among all sites.  A \emph{refined} cell in $\fvd{Q_1}$
with site $t$ is obtained by further subdividing the cell of site $t$
such that all points in the same refined cell have the
combinatorially equivalent shortest paths from their farthest point $t$.
While constructing
$\fvd{Q_1}$ and $\fvd{Q_2}$, for each refined cell,
we store the information about the site $t$ of the refined cell
and the last vertex of
$\pi(t,p)$ for a point $p$ in the refined cell.

We first show that the number of a boundary arc of $\dintersection_1$ is $O(m+n)$.
Let $s$ be a boundary arc of $\dintersection_1$. The center
$c_s$ of the geodesic disk containing $s$ on its boundary lies in
$\subchain{v_{j+1}}{v_i}$.  Note that $c_s$ is unique by the general
  position assumption.
Every geodesic disk whose center is a vertex
in $\subchain{v_{j+1}}{v_i} \setminus \{c_s\}$ contains $s$ in its
interior.  This means that, for any point $x \in s$, the farthest
point from $x$ in $\subchain{v_{j+1}}{v_i}$ is $c_s$.  Moreover, the
geodesic paths from the center $c_s$ to points on the boundary arc
$s$ are combinatorially equivalent.  Thus each boundary arc $s$ is
contained in the refined cell of the farthest-point geodesic Voronoi
diagram whose site is $c_s$.  Moreover, each endpoint of the
boundary arc lies in either the boundary of the cell containing it
or the boundary of $P$.	
The number of boundary arcs at least one of whose endpoints lies on
$\partial P$ is $O(n)$ and the number of boundary arcs none of whose
endpoints lies on $\partial P$ is $O(m+n)$ by the fact that the size
of the geodesic farthest-point Voronoi diagram is $O(m+n)$.

The sets $\boundaryarc_1$ and $\boundaryarc_2$ are subsets of $\bd\dintersection_1$ and $\bd\dintersection_2$,
respectively, so can be extracted in $O(m+n)$ time.
Also, note that $\boundaryarc_1$ and $\boundaryarc_2$ consist of $O(m+n)$ arcs.

In order to compute the sets $M_1$ and $M_2$ of events on $\boundaryarc_1$
and $\boundaryarc_2$, we compute the geodesic nearest-site
and the geodesic farthest-point Voronoi diagrams of the endpoints of the arcs
in $\boundaryarc_t$ for each $t=1,2$.
By an abuse of notation, we denote these diagrams by $\gvd{\boundaryarc_t}$
and $\fvd{\boundaryarc_t}$, respectively.
Since $\boundaryarc_t$ consists of $O(m)$ boundary arcs of $\dintersection_t$,
the diagrams $\gvd{\boundaryarc_t}$
and $\fvd{\boundaryarc_t}$ can be constructed in $O((m+n) \log (m+n))$ time,
and support an $O(\log (m+n))$-time closest- or farthest-site query.

Consider a fixed $q\in Q_I$.
In the following, we show that
$\bd D_r(q) \cap \boundaryarc_t$ can be computed
in $O(\log (m+n) \log n)$ time, using $\gvd{\boundaryarc_t}$,
and $\fvd{\boundaryarc_t}$.
Thus, the total time spent in computing $M_t$ is bounded by $O(m \log(m+n)\log n)$,
and the lemma follows.

By Lemma~\ref{lem:pseudo}, the size of $\boundaryarc_t \cap \bd
  D_r(q)$ is at most two.
  Assume that $\boundaryarc_t \cap \bd D_r(q)$ is not
  empty, and let $x$ and $x'$ be two points in the set. We compute the arcs $a$ and $a'$ of $A_t$
  containing $x$ and $x'$, respectively, by applying a binary search on the endpoints of the  arcs of $A_t$.
  To do this, we need two points $x_c$ and $x_f$ in $A_t$ with $x_c\in D_r(q)$ and $x_f\notin D_r(q)$.
  We can compute $x_c$ and $x_f$ as follows. First, we find the farthest and closest endpoint of the  arcs of $A_t$ from $q$ using $\fvd{A_t}$ and $\gvd{A_t}$ in $O(\log (n+m))$ time.
  Then we consider the four  arcs of $A_t$ incident to such endpoints. Due to Lemma~\ref{lem:pseudo},
  we can find two points $x_c \in D_r(q)$ and $x_f \in D_r(q)$ in the four  arcs.
%  They are the same
%  if $|A_1\cap D_q(r)|=1$.
Without loss of generality, we assume that
$x$ comes before $x'$ as we traverse $\boundaryarc_t$ from $x_c$
in clockwise order.
Note that   every point of $\boundaryarc_t$ from $x_c$ to $x$ in clockwise order
is contained in $D_r(q)$, while
every point of $\boundaryarc_t$ from $x$ to $x_f$ is not
contained in $D_r(q)$.

Exploiting this property, we apply a binary search on the endpoints of $\boundaryarc_t$
to find the  arc $a \in \boundaryarc_t$ containing $x$.
Let $x_{\mathrm{med}}$ be the median of the endpoints of $\boundaryarc_t$
  from $x_c$ to $x_f$ in clockwise order.
  If $d(x_{\mathrm{med}},q) > r$, then $x$ lies between $x_c$
  and $x_{\mathrm{med}}$. Otherwise,
  $x$ lies between $x_{\mathrm{med}}$ and $x_f$.
  Thus, in $O(\log(m+n))$ iterations, we find the  arc $a$ containing $x$.
  In each iteration, we compute the geodesic distance of two points,
  which takes $O(\log n)$ time.
  In total, the  arc $a$ containing $x$ can be found in
  $O(\log n \log(m+n))$ time.
  Similarly, we can compute the  arc $a'$ containing $x'$.

  Now, we find the exact location of $x$ on the  arc $a$.
  Let $\alpha_1$ and $\alpha_2$ be the endpoints of $a$.
  We can compute the point $\alpha$ such that the maximal common path of $\pi(q,\alpha_1)$ and $\pi(q,\alpha_2)$
  is $\pi(q,\alpha)$ in $O(\log n)$ time using the data structure of size $O(n)$ given by Guibas and Hershberger~\cite{shortest-path}. For any point $p\in a$, the path $\pi(q,a)$ is the composition of $\pi(q,\alpha)$, $\pi(\alpha,\beta)$ and the line segment $\overline{\beta p}$ for some vertex $\beta$
  of $\pi(\alpha,\alpha_1)\cup\pi(\alpha,\alpha_2)$. Once we obtain $\beta$, we can compute $x$ in constant time.
  To obtain $\beta$, we apply a binary search on the vertices of $\pi(\alpha,\alpha_1)\cup\pi(\alpha,\alpha_2)$
  as we did for computing $a$. Specifically, imagine that we extend all edges of $\pi(\alpha,\alpha_1)\cup\pi(\alpha,\alpha_2)$ towards $a$. The extensions subdivide $a$ into $O(n)$ smaller arcs.
  We apply a binary search on the endpoints of the smaller arcs in $O(\log^2 n)$ time as we did before.
  Then we obtain the smaller arc containing $x$, and thus we obtain $\beta$ in $O(\log^2 n)$ time.

  In this way, we obtain $x$ and $x'$ in $O(\log(n+m)\log n)$ time for each point in $Q$.
  Therefore, the sets $M_1$ and $M_2$ of events on $\boundaryarc_1$
  and $\boundaryarc_2$ can be computed in $O(m\log(n+m)\log n)$ time in total.
    The space used above is also bounded by $O(m+n)$.
\end{proof}

The sets $M_1$ and $M_2$ of events indeed play an important role for our decision algorithm.
\begin{lemma}
	\label{lem:event}
    Suppose that both $M_1$ and $M_2$ are nonempty.
	Then there is an $(i,j,r)$-restricted two-center $(c_1,c_2)$ such that
	$c_1 \in M_1$ and $c_2 \in M_2$, if $r \geq \radrestricted{i}{j}$.
\end{lemma}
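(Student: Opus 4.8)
The plan is to start from the feasible center pair guaranteed by Lemma~\ref{lem:extreme_determinator} and then relocate the two centers, one at a time, so that each becomes tight against an interior point, i.e.\ lands in $M_t$. Since $r \geq \radrestricted{i}{j}$, Lemma~\ref{lem:extreme_determinator} gives an $(i,j,r)$-restricted two-center $(c_1,c_2)$ with $c_1 \in \boundaryarc_1$ and $c_2 \in \boundaryarc_2$. First I would slide $c_1$ along $\boundaryarc_1$, keeping $c_2$ fixed, while maintaining the three defining conditions $\subchainleft{i}{j} \subset D_r(c_1)$, $\subchainright{i}{j} \subset D_r(c_2)$, and $Q \subset D_r(c_1) \cup D_r(c_2)$. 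Moving within $\boundaryarc_1 \subseteq \bd\dintersection_1$ automatically preserves $Q_1 \subset D_r(c_1)$, and $c_2$ is untouched, so only the coverage of the interior points $Q_I$ must be watched. Once $c_1$ reaches an event, I fix it and slide $c_2$ along $\boundaryarc_2$ in exactly the symmetric way; because $c_1$ no longer moves it remains an event, so the two phases do not interfere and the resulting pair satisfies $c_1 \in M_1$ and $c_2 \in M_2$.

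The heart of the first phase is the feasible locus $F_1 := \boundaryarc_1 \cap \bigcap_{q \in R} D_r(q)$ for $c_1$, where $R := Q_I \setminus D_r(c_2)$ is the set of interior points that $c_2$ fails to cover and that $c_1$ is therefore obliged to cover. Every point of $F_1$ yields a feasible pair together with the fixed $c_2$, and $c_1 \in F_1$ to begin with. By the pseudo-disk property (Lemma~\ref{lem:pseudo}), each set $\boundaryarc_1 \cap D_r(q)$ is a single contiguous portion of $\boundaryarc_1$ whose two endpoints are precisely the events $x$ (with $\inout{x} = \iin$) and $x'$ (with $\inout{x'} = \out$) of $q$. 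Hence $F_1$ is an intersection of contiguous sub-arcs of the cyclic set $\boundaryarc_1$, so starting from $c_1 \in F_1$ I can slide $c_1$ until it reaches the relative boundary of $F_1$, and each such boundary point is \emph{either} an endpoint of some binding portion $\boundaryarc_1 \cap D_r(q)$ (an event in $M_1$) \emph{or} an endpoint of $\boundaryarc_1$ itself. If instead $R = \emptyset$, or every $q \in R$ satisfies $\boundaryarc_1 \subseteq D_r(q)$, then $F_1 = \boundaryarc_1$ carries no interior constraint and I simply place $c_1$ at any event of $M_1$; this is exactly where the hypothesis $M_1 \neq \emptyset$ is used, and symmetrically $M_2 \neq \emptyset$ in the second phase.

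The main obstacle is to guarantee that the relocation stops at an event rather than running off the end of $\boundaryarc_1$: a priori a relative-boundary point of $F_1$ could be an endpoint of $\boundaryarc_1$ where it meets a polygon-boundary portion of $\bd\dintersection_1$, and such a point need not lie in $M_1$. To rule this out I would invoke the standing assumption of this section, namely that for every $(i,j,r)$-restricted two-center $\{v_i,v_{j+1}\} = D_r(c_1) \cap \{v_i,v_{i+1},v_j,v_{j+1}\}$ and $\{v_{i+1},v_j\} = D_r(c_2) \cap \{v_i,v_{i+1},v_j,v_{j+1}\}$. I expect that reaching such an endpoint would force $c_1$ to cover $v_{i+1}$ or $v_j$ (or symmetrically $c_2$ to cover $v_i$ or $v_{j+1}$) in \emph{both} disks, contradicting that assumption; consequently $F_1$ never extends to a polygon-boundary endpoint, the binding constraint is always an interior disk, and the slide terminates at an event. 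Making this endpoint analysis rigorous --- tracking how the tight $Q_1$-determinator of $c_1$ changes as $c_1$ moves along $\boundaryarc_1$ and relating the endpoints of $\boundaryarc_1$ to the four distinguished vertices --- is the delicate part; everything else is bookkeeping resting on Lemmas~\ref{lem:pseudo} and~\ref{lem:extreme_determinator}.
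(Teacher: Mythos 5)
Your overall strategy --- start from the pair given by Lemma~\ref{lem:extreme_determinator} and relocate each center onto an event using the pseudo-disk contiguity of Lemma~\ref{lem:pseudo} --- is the same as the paper's, but your execution has a genuine gap, and it sits exactly at the step you flag as ``delicate.'' Your slide stops at the relative boundary of $F_1$, so you must exclude the possibility that this stopping point is an endpoint of $\boundaryarc_1$ (a point where an arc of $\bd \dintersection_1$ ends on the polygon boundary) rather than an event. The contradiction you propose does not materialize: such an endpoint $z$ lies in $F_1$, so $(z,c_2)$ is itself an $(i,j,r)$-restricted two-center, and the standing assumption about $v_i,v_{i+1},v_j,v_{j+1}$ applies to $(z,c_2)$ without any conflict --- it merely asserts $D_r(z)\cap\{v_i,v_{i+1},v_j,v_{j+1}\}=\{v_i,v_{j+1}\}$. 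Nothing about $z$ lying on $\bd \CHPQ$ forces $v_{i+1}$ or $v_j$ into both disks; arc endpoints are where circles of geodesic radius $r$ centered at points of $Q_1$ reach the polygon boundary, which has no relation to the four distinguished vertices. So the endpoint case cannot be ruled out this way, and nothing in the setup prevents it from occurring.

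The missing idea is that the endpoint case need not be ruled out; it should be bypassed. Lemma~\ref{lem:pseudo} gives contiguity of $\boundaryarc_1\cap D_r(q)$ in the cyclic order along the whole closed curve $\bd \dintersection_1$, not merely within one connected run of arcs, and the two extremes of this contiguous set are precisely the two events of $q$, the \out event being the clockwise extreme. So, instead of sliding continuously until a constraint of $F_1$ becomes binding, let $x$ be the first event of $M_1$ (nonempty by hypothesis) encountered when traversing $\bd \dintersection_1$ clockwise from $c_1$, jumping over any polygon-boundary gaps and arc endpoints. For every $q\in Q_I$ with $c_1\in D_r(q)$ and $\boundaryarc_1\not\subseteq D_r(q)$, the traversal cannot leave $\boundaryarc_1\cap D_r(q)$ before reaching the \out event of $q$, which is itself an element of $M_1$; since $x$ is the first event met, $x$ lies at or before that point, hence $x\in D_r(q)$. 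Therefore $D_r(x)\cap Q_I\supseteq D_r(c_1)\cap Q_I$, and since $x\in\dintersection_1$ the chain $\subchainleft{i}{j}$ is still covered, so $(x,c_2)$ is an $(i,j,r)$-restricted two-center with $x\in M_1$; the symmetric step for $c_2$ finishes the proof. This is exactly what the paper does, and it is shorter than your version precisely because the first-event target is automatically feasible, so the boundary structure of $F_1$ --- and with it the endpoint problem --- never enters the argument.
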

\begin{proof}
  If $r \geq \radrestricted{i}{j}$, there is an $(i,j,r)$-restricted two-center $(c_1,c_2)$
  with $c_1 \in \boundaryarc_1$ and $c_2 \in
  \boundaryarc_2$ by Lemma~\ref{lem:extreme_determinator}.
  We find an event $x\in M_1$ such that $(x, c_2)$ is an $(i,j,r)$-restricted two-center.
  If $c_1 \in M_1$, then we are done by setting $x=c_1$. Otherwise, let $x$ be the first event of $M_1$ that is
  encountered while traversing $\boundaryarc_1$ from $c_1$
  in any direction. % until it reaches an event \ccheck{$x$ in $M_1$.}
  By the construction of $M_1$, % $Q_I \cap D_r(c_1)$ remains the same,
  % which implies that
  we have $D_r(c_1) \cap Q = D_r(x) \cap Q$.
  Thus, $(x, c_2)$ is still an $(i,j,r)$-restricted two-center.
  Similarly, we can find an event $y\in M_2$ such that $(c_1,y)$ is
  an $(i,j,r)$-restricted two center. This also implies that $(x,y)$ is an $(i,j,r)$-restricted
  two center.
\end{proof}

If both event sets, $M_1$ and $M_2$, are empty, then
for every $q\in Q_I$,
$\bd D_r(q)$ intersects neither $\boundaryarc_1$ nor $\boundaryarc_2$.
This means that either $\boundaryarc_t \subseteq D_r(q)$ or $\boundaryarc_t \cap D_r(q) = \emptyset$
for each $t\in \{1, 2\}$.
If $\boundaryarc_t \subseteq D_r(q)$ for some $t \in \{1, 2\}$,
then $q$ is contained in a geodesic disk of radius $r$ centered at any point on $\boundaryarc_t$.
Otherwise, if $\boundaryarc_t \cap D_r(q) = \emptyset$ for all $t\in \{1, 2\}$,
then $q$ cannot be contained any such disk centered at a point on
$\boundaryarc_1$ or $\boundaryarc_2$.
So, our decision algorithm should return ``no'' if there is $q\in Q_I$ in the latter case
by Lemma~\ref{lem:extreme_determinator};
while it returns ``yes'' if this is the former case for all $q\in Q_I$.

If one of them is empty, say $M_1 = \emptyset$, and
$(c_1, c_2)$ is an $(i, j, r)$-restricted two-center, then we must
have $Q_I \subset D_r(c_2)$.  Thus, this case can be handled by
computing the smallest geodesic disk containing $Q_2 \cup Q_I$ and
testing if its radius is at most $r$.

Hence, in the following, we assume that both $M_1$ and $M_2$ are nonempty.
Then, by Lemma~\ref{lem:event}, we can decide if $r \geq \radrestricted{i}{j}$
by finding a pair $(c_1,c_2)$ of points such that
$c_1 \in M_1, c_2 \in M_2$ and $Q_I \subset D_r(c_1) \cup D_r(c_2)$.
For the purpose, we traverse $\boundaryarc_1$ and $\boundaryarc_2$
simultaneously by handling the events in $M_1$ and $M_2$ in this traversed order
from proper reference points on $\boundaryarc_1$ and $\boundaryarc_2$, respectively.

Our \emph{reference point} $o_t$ on $\bd \dintersection_t$ for each $t\in \{1, 2\}$
should satisfy the following condition:
$o_t \in \boundaryarc_t$ and, for every $q\in Q_I$,
either
\begin{enumerate}[(i)] \denseitems
\item $\bd D_r(q) \cap \boundaryarc_t = \emptyset$,
\item $\boundaryarc_t \subseteq D_r(q)$, or
\item we meet $o_t$, $x$, and $x'$ in this order when traversing $\bd \dintersection_t$ in clockwise direction, possibly being $o_t = x$,
    where $x$ and $x'$ are two events of $q$ on $\boundaryarc_t$ such that
    $\inout{x} = \iin$ and $\inout{x'} = \out$.
\end{enumerate}
Such reference points $o_1$ and $o_2$ can be found in $O(m+n)$ time.
%We will describe how to choose such reference points later.
\begin{lemma} \label{lem:reference}
 For each $t \in\{1, 2\}$,
 such a reference point $o_t \in \boundaryarc_t$ on $\bd \dintersection_t$ exists,
 and can be found in $O(m+n)$ time using $O(m+n)$ space.
\end{lemma}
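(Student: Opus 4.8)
The plan is to reformulate the three cases as a purely cyclic (interval) constraint on $\boundaryarc_t$ and then locate an admissible point by a single sweep. First I would observe that, for a fixed $q \in Q_I$, cases (i) and (ii) do not depend on $o_t$, so only the case-(iii) points constrain the choice. For such a $q$, let $x,x'$ be its two events with $\inout{x} = \iin$ and $\inout{x'} = \out$; by the pseudo-disk property (Lemma~\ref{lem:pseudo}) the set $\boundaryarc_t \cap D_r(q)$ is a single arc of $\bd\dintersection_t$ whose clockwise endpoints are exactly $x$ and $x'$. Condition (iii) then fails precisely when $o_t$ lies strictly inside this in-to-out arc, and holds for every point of the complementary arc together with the entry point $x$ itself. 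Thus the lemma reduces to finding a point of $\boundaryarc_t$ that, for every case-(iii) disk, avoids the interior of its in-to-out arc.

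For existence I would split into two cases. If some $y \in \boundaryarc_t$ lies outside $D_r(q)$ for every case-(iii) $q$, then $y$ lies in the complementary arc of each and is immediately a valid $o_t$. The remaining case is when these arcs cover all of $\boundaryarc_t$; here I would maintain, along a clockwise traversal of $\bd\dintersection_t$, a counter $\kappa$ equal to the number of case-(iii) disks whose in-to-out arc currently contains the moving point. The counter increases by one just after each $\iin$-event and decreases by one just after each $\out$-event, so it is a cyclic step function with matching up- and down-steps, and a valid reference point is exactly a place where $\kappa$ drops to $0$ (or an $\iin$-event immediately preceded by $\kappa=0$). The heart of the proof is to show that $\kappa$ must reach $0$: three arcs on a circle can in principle cover it cyclically so that every point stays interior-covered, so this cannot follow from the interval structure alone and must use that all disks share the radius $r$. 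I would argue this from the equal-curvature structure of $\bd\dintersection_t$ — every bounding arc of $\dintersection_t$ and every cutting circle $\bd D_r(q)$ has the same radius $r$ — together with Lemma~\ref{lem:pseudo}, to rule out a fully cyclic covering and force a point with $\kappa=0$. This step is the main obstacle; everything else is bookkeeping.

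For the running time I would reuse the output of Lemma~\ref{lem:time_decision_marking}: the arcs of $\boundaryarc_t$ and the $O(m)$ events of $M_t$ are already produced in, or can be merged into, their clockwise order along $\bd\dintersection_t$, each tagged with its defining point $\defpoint{\cdot}$ and its value $\inout{\cdot}$, in $O(m+n)$ time. A single pass then evaluates $\kappa$ across the events and reports the first location where $\kappa=0$ as $o_t$ (in the uncovered case, any point of $\boundaryarc_t$ lying outside all case-(iii) disks, detected in the same scan). The whole procedure consists of one sort/merge and one linear scan, hence $O(m+n)$ time and $O(m+n)$ space, as claimed.
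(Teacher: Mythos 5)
Your reduction of the lemma to a cyclic arc-avoidance problem is sound, and you have correctly isolated the crux: one must show that the half-open in-to-out arcs of the case-(iii) points cannot cover $\bd \dintersection_t$ cyclically. But this is exactly the step your proposal leaves open, and the route you suggest for it --- equal radii together with the pseudo-disk property of Lemma~\ref{lem:pseudo} --- cannot close it, because those two ingredients do not preclude a cyclic cover. Already in the Euclidean plane, let $\dintersection_t$ be essentially one disk of radius $r$ (take $Q_t$ to be two points very close together) and place three points $q_1,q_2,q_3$ at mutual angle $120^\circ$ and distance roughly $0.68r$ from its center: each $D_r(q_k)$ cuts a single arc of roughly $140^\circ$ off $\bd\dintersection_t$, all radii are equal, Lemma~\ref{lem:pseudo} holds, and the three arcs cover the circle cyclically, so no reference point would exist. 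What forbids this configuration in the actual setting is positional, not metric: every $q\in Q_I$ that is not in case (i) or (ii) lies in $\subpolygon{v_i}{v_{j+1}}$, outside the geodesic convex hull $\subpolygon{v_{j+1}}{v_i}$ of $Q_1$ and on one fixed side of $\pi(v_i,v_{j+1})$; in the configuration above, the three covering points would have to be extreme points of $Q$, not members of $Q_I$. Your argument never invokes this fact, and without it the covering-impossibility statement you need is false.

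The paper's proof converts precisely this positional fact into the anchoring you are missing. For every relevant $q$, the two neighbors of $q$ on the geodesic convex hull of $Q_1\cup\{q\}$ are $v_i$ and $v_{j+1}$; combined with the ordering property of Aronov et al.\ (property (*), from Corollary 2.7.4 of \cite{FVD}) --- the cyclic order of the contributing sites along the hull of the centers equals the cyclic order of their arcs along the boundary of the intersection --- this forces the in-to-out arc of every such $q$ to appear between the arcs contributed by $u$ and $w$, the points of $Q'_1$ adjacent to $v_i$ and $v_{j+1}$ along $\bd\subpolygon{v_{j+1}}{v_i}$. The junction between the $u$-arcs and the $w$-arcs is therefore avoided by the interiors of all in-to-out arcs and is a valid reference point (the case $|Q'_1|=1$ needs a separate small argument via a geodesic Voronoi diagram of three points on a common arc). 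Note also that this construction uses only $\dintersection_t$, $\boundaryarc_t$ and $Q_t$, whence the $O(m+n)$ bound; your sweep instead presupposes the $O(m)$ events of $M_t$ in sorted cyclic order, which Lemma~\ref{lem:time_decision_marking} does not provide for free and which would in general cost an extra sorting step beyond $O(m+n)$.
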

 \begin{proof}
If $Q_I = \emptyset$, it is trivial that any point in $\boundaryarc_t$ can be chosen
as such a reference point $o_t$.
We thus assume $Q_I \neq \emptyset$.
For each $t\in \{1, 2\}$, let $Q'_t$ denote the set of extreme points $q$ in $Q_t$ such that
a boundary arc of $D_r(q)$ appears on $\boundaryarc_t$.
Since $\boundaryarc_t\neq\emptyset$, $Q'_t$ is nonempty.

Assume first that $|Q'_1| = 1$, and $Q'_1 = \{v \}$.
Then we pick the first point in $\boundaryarc_t$
when traversing $\bd \dintersection_1$ in clockwise direction from $v$,
and keep it as the reference point $o_1$.
(Note that $v \in \bd \dintersection_1$ since we regard $\CHPQ$ as the input polygon.)
Suppose to the contrary that the condition for a reference point is
violated for some $q \in Q_I$ with $\bd D_r(q) \cap \boundaryarc_1 \neq \emptyset$.
Then, we have $x \neq o_1$ and we meet $o_1$, $x'$, and $x$ in this order
when traversing $\bd \dintersection_1$ in clockwise direction,
where $x$ and $x'$ are the events of $q$ on $\boundaryarc_1$
with $\inout{x} = \iin$ and $\inout{x'} = \out$.
This implies that there is $y\in \boundaryarc_1$
such that $y \notin D_r(q)$.
Since the three points $x$, $x'$, $y$ lie on a boundary arc of $D_r(v)$,
we have $d(v, x) = d(v, x') = d(v, y) = r$.
This implies that the geodesic Voronoi diagram of three points $\{x, x', y\}$
has a unique vertex at $v$.
On the other hand, observe that $d(q, y) > r \geq \max\{ d(q, x), d(q, x')\}$
since $y \notin D_r(q)$ and $x, x' \in \bd D_r(q)$.
So, $q$ lies on the region of $x$ or of $x'$ in the diagram.
If $q$ lies on the region of $x$, then
the shortest path from $q$ to $x'$ must passes through the region of $y$
since $v$ is the unique vertex of the diagram.
This implies that $d(q, x') > d(q, y)$, a contradiction.
The other case where $q$ lies on the region of $x'$ can be handled similarly.

Next, we assume that $|Q'_1| \geq 2$.
Our proof makes use of a known property of the intersections of geodesic disks,
which follows from the basic properties of the geodesic farthest-point Voronoi diagrams
and Corollary 2.7.4 of Aronov et al.~\cite{FVD}.
\begin{quote}
% \begin{textit}
\textit{%
 (*) Let $C$ be a finite set of points in a simple polygon $P$ and
	$I := \bigcap_{c \in C} D_r(c)$ for $r>0$.
	Each boundary arc of $I$ is part of $\bd D_r(c)$ for some extreme point $c\in C$,
    that is, a point $c$ that lies on the boundary of the geodesic convex hull of $C$.
	%        Only the centers lying on the boundary of the geodesic convex hull of $C$
	%	have their circular arcs on $\bd I$.
%    Let $c_1, c_2, \ldots, c_k$ be the extreme points of $C$ that appear in this order
%    when walking along the boundary of the geodesic convex hull of $C$ in clockwise direction.
%    Also, let $c'_1, c'_2, \ldots, c'_{k'}$ be the points of $C$ such that
%    boundary arcs of $D_r(c'_1), D_r(c'_2), \ldots, D_r(c'_{k'})$ appear on $\bd I$ in this order
%    when waling along $\bd I$ in clockwise direction.
%    Then, $c'_1, \ldots, c'_{k'}$ is a subsequence of $c_1, \ldots, c_k$.
    Let $C'$ be the set of extreme points $c$ of $C$ such that
    a boundary arc of $D_r(c)$ appears on $\bd I$.
	Then, the order of points in $C'$
	along the boundary of the geodesic convex hull of $C$ is the same as
	the order of their boundary arcs along $\bd I$.
	}
\end{quote}

For a point $q \in Q_I \cap \subpolygon{v_{j+1}}{v_i}$,
we have $\boundaryarc_1 \subset D_r(q)$
since $\subpolygon{v_{j+1}}{v_i}$ is the geodesic convex hull of $Q_1$
and hence $q$ is not an extreme point of $Q_1 \cup \{q\}$.
Thus, we do not need to consider such points $q \in Q_I \cap \subpolygon{v_{j+1}}{v_i}$
in this proof.
On the other hand, a point $q \in Q_I \cap \subpolygon{v_i}{v_{j+1}}$
is an extreme point of the geodesic convex hull of $Q_1\cup\{q\}$.
Moreover, its two neighboring extreme points on the geodesic convex hull of
$Q_1\cup \{q\}$ are always $v_i$ and $v_{j+1}$.

We choose two points $u, w \in Q'_1$ as follows:
If $v_i \in Q'_1$, then $u = v_i$; otherwise,
$u$ is the first point in $Q'_1$ in counterclockwise direction from $v_i$ along
$\bd \subpolygon{v_{j+1}}{v_i}$.
Similarly, if $v_{j+1} \in Q'_1$, then $w = v_{j+1}$; otherwise,
$w$ is the first point in $Q'_1$ in clockwise direction from $v_{j+1}$ along
$\bd \subpolygon{v_{j+1}}{v_i}$.
By our choice of $u$ and $w$, note that $u$ and $w$ are consecutive in $Q'_1$
along $\bd \subpolygon{v_{j+1}}{v_i}$.
By the above property (*),
the boundary arcs that belong to $\bd D_r(u)$ and $\bd D_r(w)$ appear to be consecutive
along $\bd \dintersection_1$.
That is, there is no $z \in Q'_1$ with $z \notin \{u, w\}$ such that
three boundary arcs that belong to $\bd D_r(u)$, $\bd D_r(z)$, and $\bd D_r(w)$ appear in this order
when we walk along $\bd \dintersection_1$ from any point in $D_r(x) \cap \bd \dintersection_1$
in clockwise direction.
We now consider the set $X$ of endpoints of boundary arcs of $\dintersection_1$
that belong to $\bd D_r(u)$.
We then choose the first point in $X$ in counterclockwise direction
from any point on $\bd D_r(w) \cap \boundaryarc_1$ along $\bd \dintersection_1$,
and denote it by $o_1$.

Next we show that the chosen point $o_1$ is indeed a reference point on $\bd \dintersection_1$.
Let $q \in Q_I\cap\subpolygon{v_i}{v_{j+1}}$.
If $\boundaryarc_1 \subset D_r(q)$ or $\bd D_r(q) \cap \boundaryarc_1 = \emptyset$,
then we are done.
Otherwise, a boundary arc $\gamma$ of $D_r(q)$ appears on $\bd (\dintersection_1 \cap D_r(q))$
between boundary arcs from $D_r(x)$ and $D_r(y)$,
by property (*) and the fact that the neighboring extreme points of $q$ on the geodesic convex hull of
$Q_1\cup \{q\}$ are $v_i$ and $v_{j+1}$.
This implies that $o_1$ avoids the interior of $D_r(q)$,
and thus we get this order $o_1$, $x$, $x'$ when traversing $\dintersection_1$ in clockwise direction,
where $x$ and $x'$ are events of $q$ on $\boundaryarc_1$
with $\inout{x} = \iin$ and $\inout{x'} = \out$.
This shows that $o_1$ is a reference point on $\boundaryarc_1$.
Note that our reference point $o_1$ is defined uniquely by the above procedure.

Similarly, we can show that the reference
point $o_2$ on $\boundaryarc_2$ exists and is defined uniquely.
In order to find $o_t$, it suffices to traverse $\boundaryarc_t$, $\dintersection_t$, and $Q_t$
taking $O(m+n)$ time and space for $t \in \{1,2\}$.
\end{proof}

Using the reference point $o_t$, we define an order $\prec_t$ on $\dintersection_t$ as follows.
We write $x \prec_t y$ for two points $x, y \in \bd \dintersection_t$ if $x$ comes before $y$
as we traverse $\bd \dintersection_1$ in clockwise order from the reference point $o_t$.
We also write $x \preceq_t y$ if either $x = y$ or $x\prec_t y$.
Since $\boundaryarc_t \subseteq \dintersection_t$, the order $\prec_t$ on $\boundaryarc_t$ is naturally inherited.
Note that if $x$ and  $x'$ are two events of $q\in Q_I$ on $\boundaryarc_t$ with $x\preceq_t x'$,
then we have $\inout{x} = \iin$, $\inout{x'} = \out$, and
$x \preceq_t y \preceq_t x'$  for any $y \in D_r(q) \cap \boundaryarc_t$.

\subsection{Traversing \texorpdfstring{$\boundaryarc_1$}{A1} and \texorpdfstring{$\boundaryarc_2$}{A2} by Scanning Events}
\label{sec:traversing}

As a preprocessing,
we sort the events in $M_1$ and $M_2$ with respect to the orders $\prec_1$ and $\prec_2$,
respectively.
We scan $M_1$ once by moving a pointer $\centerl$ from the reference point $o_1$
in clockwise order.
We also scan $M_2$ from
the reference point $o_2$ of $\bd \dintersection_2$ by moving one pointer $\centerra$ in
clockwise order and another pointer $\centerrb$ in counterclockwise
order at the same time.
We continue to scan and handle the events until
$\centerl$ points to the last event of $M_1$ or
$\centerra$ and $\centerrb$ point at the same event of $M_2$.
We often regard the three pointers as events which they point to.
For example, we write $D_r(\centerra)$ to indicate
the set of points whose geodesic distance from the event in $M_2$
which $\centerra$ points to is at most $r$.

Whenever we handle an event, we apply two operations,
which we call \dcs and \upd.
We maintain the sets $D_r(\centerl) \cap Q_I$, $D_r(\centerra) \cap Q_I$,
and $D_r(\centerrb) \cap Q_I$.
Operation \upd updates the sets, and operation \dcs checks whether $Q_I \subset
D_r(\centerl) \cup D_r(\centerra)$ or
$Q_I \subset D_r(\centerl) \cup D_r(\centerrb)$.
%We will explain how the two operations take $O(1)$ time except at the reference point.

In the following, we describe
how to handle the events in $M_1 \cup M_2$, and
how the two operations work.

\paragraph{Handling Events in \texorpdfstring{$M_1 \cup M_2$}{M1 cup M2}.}
We move the three pointers $\centerl$, $\centerra$ and $\centerrb$ as
follows.  First, we scan $M_1$ from the current $\centerl$
in clockwise order until we reach an event $x$
with $\inout{x} = \out$.  We set $\centerl$ to $x$.
If $D_r(\centerra)$ does not contain $\defpoint{x}$,
then we scan $M_2$ from $\centerra$ in clockwise order until we reach
the event $y$ with $\defpoint{x}=\defpoint{y}$, and set $\centerra$ to $y$.
If $D_r(\centerrb)$ does not contain $\defpoint{x}$, then we also scan
$M_2$ from $\centerrb$ in counterclockwise order until we reach the event $y'$ with
$\defpoint{x}=\defpoint{y'}$, and set $\centerrb$ to $y'$.
We check whether $Q_I \subset D_r(\centerl) \cup D_r(c_2^s)$ for $s=c,cc$.
If yes, we stop traversing and return as a
solution $(\centerl,\centerra)$ or $(\centerl,\centerrb)$ accordingly.
Otherwise, we repeat the scan above and
check
whether $Q_I \subset D_r(\centerl) \cup D_r(\centerra)$ or
$Q_I \subset D_r(\centerl) \cup D_r(\centerrb)$
for events in $M_1 \cup M_2$ encountered during the scan.
If this test passes at some event,
we stop traversing and return as a
solution $(\centerl,\centerra)$ or $(\centerl,\centerrb)$ accordingly.
If the pointer $\centerl$ goes back to the reference point
or $\centerra, \centerrb$ meet each other, our decision algorithm returns ``no''.
Clearly, this algorithm terminates and we consider $O(m)$ event
points in total.  If both \upd and \dcs take constant time, the total running
time for this step is $O(m)$.

\paragraph{Operations \dcs and \upd.}

To apply \dcs and \upd in constant time, we use five arrays for the points in $Q_I$.
Each element of the arrays is a Boolean value corresponding to each point in
$Q_I$.
For the first array, each element indicates whether $D_r(\centerl)$ contains
its corresponding point in $Q_I$.  Similarly, the second and the third
arrays have Boolean values for $D_r(\centerra)$ and $D_r(\centerrb)$,
respectively.
Each element of the remaining two arrays indicates whether its corresponding
point in $Q_I$ is contained in $D_r(\centerl) \cup D_r(\centerra)$ and
$D_r(\centerl)\cup D_r(\centerrb)$, respectively.
In addition to the five arrays, we
also maintain five counters that represent the number of points of $Q_I$
contained in each of the following five sets:
$D_r(\centerl)$, $D_r(\centerra)$, $D_r(\centerrb)$,
$D_r(\centerl) \cup D_r(\centerra)$, and $D_r(\centerl) \cup D_r(\centerrb)$.

At the reference points, we initialize the five arrays and the five counters
in $O(m)$ time.
For \dcs, we just check whether the number of points contained in
either $D_r(\centerl) \cup D_r(\centerra)$ or
$D_r(\centerl) \cup D_r(\centerrb)$ is equal to the number of points in $Q_I$,
which takes constant time.
To apply \upd when
$\centerl$ reaches an event $x \in M_1$ with $\defpoint{x}=q$,
we first change Boolean values of the elements in the arrays assigned for
$D_r(\centerl)$, $D_r(\centerl) \cup D_r(\centerra)$ and $D_r(\centerl) \cup D_r(\centerrb)$
according to $\inout{x}$.
When we change Boolean values, we also update the counters of the sets accordingly.
These procedures can be done in constant time.

We are now ready to conclude this subsection as follows.
\begin{theorem}
  \label{thm:decision_algorithm}
  Given a pair $(i, j)$ and
  a nonnegative real $r$, our decision algorithm decides whether or
  not $r \geq \radrestricted{i}{j}$ correctly in
  $O((m+n) \log^2(m+n) )$ time using $O(m+n)$ space.  Moreover, it
  also returns an $(i,j,r)$-restricted two-center, if
  $r \geq \radrestricted{i}{j}$.
\end{theorem}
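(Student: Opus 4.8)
The plan is to assemble the full decision procedure from the ingredients established above and to verify correctness and complexity case by case, in the order the algorithm encounters them. First I would dispose of the easy and degenerate cases: return ``yes'' immediately if $r$ exceeds the radius of the smallest disk containing $\CHPQ$ or if $Q_I=\emptyset$; invoke the procedure of Lemma~\ref{lem:decision_easy} to settle, in $O((m+n)\log^2(m+n))$ time, every case in which some $(i,j,r)$-restricted two-center has one of $v_i,v_{i+1},v_j,v_{j+1}$ contained in $D_r(c_1)\cap D_r(c_2)$; return ``no'' when $\dintersection_1$ or $\dintersection_2$ is empty; and handle the situations in which $M_1$ or $M_2$ is empty by the two reductions already described (testing $\boundaryarc_t\subseteq D_r(q)$ for each $q\in Q_I$, or computing a single smallest enclosing disk of $Q_2\cup Q_I$). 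After these exits, both $M_1$ and $M_2$ are nonempty, so by Lemma~\ref{lem:extreme_determinator} together with Lemma~\ref{lem:event} the question ``$r\ge\radrestricted{i}{j}$?'' becomes \emph{equivalent} to the existence of a pair $x\in M_1$, $y\in M_2$ with $Q_I\subseteq D_r(x)\cup D_r(y)$. The reverse implication here is immediate, which is what makes it an equivalence: any such $x\in M_1\subseteq\boundaryarc_1\subseteq\dintersection_1$ already contains all of $Q_1$ and hence, by geodesic convexity of $D_r(x)$, the whole chain $\subchainleft{i}{j}$, and symmetrically $y$ contains $\subchainright{i}{j}$, so $(x,y)$ is a genuine $(i,j,r)$-restricted two-center. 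Thus the correctness of the algorithm reduces to showing that the event scan of Section~\ref{sec:traversing} finds such a pair whenever one exists.

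Next I would account for time and space. Computing $\CHPQ$, the required Voronoi diagrams, and then $\dintersection_1,\dintersection_2,\boundaryarc_1,\boundaryarc_2,M_1,M_2$ costs $O((m+n)\log^2(m+n))$ time and $O(m+n)$ space by Lemma~\ref{lem:time_decision_marking}; the reference points $o_1,o_2$ are found in $O(m+n)$ time by Lemma~\ref{lem:reference}; sorting $M_1$ and $M_2$ with respect to $\prec_1,\prec_2$ takes $O(m\log m)$ time; and the invocation of Lemma~\ref{lem:decision_easy} stays within $O((m+n)\log^2(m+n))$. The scan itself visits $O(m)$ events, and since \upd and \dcs each run in $O(1)$ time using the five Boolean arrays and five counters maintained incrementally, the scan contributes only $O(m)$. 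Summing, the dominant term is $O((m+n)\log^2(m+n))$, and every structure has size $O(m+n)$, giving the claimed bounds.

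The main obstacle is the correctness, and in particular the \emph{completeness}, of the three-pointer monotone scan. Here I would first record the interval structure guaranteed by the choice of reference points: by Lemma~\ref{lem:reference} and the definition of $\prec_t$, each $q\in Q_I$ contributing events to $\boundaryarc_t$ has events $x,x'$ with $\inout{x}=\iin$, $\inout{x'}=\out$ and $x\preceq_t x'$, and $D_r(q)\cap\boundaryarc_t$ is exactly the $\prec_t$-interval from $x$ to $x'$; the remaining points of $Q_I$ either contain all of $\boundaryarc_t$ or miss it entirely. Consequently, as $\centerl$ sweeps $\boundaryarc_1$ clockwise, the membership of $Q_I$ in $D_r(\centerl)$ changes only at events, and the collection of points left uncovered by $D_r(\centerl)$, which $D_r(c_2)$ must absorb, evolves monotonically along $\prec_2$; by the pseudo-disk property (Lemma~\ref{lem:pseudo}) the set of feasible $c_2\in\boundaryarc_2$ is at every moment a contiguous $\prec_2$-arc. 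The crux is to show that this monotonicity lets a single clockwise pointer $\centerra$ and a single counterclockwise pointer $\centerrb$ witness feasibility without backtracking. I would prove this by an exchange argument on a hypothetical feasible pair $(c_1^\ast,c_2^\ast)\in M_1\times M_2$: one may slide $c_1^\ast$ clockwise to the first out-event $x$ at which its defining point leaves $D_r(\centerl)$ while keeping the pair feasible, and the constraint that $\defpoint{x}$ be covered by $c_2$ forces $c_2$ into the $\prec_2$-interval of $\defpoint{x}$, which is precisely the position $\centerra$ or $\centerrb$ is advanced to by the handling rule. Hence every feasible configuration is represented by some pointer configuration at which \dcs is tested, so \dcs succeeds exactly when $r\ge\radrestricted{i}{j}$, and termination (when $\centerl$ returns to $o_1$ or $\centerra,\centerrb$ meet) yields the correct ``no''. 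Combining this completeness argument with the case analysis and the time bound proves the theorem.
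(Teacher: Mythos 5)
Your overall skeleton matches the paper's proof: the same case analysis, the reduction via Lemmas~\ref{lem:extreme_determinator} and~\ref{lem:event} to finding a pair in $M_1\times M_2$ covering $Q_I$ (your explicit verification of the reverse implication via geodesic convexity is a nice touch the paper leaves implicit), and the same complexity accounting. The gap is in the crux, the completeness of the three-pointer scan, and it is twofold. First, your supporting claim that ``the collection of points left uncovered by $D_r(\centerl)$ \ldots evolves monotonically along $\prec_2$'' is false as stated: as $\centerl$ advances it crosses in-events as well as out-events, so points of $Q_I$ both enter and leave $D_r(\centerl)$, and the uncovered set is not monotone. What is monotone is the set of points whose out-events $\centerl$ has already passed --- by the reference-point property each $D_r(q)\cap\boundaryarc_1$ is a single $\prec_1$-interval, so such a point can never again be covered by $c_1$ --- and it is only these points that ever move the pointers $\centerra,\centerrb$.

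Second, your exchange argument stops one step short of what must be proven. After sliding to the pair $(x,c_2^\ast)$ with $x$ an out-event and $\defpoint{x}\notin D_r(c_2^\ast)$, you assert that the handling rule advances $\centerra$ or $\centerrb$ ``precisely'' to where it needs to be; but the pointers advance to the endpoints of the $\prec_2$-interval of $\defpoint{x}$, and $c_2^\ast$ lies \emph{outside} that interval. The pair $(x,c_2^\ast)$ gets tested only because one pointer sweeps over $c_2^\ast$ en route, and for that you need the invariant that immediately before $x$ is handled the pointers still bracket $c_2^\ast$, i.e.\ $\centerra \preceq_2 c_2^\ast \preceq_2 \centerrb$; without it, a one-directional pointer could in principle have overshot $c_2^\ast$ earlier in the scan, and then no tested configuration is feasible. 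This invariant is exactly what the paper's proof establishes (its claim $y^{\mathrm{c}} \prec_2 c_2' \prec_2 y^{\mathrm{cc}}$): each pointer sits at an event of some previously handled point $\defpoint{x'}$ with $x'\prec_1 x$ and $\inout{x'}=\out$; such a point is no longer in $D_r(x)$, hence by feasibility of $(x,c_2^\ast)$ it lies in $D_r(c_2^\ast)$, which places $c_2^\ast$ inside that point's $\prec_2$-interval and therefore on the correct side of the pointer. Your sliding setup (fixing $c_2^\ast$ and maximizing $c_1$) makes this argument available --- indeed it is the payoff of fixing $c_2^\ast$ --- but as written the proposal asserts the conclusion instead of proving it, and the incorrect monotonicity claim cannot substitute for it. A further small point, which your write-up shares with the paper, is that the argument tacitly needs the convention that at an out-event $x$ the point $\defpoint{x}$ still counts as covered by $D_r(\centerl)$ during the checks triggered by $x$ (it lies on $\bd D_r(x)$), and a separate word for the boundary case in which the slide reaches the last event of $M_1$ without ever being blocked.
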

\begin{proof}
	First, we prove the correctness of the algorithm.
	If $r \geq \radrestricted{i}{j}$,
	among all $(i,j,r)$-restricted two-centers,
	we choose a two-center such that
	$c_1' \in \boundaryarc_1$, $c_2' \in \boundaryarc_2$, and
	no pair $(c_x,c_y)$ with $c_1' \prec_1 c_x$
	is an $(i,j,r)$-restricted two-center.
	Then $c_1'$ is an event $x$ of $M_1$ with $\inout{x}=\out$.
	Moreover, if $D_r(c_2')$
	contains $\defpoint{x}$, there is an $(i,j,r)$-restricted two-center $(c_x,c_2')$
    with $c_1' \prec_1 c_x$.  Thus, $c_2' \notin
	D_r(\defpoint{x})$.
	
	Let $y^{\mathrm{c}}$ and
	$y^{\mathrm{cc}}$ be the events for $\centerra$ and $\centerrb$ in $M_2$
	right before we handle the event $x$ in $\boundaryarc_1$, respectively.
	%  The point $\centerra$ lies $y_\mathrm{prev}^a$ because
	Then there is an event $x' \prec_1 x$
	such that $\inout{x'}=$ $\out$ and $\defpoint{y^{\mathrm{c}}}=
	\defpoint{x'}$.
	After we handle $x'$ in $M_1$,
	$D_r(\centerl)$ does not contain $\defpoint{x'}$ any more.
	Thus $D_r(c_2')$ contains $\defpoint{x'}$.
	This implies that $y^{\mathrm{c}} \prec_2 c_2'$.
	Similarly, we have $c_2' \prec_2 y^{\mathrm{cc}}$.
	The set of
	the events between $y^{\mathrm{c}}$ and $y^{\mathrm{cc}}$ which
	are not contained in $D_r(\defpoint{x})$ are exactly the events our
	algorithm handles right after $\centerl$ reaches $x$.
	Thus, our algorithm always finds $(c_1',c_2')$,
	which is an $(i,j,r)$-restricted two-center.
	
	If $r < \radrestricted{i}{j}$, then
    our algorithm cannot find any $(i,j,r)$-restricted two-center $(c_1,c_2)$
    before $\centerl$ goes back to the reference
    point of $\boundaryarc_1$, or $\centerra$ and $\centerrb$ meet
    each other.
    Lemma~\ref{lem:event} implies that there is no $(i,j,r)$-restricted two-center
    and the decision algorithm correctly reports that $r < \radrestricted{i}{j}$
    in this case.

    For the time complexity,
    our decision algorithm first checks % if this is the easy case we described
    % in the beginning of this section: namely,
    whether or not $r$ is at least the radius of the smallest geodesic disk containing $\CHPQ$,
    and whether or not at least one of
the four vertices $v_i$, $v_{i+1}$, $v_j$, $v_{j+1}$ is contained in
both $D_r(c_1)$ and $D_r(c_2)$ for some $(i,j,r)$-restricted two-center $(c_1,c_2)$.
    This takes $O((m+n) \log^2 (m+n))$ time as discussed in Lemma~\ref{lem:decision_easy}.
    Then, the algorithm computes $\boundaryarc_t$ and $M_t$ for each $t=1,2$
    in $O((m+n) \log^2 (m+n))$ time as shown in
    Lemma~\ref{lem:time_decision_marking}.
    Once $\boundaryarc_t$ and $M_t$ are ready,
    we scan $M_1$ and $M_2$ as described above.
    Scanning $M_1$ and $M_2$ is done in $O(m)$ time
    since handling each event and performing each operation can be done in $O(1)$ time
    as discussed above.
    Thus, the claimed time bound is implied.

    The space complexity is bounded by $O(m+n)$
    because the procedures described in Lemmas~\ref{lem:decision_easy} and~\ref{lem:time_decision_marking} take $O(m+n)$ space.
\end{proof}

\section{Optimization Algorithm for a Pair of Indices}
\label{sec:opt}
In this section, we present an optimization algorithm for a given
pair $(i, j)$ that computes $\radrestricted{i}{j}$ and an
optimal $(i, j)$-restricted two-center.

Our optimization algorithm works with a left-open and right-closed interval $(r_L, r_U]$,
called an \emph{assistant interval}, which will be given also as part of input.
An assistant interval $(r_L,r_U]$ should satisfy the following condition:
$r^* \in (r_L,r_U]$ and
the combinatorial structure of $\bd D_r(q)$ for each $q\in Q$ remains the same
for all $r\in (r_L,r_U]$, where $r^* = \min_{i,j} \radrestricted{i}{j}$ denotes
the radius of an optimal two-center of $Q$.
We will see later in Lemma~\ref{lem:assistant_interval}
how we obtain such an assistant interval efficiently.
The algorithm returns the value $\radrestricted{i}{j}$ if
$\radrestricted{i}{j} \leq r_U$; otherwise, it just reports that
$\radrestricted{i}{j} > r_U$.  The latter case means that $(i, j)$ is
not an optimal partition pair, as we have assured that
$\radrestricted{i}{j} > r_U \geq r^*$.  Testing whether
$\radrestricted{i}{j} > r_U$ or $\radrestricted{i}{j} \leq r_U$ can be
done by running the decision algorithm with input $(i, j, r_U)$.  In
the following, we thus assume that
$\radrestricted{i}{j} \in (r_L,r_U]$, and search for
$\radrestricted{i}{j}$ in the assistant interval $(r_L,r_U]$.

As in the decision algorithm, we consider the intersection of geodesic disks and events
on the arcs of $\boundaryarc_t$.
For each $r\in (r_L,r_U]$ and each $t\in \{1, 2\}$, let $\dintersection_t(r) := \bigcap_{q\in Q_t} D_r(q)$,
and $\boundaryarc_t(r)$ be the union of boundary arcs of $\dintersection_t(r)$.
Also, let $M_t(r)$ be the set of events of each $q\in Q_I$, if any, on $\boundaryarc_t(r)$,
as defined in Section~\ref{sec:decision}.
Here, we identify each event $x \in M_t(r)$ by a pair $x=(\defpoint{x}, \inout{x})$,
not by its exact position on $\boundaryarc_t$.
Note that the set $M_t(r)$ and the combinatorial structure of $\bd \dintersection_t(r)$ may not be constant
over $r\in (r_L,r_U]$.
In order to fix them, we narrow down the assistant interval $(r_L,r_U]$
to $(\rho_L,\rho_U]$ as follows.
\begin{lemma} \label{lem:narrowed_interval}
 One can find an interval $(\rho_L,\rho_U] \subseteq (r_L,r_U]$ containing $\radrestricted{i}{j}$ in $O((m+n) \log^3(m+n))$ time
 such that the combinatorial structure of each of the following remains the same
 over $r\in (\rho_L,\rho_U]$: $\bd \dintersection_t(r)$ and $M_t(r)$ for $t=1,2$.
\end{lemma}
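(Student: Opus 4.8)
The goal is to narrow $(r_L,r_U]$ to a subinterval on which both the combinatorial structure of $\bd\dintersection_t(r)$ and the event set $M_t(r)$ stay fixed, so that the optimization can later treat these as constant. My plan is to identify the finitely many ``critical'' radii inside $(r_L,r_U]$ at which either structure can change, and then binary-search among them using the decision algorithm of Theorem~\ref{thm:decision_algorithm} to locate the subinterval containing $\radrestricted{i}{j}$. The key observation is that within $(r_L,r_U]$ the combinatorial structure of each individual disk boundary $\bd D_r(q)$ is already fixed by the definition of the assistant interval; what can still change is (a) the cyclic sequence of boundary arcs forming $\bd\dintersection_t(r)$, and (b) for each $q\in Q_I$ whether $\boundaryarc_t(r)\cap\bd D_r(q)$ is empty, all of $\boundaryarc_t(r)$, or a proper nonempty arc, together with the identities of the two events.

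First I would bound the number of critical radii for each type of change and show they are algebraically simple. A change in the structure of $\dintersection_t(r)$ occurs exactly when two boundary arcs of disks centered at $Q_t$ meet at a point that is equidistant (distance $r$) from a third site, or when an arc endpoint crosses $\bd P$; since $|Q_t|=O(m)$ and $\dintersection_t(r)$ has $O(m+n)$ complexity, there are $O(m+n)$ such events, each given as the root of a constant-degree equation in $r$. A change in $M_t(r)$ for a fixed $q\in Q_I$ occurs when $\bd D_r(q)$ becomes tangent to $\boundaryarc_t(r)$, i.e.\ when $q$ transitions between being inside and outside, which again contributes $O(1)$ critical radii per point and $O(m)$ in total. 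So the total number of candidate critical radii is $O(m+n)$.

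Next I would sort these $O(m+n)$ candidate radii and run a binary search over them, invoking the decision algorithm at each probed radius to determine on which side $\radrestricted{i}{j}$ lies. Each decision call costs $O((m+n)\log^2(m+n))$ by Theorem~\ref{thm:decision_algorithm}, and a binary search over $O(m+n)$ values makes $O(\log(m+n))$ calls, giving the claimed $O((m+n)\log^3(m+n))$ bound; the initial computation and sorting of the critical radii fits within the same budget. The binary search isolates two consecutive critical radii $\rho_L<\rho_U$ with $\radrestricted{i}{j}\in(\rho_L,\rho_U]$, and by construction no combinatorial change of $\bd\dintersection_t(r)$ or $M_t(r)$ happens strictly inside this interval, so both structures are constant there, as required. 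I take $\rho_U$ to be the larger of the two bracketing radii and keep the interval right-closed so that $\radrestricted{i}{j}$, which may itself be a critical value, is retained.

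The main obstacle is enumerating the critical radii for the structure of $\dintersection_t(r)$ \emph{without} recomputing the intersection from scratch at every candidate radius: the set of arcs contributing to $\bd\dintersection_t(r)$ (equivalently, the set $Q'_t$ of extreme points whose disk appears on $\boundaryarc_t(r)$) can itself change with $r$, so I cannot simply read off a static list of arc-pair events. I would address this by appealing to the farthest-point Voronoi structure underlying $\dintersection_t(r)$ established in Lemma~\ref{lem:time_decision_marking}: each boundary arc corresponds to a refined cell of $\fvd{Q_t}$, whose combinatorial identity is fixed on $(r_L,r_U]$, so the candidate appearance/disappearance radii of arcs are determined by adjacencies in this diagram and can be listed in $O(m+n)$ time. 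Verifying that these algebraically-defined radii indeed capture every structural transition of $\boundaryarc_t(r)$ and $M_t(r)$ — and that none is missed by the monotone behavior guaranteed by the pseudo-disk property of Lemma~\ref{lem:pseudo} — is the delicate part of the argument.
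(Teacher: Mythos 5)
Your overall strategy is the same as the paper's: use the assistant interval to fix each individual $\bd D_r(q)$, enumerate a small set of candidate critical radii at which $\bd\dintersection_t(r)$ or $M_t(r)$ can change, and binary-search among them with the decision algorithm ($O(\log(m+n))$ calls at $O((m+n)\log^2(m+n))$ each). For the structural changes of $\bd\dintersection_t(r)$ your plan essentially coincides with the paper's proof: the critical radii are read off from the farthest-point Voronoi diagram $\fvd{Q_t}$ (concretely, the paper takes the distances from the vertices of $\fvd{Q_t}$ to their farthest sites), so this part, including your $O(m+n)$ count, is sound. Whether one does a single combined binary search, as you propose, or two successive ones, as the paper does, is immaterial.

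The genuine gap is in the second family of critical radii, those at which $M_t(r)$ changes. You assert that each $q\in Q_I$ contributes $O(1)$ such radii (``when $\bd D_r(q)$ becomes tangent to $\boundaryarc_t(r)$'') and move on, but you give no way to compute these values, and this is not a primitive operation: the radius at which the two growing sets $D_r(q)$ and $\dintersection_t(r)$ first meet is the radius of the smallest geodesic disk enclosing $Q_t\cup\{q\}$, and computing it from scratch for each of the $m$ points (e.g., via the geodesic convex hull and the linear-time one-center algorithm) costs $\Omega(m(m+n))$ in total, which destroys the claimed $O((m+n)\log^3(m+n))$ bound. The paper's proof hinges precisely on replacing this quantity by an explicitly and cheaply computable one: it sets $\rho_t(q)$ to be \emph{half the geodesic distance from $q$ to its farthest neighbor in $Q_t$}, obtained by a single $O(\log(m+n))$-time farthest-site query on $\fvd{Q_t}$, and argues that once the structure of $\bd\dintersection_t(r)$ has been fixed, $M_t(r)$ can change only at such values. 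This is the one idea your write-up is missing; without it (or an equivalent), the binary search for the $M_t$ part has nothing to search over. A secondary imprecision: your characterization of the $M_t$-transitions mentions only the tangency/appearance case, whereas events of $q$ also disappear when $\boundaryarc_t(r)$ becomes entirely contained in $D_r(q)$; any list of critical radii must cover that containment transition as well.
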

\begin{proof}
  We fix the combinatorial structure of $\bd \dintersection_t(r)$ for
  each $t \in \{1, 2\}$ as follows.  We compute the geodesic
  farthest-point Voronoi diagrams $\fvd{Q_1}$ and $\fvd{Q_2}$ of $Q_1$
  and $Q_2$, respectively.  Then for each vertex $v$ of $\fvd{Q_1}$ or
  $\fvd{Q_2}$, we compute the geodesic distance between $v$ and its
  farthest neighbor in $Q_1$ or $Q_2$.  We sort the $O(m+n)$ geodesic
  distances and apply a binary search on them to find an interval
  $(r'_L,r'_U]$ using the decision algorithm such that
  $r'_L \leq \radrestricted{i}{j} \leq r'_U$. Then for any $r\in (r'_L,r'_U]$,
  the combinatorial structure of $\bd \dintersection_t(r)$ for
  each $t \in \{1, 2\}$ remains the same.

Next, we fix $M_1(r)$ and $M_2(r)$ by using $\fvd{Q_1}$ and $\fvd{Q_2}$.
For each $q\in Q_I$ and $t\in\{1,2\}$, let $\rho_t(q)$ be half the geodesic distance
from $q$ and its farthest neighbor in $Q_t$.
The value $\rho_t(q)$ can be computed in $O(\log (m+n))$ time.
As $r$ increases from $r'_L$ to $r'_U$ continuously,
$M_t(r)$ changes only when $r = \rho_t(q)$ for some $q\in Q_I$
since $q$ lies on $\bd \dintersection_t(2\rho_t(q))$ by definition and thus
$\dintersection_t(\rho_t(q))$ touches $D_{\rho_t(q)}(q)$.
Thus, we gather all these values $\rho_t(q)$ with $\rho_t(q) \in (r'_L,r'_U]$
and apply a binary search on them as above to find an interval $(\rho_L,\rho_U]$
such that $\rho_L \leq \radrestricted{i}{j} \leq \rho_U$.
Then for any $r\in (\rho_L,\rho_U]$, $M_t(r)$ for each
  $t \in \{1, 2\}$ remains the same.

The total amount of time spent for this procedure can be bounded by
$O((m+n)\log(m+n) + T_d \log(m+n)) = O((m+n) \log^3(m+n) )$,
by Theorem~\ref{thm:decision_algorithm},
where $T_d$ denotes the time spent by each call of the decision algorithm.
\end{proof}

We proceed with the interval $(\rho_L,\rho_U]$ described in
Lemma~\ref{lem:narrowed_interval}.  Since $M_1(r)$ and $M_2(r)$ remain
the same for any $r\in (\rho_L,\rho_U]$, we write $M_1 = M_1(r)$ and
$M_2 = M_2(r)$.  The sets $M_1$ and $M_2$ can be computed by
Lemma~\ref{lem:time_decision_marking}.  Note that
$M_1 = M_1(\radrestricted{i}{j})$ and
$M_2 = M_2(\radrestricted{i}{j})$ since
$\radrestricted{i}{j} \in (\rho_L,\rho_U]$.  We then pick a reference
point $o_t(r)$ on $\bd \dintersection_t(r)$ as done in
Section~\ref{sec:decision} such that the trace of $o_t(r)$ over
$r\in (\rho_L,\rho_U]$ is a simple curve.  This is always possible
because the combinatorial structure of $\bd \dintersection_t(r)$ is
constant.  (See also the proof of Lemma~\ref{lem:reference}.)  Such a
choice of references $o_t(r)$ ensures that the order on the events in
$M_t$ remains the same as $r$ continuously increases unless the
positions of two distinct events in $M_t(r)$ coincide.

We are now interested in the order $\prec^*_t$ on the events in $M_t$
at $r = \radrestricted{i}{j}$.  In the following, we obtain a sorted
list of events in $M_t$ with respect to $\prec^*_t$ without knowing
the exact value of $\radrestricted{i}{j}$.

\paragraph{Deciding Whether or Not
  \texorpdfstring{$x \preceq^*_t x'$}{x<t x'} for
  \texorpdfstring{$x, x' \in M_t$}{x, x' in Mt}.}
% This is a primitive operation to sort $M_t$ with respect to
% $\prec^*_t$.
Let $q = \defpoint{x}$ and $q' = \defpoint{x'}$.  The
order of $x$ and $x'$ over $r\in (\rho_L,\rho_U]$ may change only
when we have a nonempty intersection of
$\boundaryarc_t(r) \cap \bd D_r(q) \cap \bd D_r(q')$.  Let
$\rho_t(q, q')$ denote such a radius $r>0$ that
$\boundaryarc_t(r) \cap \bd D_r(q) \cap \bd D_r(q')$ is nonempty for
any two distinct $q, q'\in Q_I$.  Note that the intersection
$\boundaryarc_t(r) \cap \bd D_r(q) \cap \bd D_r(q')$ at
$r=\rho_t(q, q')$ forms a single point $c$ and $D_{\rho_t(q,q')}(c)$
is the smallest-radius geodesic disk containing $Q_t\cup \{q, q'\}$.
Thus, the value $\rho_t(q, q')$ is uniquely determined.

\begin{lemma}
  \label{lem:time_sorting}
  For any two distinct points $q_1, q_2\in Q_I$, we can decide whether
  or not $\rho_t(q_1, q_2) \in (\rho_L,\rho_U]$ in
  $O(\log (m+n)\log n)$ time after a linear-time processing of $P$ for
  $t\in\{1, 2\}$.  If $\rho_t(q_1, q_2) \in (\rho_L,\rho_U]$, the
  value of $\rho_t(q_1, q_2)$ can be computed in the same time bound.
\end{lemma}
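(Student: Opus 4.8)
The plan is to first reinterpret the quantity in geometric terms: by the accompanying remark, $\rho_t(q_1,q_2)$ is exactly the radius of the smallest geodesic disk enclosing $Q_t \cup \{q_1, q_2\}$, whose center $c$ is equidistant, at distance precisely $\rho_t(q_1,q_2)$, from $q_1$, from $q_2$, and from its farthest neighbor in $Q_t$. Since $\dintersection_t(r)\cap D_r(q_1)\cap D_r(q_2)=\bigcap_{p\in Q_t\cup\{q_1,q_2\}}D_r(p)$, this smallest enclosing disk has radius at most $r$ if and only if this common intersection is nonempty, giving the monotone test
\[
\rho_t(q_1, q_2) \le r \quad\Longleftrightarrow\quad \dintersection_t(r) \cap D_r(q_1) \cap D_r(q_2) \neq \emptyset .
\]
Deciding whether $\rho_t(q_1,q_2)\in(\rho_L,\rho_U]$ then reduces to two instances of this emptiness test: since the interval is left-open and right-closed, $\rho_t(q_1,q_2)\in(\rho_L,\rho_U]$ holds precisely when the common intersection is \emph{empty} at $r=\rho_L$ and \emph{nonempty} at $r=\rho_U$. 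Crucially, by Lemma~\ref{lem:narrowed_interval} the combinatorial structure of $\bd\dintersection_t(r)$ is fixed throughout $(\rho_L,\rho_U]$, so both tests are run against an essentially precomputed region, and the linear-time processing of $P$ refers to building a two-point shortest-path query structure (so that each geodesic distance costs $O(\log n)$) on top of the already available $\fvd{Q_t}$.

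The core of the proof is to implement the emptiness test at a fixed $r$ in $O(\log(m+n)\log n)$ time. First I would locate the events of $q_1$ and of $q_2$ on $\boundaryarc_t(r)$, i.e.\ the endpoints of the sub-arcs $\boundaryarc_t(r)\cap D_r(q_1)$ and $\boundaryarc_t(r)\cap D_r(q_2)$, by the binary-search procedure of Lemma~\ref{lem:time_decision_marking}; by Lemma~\ref{lem:pseudo} each such sub-arc is a single contiguous piece. Next I would compute the at most two points of $\bd D_r(q_1)\cap\bd D_r(q_2)$ (again at most two by the pseudo-disk property). The three-way intersection is then nonempty exactly when one of a constant number of local conditions holds: either the two sub-arcs on $\boundaryarc_t(r)$ overlap (a boundary-crossing witness), or one of the two points of $\bd D_r(q_1)\cap\bd D_r(q_2)$ lies inside $\dintersection_t(r)$ (an interior witness, covering the case where the lens $D_r(q_1)\cap D_r(q_2)$ dips into $\dintersection_t(r)$ without meeting $\boundaryarc_t(r)$). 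Each such condition is checked by a single farthest-point query in $\fvd{Q_t}$ — a point lies in $\dintersection_t(r)$ iff its geodesic farthest distance to $Q_t$ is at most $r$ — or by a constant number of $O(\log n)$ distance computations, so the whole test stays within the claimed bound.

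For the exact value of $\rho_t(q_1,q_2)$ once membership in $(\rho_L,\rho_U]$ is certified, I would localize the crossing of the two events to a single boundary arc $s\subseteq\boundaryarc_t$; because the combinatorial structure is fixed, the owning site $p^\ast\in Q_t$ of $s$ is determined. The crossing point $c$ is then the geodesic circumcenter of $\{p^\ast,q_1,q_2\}$ (the unique point equidistant from the three), and $\rho_t(q_1,q_2)=d(c,q_1)$; both can be produced in $O(\log n)$ time from the shortest-path structure, so no additional binary search over $r$ — which would cost an extra logarithmic factor — is needed. I expect the main obstacle to be the emptiness test: proving that the stated constant set of overlap-and-containment conditions is \emph{complete}, i.e.\ that it detects nonemptiness in every configuration of the three geodesic pseudo-disks, and carefully handling the degenerate boundary cases (a sub-arc contained in the other, tangencies, or a point whose farthest neighbor in $Q_t$ is not unique), while keeping every primitive inside the $O(\log(m+n)\log n)$ budget.
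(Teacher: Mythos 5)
Your reduction of the membership test to two monotone emptiness tests at $r=\rho_L$ and $r=\rho_U$ is sound as far as it goes, but both of your main computational steps rest on primitives that you never show how to implement, and those primitives are exactly the hard content of the lemma. First, your emptiness test needs ``the at most two points of $\bd D_r(q_1)\cap\bd D_r(q_2)$.'' These lens points lie on the geodesic bisector of $q_1$ and $q_2$, a curve of complexity $\Theta(n)$ that cannot be computed explicitly within the budget: the lemma permits only one linear-time preprocessing of $P$, while this pairwise test is invoked a superlinear number of times by the sorting step, so each invocation must run in $O(\log(m+n)\log n)$ time. Locating a point at distance $r$ from $q_1$ along this implicitly represented curve requires a binary search over its $\Theta(n)$ breakpoints without ever constructing them, and you give no mechanism for that; asserting it costs ``a constant number of $O(\log n)$ distance computations'' does not make it so. Second, for the exact value you claim the geodesic circumcenter of $\{p^\ast,q_1,q_2\}$ ``can be produced in $O(\log n)$ time from the shortest-path structure.'' That is an unsupported (and stronger-than-needed) claim: the circumcenter is the intersection of two geodesic bisectors, each of linear complexity, and no standard $O(\log n)$ primitive produces it. In both places you have, in effect, assumed what was to be proved.

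The paper's proof supplies precisely the missing mechanism. It characterizes the target point $c$ as the point on one half $b$ of the bisector of $q_1,q_2$ where $d(q_1,\cdot)$ equals the farthest distance to $Q_t$ (queried in $O(\log(m+n))$ time via $\fvd{Q_t}$), a predicate that is monotone along $b$, and then binary-searches $b$ implicitly. The crucial idea is that the breakpoints of $b$ are induced by the edges of four shortest paths, which the Guibas--Hershberger structure represents as balanced trees of height $O(\log n)$; this yields an approximate median of the breakpoints of any sub-portion of $b$ in $O(\log n)$ time, hence a full implicit binary search in $O(\log(m+n)\log n)$ time, followed by a constant-time verification that $c$ indeed lies in $\boundaryarc_t(\rho)\cap\bd D_\rho(q_1)\cap\bd D_\rho(q_2)$. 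To salvage your outline you would have to develop this implicit-bisector search anyway---both for your lens points and for your circumcenter---at which point your detour through the three-region emptiness characterization, with its degenerate cases, becomes unnecessary.
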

\begin{proof}
Let $c$ be the unique point in $\boundaryarc_t(\rho) \cap \bd D_\rho(q_1) \cap \bd D_\rho(q_2)$
for $\rho=\rho_t(q_1,q_2)$. It lies on the bisector of $q_1$ and $q_2$.
Consider a part $b$ of the bisector of $q_1$ and $q_2$ lying between one endpoint and the midpoint of $\pi(q_1,q_2)$. We will show how to compute $c$ in the case of $c\in b$. If it is not the case, we choose
the other part of the bisector, apply this algorithm again, and compute $c$.
In the following, we compute $c$ assuming that $\rho \in (\rho_L,\rho_U]$.
Once we compute $c$, we test whether $c$ is in
$\boundaryarc_t(\rho) \cap \bd D_\rho(q_1) \cap \bd D_\rho(q_2)$.
If so, we conclude that $\rho$ is in $(\rho_L,\rho_U]$.
Otherwise, we conclude that $\rho$ is not in $(\rho_L,\rho_U]$.

We apply a binary search on the arcs of $b$ of $q_1$ and $q_2$.  We call
each endpoint of the arcs of $b$ a \emph{breakpoint} of $b$.  Given
$b$ and a breakpoint $x$ of $b$, we can decide if $c$ comes
before $x$ from one endpoint of $b$ in $O(\log (m+n))$ time
as follows. First, we compute $d(q_1,x)$ in $O(\log n)$ time using the
shortest-path data structure of linear size~\cite{shortest-path}.  We
also compute the smallest value $\rho'$ with $x\in D_{\rho'}(q_1)$. To
do this, we find the cell of $\fvd{Q_1}$ containing $x$ and compute
the geodesic distance between $x$ and the site of the cell in
$O(\log(m+n))$ time.  This distance is the smallest value $\rho'$ by
definition.  The point $c$ lies between $x$ and the midpoint of
$\pi(q_1,q_2)$ if and only if $d(q_1,x)$ is larger than
$\rho'$. Therefore, we can decide if $c$ comes before $x$
from one endpoint of $b$ in $O(\log (m+n))$ time, and we can find the
arc of $b$ containing $c$ in $O(\log (m+n)\log n)$ time.  Once we find
the arc of $b$ containing $c$, we can compute $c$ in constant time.

However, it already takes $\Omega(n)$ time in the worst case to compute $b$.
Instead of computing $b$ explicitly, we present a procedure to compute an approximate
median of the breakpoints of a part of $b$ in $O(\log n)$ time.  More
precisely, given two breakpoints $x_1$ and $x_2$ of $b$, we compute a
breakpoint $x$ of $b$ such that the number of the breakpoints lying
between $x_1$ and $x$ is a constant fraction of the number of the
breakpoints lying between $x_1$ and $x_2$. To do this, we use the data
structure by Guibas and
Hershberger~\cite{shortest-path,Hersh-shortest-1991} constructed on
$P$.  This data structure has linear size and supports an
$O(\log n)$-time shortest path query between a source and a
destination.  In the preprocessing, they precompute a number of shortest
paths such that for any two points $p$ and $q$ in $P$, the shortest
path $\pi(p,q)$ consists of $O(\log n)$ subchains of the precomputed
shortest paths and $O(\log n)$ additional edges in linear time.  In
the query algorithm, the structure finds such subchains and edges connecting
them in $O(\log n)$ time.  Finally, it returns the
shortest path between two query points represented as a binary tree of
height $O(\log n)$~\cite{Hersh-shortest-1991}.  Therefore, we can
apply a binary search on the vertices of the shortest path between any
two points as follows.

We compute the midpoint of $\pi(q_1,q_2)$ by applying a binary search on
the edges of $\pi(q_1,q_2)$, and compute the endpoints of the
bisecting curve of $\pi(q_1,q_2)$ by applying a binary search on the
edges of $P$ in $O(\log^2 n)$ time. Now we have the endpoints $x_1$
and $x_2$ of $b$.  Using the data structure by Guibas and Hershberger,
we can compute two points $q_1'$ and $q_2'$ in $O(\log n)$ time such
that $\pi(q_t,q_t')$ is the maximal common path of the shortest paths
between $q_t$ and the endpoints of $b$ for $t=1,2$.

A breakpoint $x$ of $b$ corresponds to an edge $vv'$ of
$\pi(q_t',x_1)\cup\pi(q_t',x_2)$ for $t=1,2$ such that $\pi(q_t,x)$
contains the line segments containing $v$, $v'$ and $x$.  Moreover,
each edge of $\pi(q_t',x_s)$ induces a breakpoint of $b$, and the
breakpoints induced by such edges appear on $b$ in order for $t=1,2$
and $s=1,2$. Therefore, we can apply a binary search on the edges of
$\pi(q_t',x_s)$. We choose the median of the edges of $\pi(q_t',x_s)$
and compute the breakpoint induced by the median edge. We do
this for the four paths $\pi(q_t',x_s)$. Then one of them is an
approximate median of the breakpoints of the part of $b$ lying between
$x_1$ and $x_2$. Therefore, we can compute an approximate median in
$O(\log n)$ time, and determine which side of the approximate median
contains $c$ in $O(\log n)$ time. In total, we can find $c$ in
$O(\log (m+n)\log n)$ time.
\end{proof}

If $\rho_t(q, q') \notin (\rho_L,\rho_U]$, then the order of $x$ and $x'$ can be determined
by computing their positions at $r= \rho_L$ or $\rho_U$.
Otherwise,
we can decide whether or not $x \preceq^*_t x'$ by running the decision algorithm
for input $(i, j, \rho_t(q, q'))$, once we know the value $\rho_t(q, q')$.

\paragraph{Sorting the Events in \texorpdfstring{$M_t$}{Mt} with Respect to \texorpdfstring{$\prec^*_t$}{<t}.}
This can be done in $O(T_c\cdot m \log m)$ time, where $T_c$ denotes the time required
to compare two events as above.
We present a more efficient method that applies a parallel sorting algorithm
  due to Cole~\cite{parallel-sorting}.
Cole gave a parallel algorithm for sorting $N$ elements in
$O(\log N)$ time using $O(N)$ processors.
%Let $m'$ be the number of points in $Q_I$.
%Consider the points in $M_t(\radrestricted{i}{j})$.
In Cole's algorithm, we need to apply $O(m)$ comparisons at each
iteration, while comparisons in each iteration are independent of one another.
For each iteration, we compute the values of $\rho_t(\defpoint{x}, \defpoint{x'})$
that are necessary for the $O(m)$ comparisons of $x, x'\in M_t$, %$r'$
and sort them in increasing order.
On the sorted list of the values, we apply a binary search using
the decision algorithm.
Then we complete the comparisons in each
iteration in time $O(m \log(m+n)\log n + T_d \log m)$ by Lemma~\ref{lem:time_sorting},
where $T_d$ denotes the time taken by the decision algorithm.
Since Cole's algorithm requires $O(\log m)$ iterations in total,
the total running time for sorting the events in $M_t$ is
$O(m\log (m+n) \log m \log n + T_d \log^2 m)$.

\paragraph{Computing \texorpdfstring{$\radrestricted{i}{j}$}{r*ij} and a Corresponding Two-Center.}
For any two neighboring events $x$ and $x'$ in $M_t$ with respect to $\prec^*_t$,
we call the value of $\rho_t(\defpoint{x}, \defpoint{x'})$ a \emph{critical radius}
if it belongs to $(\rho_L,\rho_U]$.
Let $R$ be the set of all critical radii, including $\rho_L$ and $\rho_U$.
\begin{lemma} \label{lem:critical_radii}
$R$ contains the radius $\radrestricted{i}{j}$ of an optimal $(i,j)$-restricted two-center.
\end{lemma}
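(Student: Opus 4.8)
The plan is to exploit two facts about the \emph{event-based feasibility} predicate implicit in the decision algorithm: it is monotone in $r$, and on the interval $(\rho_L,\rho_U]$ it is a purely combinatorial function of the orders $\prec_1$ and $\prec_2$ of the events, which change only at critical radii. Combining these will force the unique jump of the predicate, located at $r=\radrestricted{i}{j}$, to occur at a critical radius.

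First I would recast feasibility combinatorially. Any pair of events $c_1\in M_1$ and $c_2\in M_2$ lies on $\boundaryarc_1\subseteq\dintersection_1$ and $\boundaryarc_2\subseteq\dintersection_2$, so $Q_1\subseteq D_r(c_1)$ and $Q_2\subseteq D_r(c_2)$; by geodesic convexity of the disks the subchain-containment conditions then hold automatically, and the pair is an $(i,j,r)$-restricted two-center exactly when $Q_I\subseteq D_r(c_1)\cup D_r(c_2)$. Hence, invoking Lemma~\ref{lem:event} and the definition of $\radrestricted{i}{j}$ as an infimum (together with the monotone growth of geodesic disks in $r$), the predicate ``there exist $c_1\in M_1$, $c_2\in M_2$ with $Q_I\subseteq D_r(c_1)\cup D_r(c_2)$'' is false for $r<\radrestricted{i}{j}$ and true for $r\geq\radrestricted{i}{j}$ throughout $(\rho_L,\rho_U]$; it therefore jumps exactly once, at $r=\radrestricted{i}{j}$.

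Next I would show the predicate is locally constant away from critical radii. For a fixed $q\in Q_I$ and a fixed event $c_1\in M_1$, the relation $q\in D_r(c_1)$ is equivalent to $c_1\in D_r(q)\cap\boundaryarc_1$, which by the construction of events holds iff $c_1$ lies between the two events of $q$ on $\boundaryarc_1$ in the order $\prec_1$; for the points $q$ whose boundary never meets $\boundaryarc_1$ on the interval, the status $\boundaryarc_1\subseteq D_r(q)$ versus $\boundaryarc_1\cap D_r(q)=\emptyset$ stays fixed by continuity, since $M_1$ is constant on $(\rho_L,\rho_U]$ by Lemma~\ref{lem:narrowed_interval}. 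Thus whether $Q_I\subseteq D_r(c_1)\cup D_r(c_2)$ holds for a given pair, and hence whether any feasible pair exists, depends only on $\prec_1$ and $\prec_2$. As $M_1$, $M_2$ and the combinatorial structures of $\bd\dintersection_1$, $\bd\dintersection_2$ are fixed on the interval, these orders change only when two events exchange positions, that is, precisely at a value $r=\rho_t(\defpoint{x},\defpoint{x'})$.

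Finally I would locate the jump. Being monotone and piecewise constant with breakpoints only among such exchange values, the predicate's single jump at $\radrestricted{i}{j}$ must coincide with one of them: $\radrestricted{i}{j}=\rho_t(\defpoint{x},\defpoint{x'})$ for two events $x,x'$ of some $M_t$ whose positions coincide at this radius. At a coincidence the two events become adjacent along $\bd\dintersection_t$, and by the general position assumption no third event shares that position, so $x$ and $x'$ are neighbors with respect to $\prec^*_t$; hence $\rho_t(\defpoint{x},\defpoint{x'})$ is a critical radius and lies in $R$. (If instead $\radrestricted{i}{j}=\rho_U$ it lies in $R$ trivially, while $\radrestricted{i}{j}=\rho_L$ is impossible because $\radrestricted{i}{j}\in(\rho_L,\rho_U]$.) The main obstacle is the middle step: arguing rigorously that coverage of $Q_I$ is a combinatorial function of the event orders and that no change other than an order exchange can move feasibility inside the interval; the rest is bookkeeping with monotonicity and general position.
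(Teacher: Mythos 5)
Your proposal is correct, but it takes a genuinely different route from the paper. The paper proves the lemma by contradiction with a local geometric perturbation: assuming $\radrestricted{i}{j}\notin R$, it fixes an optimal $(i,j)$-restricted two-center $(c_1,c_2)$, forms the intersections $\mathcal{I}'_t=\bigcap_{q\in Q'_t}D_{\radrestricted{i}{j}}(q)$ of the disks around the two clusters $Q'_t=D_{\radrestricted{i}{j}}(c_t)\cap Q$, argues that the absence of event coincidences lets one shrink the radius by a small $\epsilon>0$ while keeping both intersections nonempty, and thus produces a two-center of radius $\radrestricted{i}{j}-\epsilon$, contradicting minimality. You instead argue by global combinatorial invariance: feasibility at radius $r$ is (via Lemma~\ref{lem:event} together with geodesic convexity of disks) equivalent to the existence of a covering pair of events; coverage of each $q\in Q_I$ by an event is determined solely by the orders $\prec_1,\prec_2$ (plus the fixed containment status of points contributing no events, which Lemma~\ref{lem:narrowed_interval} freezes); these orders change only at the finitely many exchange values $\rho_t(\cdot,\cdot)$; hence the unique jump of the monotone feasibility predicate must occur at an exchange value, where the two coinciding events are $\prec^*_t$-neighbors, giving a critical radius. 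Both arguments rest on the same substrate (constancy of $M_t$ over the interval, and the fact that the orders change only at coincidences), but yours makes explicit the reduction of feasibility to a function of the event orders---the same principle that justifies sorting $M_t$ by $\prec^*_t$ at all---and it routes the dangerous degenerate case (a cluster whose smallest enclosing disk is pinned at a point interior to $\dintersection_t$, where no events coincide) through Lemma~\ref{lem:event}, which excludes it; in the paper's proof this exclusion is implicit in the terse step ``$\mathcal{I}'_t$ is nonempty [and shrinkable] since the event positions are distinct.'' Two small repairs to note: the paper's general position assumption (no vertex of $P$ equidistant from two points of $Q$) does not literally forbid three events sharing a position, but this is harmless---if several events coincide at $\radrestricted{i}{j}$, some two of them are adjacent in any linearization of $\prec^*_t$ and yield the same value $\rho_t(\defpoint{x},\defpoint{x'})=\radrestricted{i}{j}$, so it is still a critical radius; and your use of Lemma~\ref{lem:event} tacitly assumes $M_1,M_2\neq\emptyset$, an edge case the paper's own proof also leaves untreated.
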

\begin{proof}
  Assume to the contrary that $\radrestricted{i}{j} \notin R$.  This
  implies no coincidence between the positions of the events in
  $M_1(r)$ and $M_2(r)$, respectively, at $r=\radrestricted{i}{j}$.
  That is, the intersection points
  $\bd D_{\radrestricted{i}{j}}(q) \cap
  \boundaryarc_t(\radrestricted{i}{j})$ are all distinct for all
  $q\in Q_I$.  Let $(c_1, c_2)$ be an optimal $(i, j)$-restricted
  two-center of $Q$.  And let
  $Q'_t := D_{\radrestricted{i}{j}}(c_t) \cap Q$ and
  $\mathcal{I}'_t := \bigcap_{q\in Q'_t} D_{\radrestricted{i}{j}}(q)$
  for each $t\in \{1, 2\}$.  We observe that the intersection
  $\mathcal{I}'_t$ of geodesic disks is nonempty
  since the positions of the events in $M_t(\radrestricted{i}{j})$ are
  all distinct on $\boundaryarc_t(\radrestricted{i}{j})$.  This
  implies that there exists a sufficiently small positive
  $\epsilon > 0$ such that
  $\bigcap_{q\in Q'_t} D_{\radrestricted{i}{j}-\epsilon}(q)$ is still
  nonempty.  We pick any point $c'_t$ from the intersection of
  shrunken disks.  Then, $D_{\radrestricted{i}{j}-\epsilon}(c'_t)$
  contains all points in $Q'_t$ for each $t\in\{1, 2\}$.  This
  contradicts to the minimality of $\radrestricted{i}{j}$.  Thus, the
  optimal radius $\radrestricted{i}{j}$ must be contained in $R$.
\end{proof}

Hence, $\radrestricted{i}{j}$ is exactly the smallest value $\rho \in R$
such that there exists an $(i, j, \rho)$-restricted two-center.
The last step of our optimization algorithm thus performs a binary search on $R$
using the decision algorithm.

This completes the description of the optimization algorithm and we conclude the following.
\begin{theorem} \label{thm:optimization_algorithm} Given a partition
  pair $(i, j)$ and an assistant interval $(r_L,r_U]$, an optimal
  $(i,j)$-restricted two-center of $Q$ can be computed in
  $O((m+n)\log^3(m+n)\log m)$ time using $O(m+n)$ space, provided that
  $r_L \leq \radrestricted{i}{j} \leq r_U$.
\end{theorem}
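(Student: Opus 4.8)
The plan is to assemble the pieces developed earlier in this section and account for their running times and space usage, showing that the overall optimization algorithm respects the claimed bounds. First I would invoke the decision algorithm (Theorem~\ref{thm:decision_algorithm}) on the input triple $(i,j,r_U)$ to settle whether $\radrestricted{i}{j} \leq r_U$; if the answer is ``no,'' we simply report $\radrestricted{i}{j} > r_U$, and otherwise we proceed with the assumption $\radrestricted{i}{j} \in (r_L, r_U]$. Writing $T_d = O((m+n)\log^2(m+n))$ for the time of one call of the decision algorithm, I would then narrow the assistant interval to $(\rho_L, \rho_U]$ via Lemma~\ref{lem:narrowed_interval}, which costs $O((m+n)\log^3(m+n))$ time and fixes the combinatorial structure of each $\bd \dintersection_t(r)$ and each $M_t(r)$ over this subinterval, so that we may write $M_t = M_t(r)$ for $r \in (\rho_L,\rho_U]$.

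Next I would bound the cost of sorting the events in each $M_t$ with respect to the order $\prec^*_t$ at $r = \radrestricted{i}{j}$. Using Cole's parallel sorting algorithm~\cite{parallel-sorting}, the sort proceeds in $O(\log m)$ rounds, each requiring $O(m)$ independent comparisons; each comparison of $x, x' \in M_t$ reduces either to evaluating positions at $r = \rho_L$ or $\rho_U$, or to computing the critical radius $\rho_t(\defpoint{x}, \defpoint{x'})$ in $O(\log(m+n)\log n)$ time by Lemma~\ref{lem:time_sorting} followed by a single decision-algorithm call. Batching the $O(m)$ comparisons of one round and resolving them by a binary search over their sorted critical radii gives $O(m \log(m+n)\log n + T_d \log m)$ per round, hence $O(m \log(m+n)\log m \log n + T_d \log^2 m)$ for the whole sort. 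Substituting $T_d$ yields $O((m+n)\log^3(m+n)\log m)$, which dominates the earlier terms.

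Finally, with the events sorted, I would form the set $R$ of critical radii together with $\rho_L$ and $\rho_U$; by Lemma~\ref{lem:critical_radii} the optimal radius $\radrestricted{i}{j}$ lies in $R$, and since $|R| = O(m)$ we can locate it by a binary search on $R$ that invokes the decision algorithm $O(\log m)$ times, costing $O(T_d \log m) = O((m+n)\log^3(m+n))$ time. Collecting all stages, the sorting step is the bottleneck, and the total is $O((m+n)\log^3(m+n)\log m)$. For space, every subprocedure---the decision algorithm, the Voronoi-diagram and shortest-path data structures of Guibas and Hershberger~\cite{shortest-path,Hersh-shortest-1991}, and the arrays maintained during the scans---uses $O(m+n)$ space, and these are reused rather than accumulated across stages, so the overall space remains $O(m+n)$. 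I expect the main obstacle to be verifying that the comparisons within each round of Cole's algorithm are genuinely independent and can be resolved by a single shared binary search over their critical radii without any comparison depending on the outcome of another in the same round; this is what makes the batched $T_d \log m$ bound per round valid rather than a naive $m \cdot T_d$.
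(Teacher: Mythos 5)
Your proposal is correct and follows essentially the same route as the paper's own proof: narrow the interval via Lemma~\ref{lem:narrowed_interval}, sort the events with Cole's parallel sorting algorithm whose per-round comparisons are batched and resolved by a binary search invoking the decision algorithm, then binary-search the set $R$ of critical radii (Lemma~\ref{lem:critical_radii}), yielding the bound $O(m\log(m+n)\log m\log n + T_d\log^2 m + T_d\log m) = O((m+n)\log^3(m+n)\log m)$ with $O(m+n)$ space. The independence of comparisons within each round of Cole's algorithm, which you flag as the main obstacle, is exactly the property the paper asserts and relies on in Section~\ref{sec:opt}.
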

\begin{proof}
The correctness follows from the arguments we have discussed above.
Thus, it computes an optimal $(i, j)$-restricted two-center of $Q$ correctly
if $\radrestricted{i}{j} \in (r_L,r_U]$; otherwise, it reports that $\radrestricted{i}{j} > r_U$.

The time and space complexities of our optimization algorithm are bounded as follows.
First, in Lemma~\ref{lem:critical_radii}, we spend $O((m+n) \log^3(m+n))$ time and $O(m+n)$ space
to compute $(\rho_L,\rho_U]$.
Second, we sort $M_t$ in $O(m\log (m+n) \log m \log n + T_d \log^2 m)$ time,
where $T_d$ denotes the time spent by the decision algorithm.
Finally, we compute the set $R$ of critical radii in $O(m \log(m+n) \log n)$ time
and do a binary search on $R$ in $O(T_d \cdot \log m)$ time, since $|R| = O(m)$.
We have $T_d = O((m+n) \log^2 (m+n))$ using $O(m+n)$ space by Theorem~\ref{thm:decision_algorithm}.
Thus, the total time complexity is bounded by $O((m+n)\log^3(m+n)\log m )$,
and the space complexity is bounded by $O(m+n)$.
\end{proof}

\section{Computing an Optimal Two-Center of Points}
\label{sec:algorithm}

Finally, we present an algorithm that computes an optimal two-center of $Q$ with
respect to $P$. As the optimization algorithm described in Section~\ref{sec:opt}
works with a fixed partition pair, we can find an optimal two-center by
trying all partition pairs $(i, j)$ % already implies such an algorithm,
once an assistant interval $(r_L,r_U]$ is computed.
In the following, we show how to choose $O(m)$ partition pairs one of which is
% that are guaranteed
% to include
an optimal pair.

\subsection{Finding Candidate Pairs}
\label{sec:candidate}
In this subsection, we choose $O(m)$ pairs of indices, which we will call candidate partition pairs,
such that one of them is an optimal partition pair.
We will see that these pairs are obtained in a way similar to
an algorithm described~\cite{2-centersimple}.
Oh et al.~\cite{2-centersimple} presented an algorithm for computing
a geodesic two-center of a simple $n$-gon $P$.
This problem is equivalent to computing two  geodesic disks $D_1$ and $D_2$
of the minimum radius
whose union covers the whole polygon $P$.
%To do this, they use an approach similar to ours.
For the purpose, they compute a set of $O(n)$ pairs $(e,e')$ of edges of $P$
such that both $e$ and $e'$ intersect the intersection of
$D_1$ and $D_2$.
The only difference in this paper is that we consider the extreme vertices of $Q$ while the algorithm in~\cite{2-centersimple} considers the vertices of the input polygon.

We define \emph{candidate (partition) pairs} as follows.
Recall that the extreme points of $Q$ are sorted in clockwise order along $\CHPQ$.
We denote the sequence by $\langle v_1,\ldots,v_k\rangle$ with $1<k\leq m$.
We use an index $i$ and its corresponding vertex $v_i$ interchangeably.
For instance, we sometimes say that an index comes before another index along $\bd \CHPQ$ in clockwise order from an index.
For two indices $i$ and $j$, we use $\radbd{i}{j}$
to denote the radius of $\subpolygon{i}{j}$.
Let $f(\cdot)$ be the function which maps each index $i$ with $1\leq i\leq k$ to
the set of indices $j$ that minimize $\maxrad{i}{j}$.
It is possible that there is more than one index $j$ that minimizes
$\maxrad{i}{j}$. Moreover, such vertices
appear on the boundary of $P$ consecutively.

We use $f_{cw}(i)$ to denote the set of all indices that come
after $i$ and before any index in $f(i)$ in clockwise order.
Similarly, we use $f_{ccw}(i)$ to denote the set of all indices that
come after $i$ and before any vertex in $f(i)$ in counterclockwise order.
The three sets $f_{ccw}(i)$, $f(i)$ and $f_{cw}(i)$ are pairwise disjoint
by the fact that $i \notin f(i)$ and
by the monotonicity of $\radbd{i}{\cdot}$ and $\radbd{\cdot}{i}$.

For an index $k$,
let $\vcw{k}$ be the last index of $f_{cw}(v_k)$ from $v_k$ in clockwise order
and $\vccw{k}$ be the first index of $f_{ccw}(v_{k})$ from $v_{k}$
in clockwise order.
Given an index $i$, an index $j$ is called a
\emph{candidate index of $i$} if it belongs to one of the following
two types:
\begin{enumerate}[(1)]
	\item $\{j, j+1\} \cap \{ \vccw{i}, \vcw{i+1}\} \neq \emptyset$.
%Indices $j$ or $j+1$ coincides with $\vccw{i}$ or $\vcw{i+1}$.
	\item
	Both $v_j$ and $v_{j+1}$ lie on the chain
	$\subchain{\vccw{i}}{\vcw{i+1}}$
	with $\{j, j+1\} \cap \{ \vccw{i}, \vcw{i+1}\} = \emptyset$,
	and $\vccw{i}$ comes before $\vcw{i+1}$ from $i$ in clockwise order.
\end{enumerate}
A pair $(i, j)$ of indices is called a \emph{candidate partition pair} if
$j$ is a candidate index of $i$.

\begin{lemma}
	\label{lem:candidate}
	The number of candidate pairs is $O(m)$, and all candidate pairs can be computed in $O(m(m+n)\log (m+n))$ time using $O(m+n)$ space.
\end{lemma}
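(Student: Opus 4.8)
The plan is to follow the strategy of Oh~et~al.~\cite{2-centersimple}, first exposing a monotonicity structure of the radius function $\radbd{\cdot}{\cdot}$ and then turning it into both a counting bound and a sliding-window computation. I would begin by recording the basic monotonicity: for a fixed index $i$, as $j$ moves clockwise away from $i$ the chain $\subchain{i}{j}$ only grows, so $\subpolygon{i}{j}$ grows by inclusion and hence $\radbd{i}{j}$ is non-decreasing, while symmetrically $\radbd{j}{i}$ is non-increasing. Thus $\maxrad{i}{j}$ is unimodal in $j$, which is exactly what makes $f(i)$ a consecutive block and $\vcw{\cdot},\vccw{\cdot}$ well defined, as the indices bracketing $f(i)$ just before and just after it in clockwise order. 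The crucial refinement is the monotonicity of $f$ itself: advancing $i$ to $i+1$ replaces $\subpolygon{i}{j}$ by its subset $\subpolygon{i+1}{j}$ and the complementary subpolygon by a superset, so $\radbd{i}{j}$ can only decrease and $\radbd{j}{i}$ can only increase pointwise; evaluating the balance at the old minimizer then shows the new minimizer cannot lie counterclockwise of the old one. Hence the block $f(i)$, and with it $\vccw{i}$ and $\vcw{i+1}$, advance monotonically clockwise as $i$ advances.

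Next I would bound the number of candidate pairs. Type~(1) pairs are immediate: for each $i$ the index $j$ must satisfy $j\in\{\vccw{i}-1,\vccw{i},\vcw{i+1}-1,\vcw{i+1}\}$, which is $O(1)$ choices, hence $O(m)$ in total. For Type~(2) pairs the count for a fixed $i$ is the number of extreme points strictly inside the arc $\subchain{\vccw{i}}{\vcw{i+1}}$, and this arc is nonempty only when $\vccw{i}$ precedes $\vcw{i+1}$. Here I would invoke the monotonicity of $f$: since $\vccw{i}$ sits just after the block $f(i)$ and $\vcw{i+1}$ just before the block $f(i+1)$, this arc is precisely the forward gap between two consecutive antipodal blocks. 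As $i$ makes one full turn the blocks $f(i)$ sweep clockwise around $\bd\CHPQ$ a constant number of times by the monotonicity just established, so these forward gaps overlap only $O(1)$ times; a charging argument over the extreme points then gives $\sum_i |\subchain{\vccw{i}}{\vcw{i+1}}| = O(m)$, and therefore $O(m)$ candidate pairs altogether.

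Finally, for the running time I would exploit the same monotonicity to avoid recomputing $f(i)$ from scratch. Because $f(i)$, $\vccw{i}$, and $\vcw{i+1}$ move only clockwise as $i$ increases, I would maintain a pointer tracking the $f$-block and advance it monotonically around $\bd\CHPQ$, so that over all $i$ the pointer moves $O(m)$ times in total. Each advance requires deciding the balance between $\radbd{i}{j}$ and $\radbd{j}{i}$, i.e.\ computing the geodesic radius of a subpolygon $\subpolygon{i}{j}$; using the geodesic one-center routine this costs $O((m+n)\log(m+n))$ once the relevant shortest-path structure is set up, and only $O(m)$ such evaluations are performed, yielding the $O(m(m+n)\log(m+n))$ time bound. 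The working space stays $O(m+n)$ since we keep only the current geodesic structures. I expect the main obstacle to be a careful proof of the monotonicity of $f$ and of the claim that the bracketing indices $\vccw{i},\vcw{i+1}$ move monotonically, since the weakly simple, non-convex nature of $\CHPQ$ and the cyclic wraparound of the index sequence make this ``antipodal'' map delicate; both the $O(m)$ counting via the tiling of forward gaps and the $O(m)$ bound on total pointer movement hinge entirely on getting this monotonicity right.
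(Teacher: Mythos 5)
Your plan replaces both halves of the paper's argument with a rotating-calipers-style sweep: you want monotonicity of the antipodal map $f$ itself across indices $i$, and you use it twice — to charge the type-(2) pairs to the forward gaps between consecutive blocks, and to drive a sliding-window computation with $O(m)$ amortized radius evaluations. The paper needs neither. For the counting of type-(2) pairs it proves an injectivity statement — each index $j$ is a type-(2) candidate index of at most \emph{one} $i$ — by a direct contradiction argument on the chains $\subchain{\vccw{i}}{\vcw{i+1}}$ and $\subchain{\vccw{i'}}{\vcw{i'+1}}$; and for the computation it simply runs, for each of the $k\le m$ indices independently, a binary search for $\vccw{i}$ and $\vcw{i}$ using only the fixed-$i$ monotonicity of $\radbd{i}{\cdot}$ and $\radbd{\cdot}{i}$, at $O((m+n)\log(m+n))$ per index. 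That per-index budget already matches the claimed total bound, so no amortization across indices is needed, and correctness never depends on how $f(i)$ moves as $i$ advances.

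The genuine gap is the monotonicity lemma you yourself flag as the main obstacle: it is not proved, and in the form you state it — ``the new minimizer cannot lie counterclockwise of the old one'' — it is false. The set $f(i)$ is a block of minimizers, and when radii tie, $f(i+1)$ can extend counterclockwise past elements of $f(i)$: if $\radbd{j}{i+1}$ is constant over a stretch of $j$ at exactly the new optimal value (which happens whenever the radius of $\subpolygon{j}{i+1}$ is determined by vertices clockwise of that whole stretch), all those $j$ enter $f(i+1)$ even though they were strictly suboptimal for $i$. What your pointwise comparison ($\radbd{i+1}{j}\le\radbd{i}{j}$ and $\radbd{j}{i+1}\ge\radbd{j}{i}$, which is correct) actually yields is only that \emph{some} element of $f(i+1)$ lies at or clockwise of every element of $f(i)$; equivalently, the clockwise-\emph{last} element of the block advances monotonically, while the clockwise-\emph{first} element — which is exactly what $\vcw{i+1}$ is defined from — can move backwards. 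Note that the paper's general position assumption only forbids a vertex of $P$ from being equidistant to two points of $Q$; it does not exclude these radius ties, so they cannot be assumed away. Your charging bound can be repaired by bounding each gap by the movement of the clockwise-last endpoint, but your algorithm must still locate $\vcw{i+1}$, whose total backward movement you have not bounded, and the claim that the blocks wind around $\bd\CHPQ$ only $O(1)$ times per turn of $i$ also needs a proof (this is the cyclic-wraparound issue you mention). None of these holes is obviously fatal, but each requires an argument, and the paper's proof is structured precisely so that none of them arises.
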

\begin{proof}
Since $\vccw{i}$ and $\vcw{i}$ are uniquely defined
for each index $1 \leq i \leq k$,
the total number of candidate pairs of type (1) is at most $4k \leq 4m$.

Now we consider the candidate pairs of type (2).
Assume that for an index $j$ with $1 \leq j \leq k$
there are two distinct indices $i$ and $i'$
such that $j$ is a candidate index of type (2) of both $i$ and $i'$,
that is, both $(i, j)$ and $(i', j)$ are candidate pairs of type (2).
Without loss of generality, we assume that $v_i$ comes before $v_{i'}$
in clockwise order from $v_j$.
Since they are of type (2),
both $v_j$ and $v_{j+1}$ are contained in the intersection of
$\subchain{\vccw{i}}{\vcw{i+1}}$
and $\subchain{\vccw{i'}}{\vcw{i'+1}}$.

We argue that $v_j$ lies on $\subchain{v_{i+1}}{\vcw{i+1}}$.
Suppose that $v_j \in \subchain{\vcw{i+1}}{v_{i+1}} \setminus \{ \vcw{i+1}, v_{i+1}\}$, for the sake of a contradiction.
Then, since $e_k$ is contained in $\subchain{\vccw{i}}{\vcw{i+1}}$,
the vertex $v_j$ lies in the interior of $\subchain{v_{k+1}}{v_{i+1}}$.
This contradicts the fact that $e_i$ comes before $e_j$ in clockwise order
from $e_k$.
Therefore, $v_{i'}$ lies on $\subchain{v_{i+1}}{\vcw{i+1}}$,
which implies that $v_{i'} \in f_{cw}(v_{i+1})$.
Consequently,
$\vccw{i'}$ lies in $\subchain{\vcw{i+1}}{v_{i'}}$.
Since both $v_j$ and $v_{j+1}$ are contained in
$\subchain{\vccw{i}}{\vcw{i+1}}$,
$\vcw{i+1}$ lies in
$\subchain{v_{j+1}}{v_i}$.
This implies that either $v_j$ or $v_{j+1}$ is not contained in
$\subchain{\vccw{i'}}{\vcw{i'+1}}$, which is a contradiction.
Therefore, for each index $1 \leq j \leq k$,
there is at most one index $i$ such that
$j$ is a candidate index of $i$ that is of type (2).
This implies that there are at most $m$ candidate pairs of type (2).

Now, we present an algorithm that computes all candidate pairs
in $O(m(m+n)\log (m+n))$ time using $O(m+n)$ space.
First, $\vccw{i}$ and $\vcw{i}$ can be computed in
$O((m+n)\log (m+n))$ time for each index $1\leq i \leq k$:
this can be done
by a binary search based on the monotonicity of $\radbd{i}{j}$
using a linear-time algorithm computing the radius
of a simple polygon~\cite{1-center}.
Thus, it takes $O(k(m+n)\log (m+m)) = O(m(m+n)\log(m+m))$ time
for computing  $\vccw{i}$ and $\vcw{i}$ for all indices $1\leq i\leq k$.

Next, we compute the set of all candidate pairs.
For each index $i$, we collect all indices $j$ such that
both $v_j$ and $v_{j+1}$ lie on the chain
$\subchain{\vccw{i}}{\vcw{i+1}}$
with $\{j, j+1\} \cap \{ \vccw{i}, \vcw{i+1}\} = \emptyset$
if $\vccw{i}$ comes before $\vcw{i+1}$ from $i$ in clockwise order.
Otherwise, we collect the four indices:
$\vccw{i}$, $\vcw{i+1}$, $\vccw{i} - 1$, and $\vcw{i+1} - 1$.
The indices collected in the former case correspond to
candidate indices of $i$ that are of type (2),
while those in the latter case correspond to candidate indices of $i$
that are of type (1).
This takes only time linear to the number of candidate pairs.

Hence, the total time spent for computing all candidate pairs
is $O(m(m+n) \log (m+n))$.
The space complexity is bounded by $O(m+n)$.
\end{proof}

The following is the key observation on candidate pairs.
\begin{lemma}
	\label{lem:candidate_pairs}
	There exists an optimal partition pair that is a candidate pair.
\end{lemma}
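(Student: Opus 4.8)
The plan is to reduce the statement to finding a candidate pair whose restricted radius equals the global optimum. First I would observe that $(i,j)$ is an optimal partition pair exactly when $\radrestricted{i}{j}=r^*$, where $r^*=\min_{i',j'}\radrestricted{i'}{j'}$: a partition pair of the disks of an optimal two-center clearly achieves $r^*$, and conversely an $(i,j)$-restricted two-center realizing $r^*$ is itself optimal. So I fix an optimal two-center $(c_1,c_2)$ with radius $r^*$ and an optimal partition pair $(i,j)$, and I aim to produce a candidate index $j'$ of $i$ with $\radrestricted{i}{j'}=r^*$. The central reusability observation is that if I reassign to the other chain only those extreme points lying in the common intersection $D_{r^*}(c_1)\cap D_{r^*}(c_2)$, then the \emph{same} disks continue to witness the new partition: any swept point sits in both disks, so $\subchainleft{i}{j'}\subset D_{r^*}(c_1)$ and $\subchainright{i}{j'}\subset D_{r^*}(c_2)$ still hold and $Q\subset D_{r^*}(c_1)\cup D_{r^*}(c_2)$ is untouched. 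Hence $\radrestricted{i}{j'}=r^*$ and $(i,j')$ is again an optimal partition pair. By geodesic convexity of the disks together with the pseudo-disk property (Lemma~\ref{lem:pseudo}), the set of extreme points available for such reassignment forms a contiguous arc, so the set of valid second cuts $j'$ is an interval; the whole task becomes showing that this valid interval meets the candidate window.

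Next I would set up the two ingredients needed to control that window. Since $D_{r^*}(c_1)$ and $D_{r^*}(c_2)$ are geodesically convex and contain $\subpolygon{v_{j+1}}{v_i}$ and $\subpolygon{v_{i+1}}{v_j}$ respectively, I obtain the radius bounds $\radbd{j+1}{i}\le r^*$ and $\radbd{i+1}{j}\le r^*$, and the same holds at every valid second cut. I would then record the monotonicity that drives everything: enlarging a boundary chain (sweeping one endpoint clockwise) can only enlarge its geodesic convex hull, so $\radbd{\cdot}{\cdot}$ is monotone and $\maxrad{i}{\cdot}$, $\maxrad{i+1}{\cdot}$ are unimodal along $\bd\CHPQ$. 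This is exactly what makes $f(i)$, $f_{cw}(i)$, $f_{ccw}(i)$ contiguous and guarantees that $\vccw{i}$ and $\vcw{i+1}$ are well defined and bound a genuine arc.

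The heart of the argument is then to show that the valid interval of second cuts meets $\subchain{\vccw{i}}{\vcw{i+1}}$. The window endpoints are, by design, one step short of the minimax balance points: $\vcw{i+1}$ sits just before $f(i+1)$ and $\vccw{i}$ just after $f(i)$. Using the bounds $\radbd{i+1}{j'}\le r^*$ and $\radbd{j'+1}{i}\le r^*$ and the unimodality of $\maxrad{i+1}{\cdot}$, I would argue that a valid cut cannot reach past $f(i+1)$ clockwise, since beyond $f(i+1)$ the chain $\subchain{v_{i+1}}{v_{j'}}$ overloads $D_{r^*}(c_2)$ relative to the balanced value $\maxrad{i+1}{f(i+1)}$; symmetrically it cannot reach past $f(i)$ counterclockwise. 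Whenever $\vccw{i}$ precedes $\vcw{i+1}$ this places a valid cut strictly inside the window, giving a type-(2) candidate; when the window collapses (i.e. $\vccw{i}$ comes after $\vcw{i+1}$) the same bounds pin the valid cut onto one of the four indices $\vccw{i}$, $\vcw{i+1}$, $\vccw{i}-1$, $\vcw{i+1}-1$, giving a type-(1) candidate. Either way I obtain an optimal partition pair that is a candidate pair.

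The hard part will be this last matching step: converting the radius bounds at the disk determinators (extreme points at geodesic distance exactly $r^*$ from $c_1$ or $c_2$) into the purely combinatorial statement that a valid cut lands at or inside $\subchain{\vccw{i}}{\vcw{i+1}}$, while correctly handling the one-index discrepancy between the edge cuts $e_i=(v_i,v_{i+1})$ and the shared-vertex cuts used to define $\radbd{\cdot}{\cdot}$ and $\maxrad{\cdot}{\cdot}$, and the clockwise/counterclockwise wrap-around case distinction that separates type (1) from type (2). The remaining steps are routine applications of monotonicity and the reuse-of-disks observation once this sandwiching is made precise.
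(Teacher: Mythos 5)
There is a genuine gap, and it sits exactly where you flagged ``the hard part'': the sandwiching step is the entire content of the lemma, and the route you sketch for it breaks in two ways. First, a logical mismatch: your two claims (a valid cut ``cannot reach past $f(i+1)$ clockwise'' nor ``past $f(i)$ counterclockwise'') are \emph{upper} bounds on how far the valid interval extends, so at best they show the valid cuts lie in the arc from $f(i)$ to $f(i+1)$ \emph{inclusive of both sets}. What the lemma needs is that the valid interval \emph{meets} the strictly smaller window $\subchain{\vccw{i}}{\vcw{i+1}}$ (or its type-(1) boundary), and containment in the larger arc does not give this: the paper notes that the minimizers $f(\cdot)$ of $\maxrad{i}{\cdot}$ need not be unique, so all valid cuts could lie inside $f(i)$ itself, in which case $(i,j')$ is neither a type-(1) nor a type-(2) candidate pair. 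You would need \emph{lower} bounds (the valid interval reaches at least the last index of $f(i)$, and its counterclockwise end is at or before $\vcw{i+1}$), which is a different statement. Second, the claims themselves do not follow from the inequalities you invoke. You only know $\radbd{i+1}{j'}\le r^*$ and $\radbd{j'+1}{i}\le r^*$; but $r^*$ depends on all of $Q$, including the interior points $Q_I$, whereas $f(i+1)$ and the balanced value $\maxrad{i+1}{l}$ for $l\in f(i+1)$ are determined by extreme points alone. When $Q_I$ forces $r^*$ to exceed that balanced value, a cut past $f(i+1)$ ``overloads'' nothing relative to $r^*$, and the intended contradiction evaporates; valid cuts can then genuinely lie past $f(i+1)$.

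The deeper structural problem is that you freeze the first index $i$ of an arbitrary optimal partition pair and only slide $j'$. The paper's proof deliberately does not do this: it introduces continuous cut points $\alpha\in\pi(v_i,v_{i+1})$ and $\beta\in\pi(v_j,v_{j+1})$, and among all pairs $(\alpha',\beta')$, ordered $\alpha,\alpha',\beta',\beta$ clockwise, whose counterclockwise part has radius at most that of the clockwise part, it selects the pair minimizing the length of $\subchain{\alpha'}{\beta'}$, thereby re-choosing \emph{both} indices $i$ and $j$. That extremal choice is precisely what makes the two key inequalities provable: $\radbd{j+1}{i}\le\radbd{i}{j+1}$ (from the enforced radius ordering) and $\radbd{i+1}{j}\le\radbd{j}{i+1}$ (by contradiction, since otherwise the pair $\tilde\alpha=v_{i+1}$, $\tilde\beta$ just before $v_j$ would either shorten the chain within the family or beat $r^*$). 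These inequalities then place $\vccw{i}\in\subchain{i+1}{j+1}$ and $\vcw{i+1}\in\subchain{j}{i}$, i.e., they sandwich $j$ itself inside the window, with no need for any disk-overlap sliding. Your fixed-$i$ strategy has no counterpart of this re-selection mechanism, and for an arbitrary first index $i$ there is no reason the window of that particular $i$ is reachable by moving $j'$ alone; to repair the argument you would either have to prove that fixed-$i$ statement by a genuinely new argument, or adopt the paper's device of perturbing both cuts.
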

\begin{proof}
	Let $(c_1, c_2)$ be an optimal two-center of $Q$ with respect to $P$
and let $r^*$ be the smallest radius such that $Q$ is contained in $D_{r^*}(c_1) \cup D_{r^*}(c_2)$.
Let $(i,j)$ be the optimal partition pair of $D_{r^*}(c_1)$ and $D_{r^*}(c_2)$.
There is a pair $(\alpha, \beta)$ of points with $\alpha \in
\pi(v_{i },v_{i +1})$ and $\beta \in \pi(v_{j },v_{j +1})$ such that
the radii of $Q \cap \subpolygon{\beta}{\alpha}$ and $Q\cap\subpolygon{\alpha}{\beta}$, say
$r_1$ and $r_2$, are at most $r^*$.

Without loss of generality, we assume that $r_1\leq r_2=r^*$.
Consider every pair $(\alpha',\beta')$ of points in $\subchain{\alpha}{\beta}$ such that
the radius of $Q\cap \subpolygon{\beta'}{\alpha'}$ is at most the radius of $Q\cap \subpolygon{\alpha'}{\beta'}$,
and $\alpha$, $\alpha'$, $\beta'$ and $\beta$ lie in clockwise order along $\bd \CHPQ$.
Among them, we choose the one $(\alpha',\beta')$ that minimizes the length of $\subchain{\alpha'}{\beta'}$,
and redefine $(\alpha,\beta)$ to be this pair.
Then we redefine $i, j$ and $r_1, r_2$ accordingly. By construction, we still have $r_1\leq r_2=r^*$.
We claim the followings:
\begin{enumerate}[1.] \denseitems
	\item $\vccw{i}\in\subchain{i+1}{j+1}$, and
	\item $\vcw{i+1}\in\subchain{j}{i}$.
\end{enumerate}

Recall that $\radbd{i}{j}$ is the radius of $\subpolygon{i}{j}$.
Claim 1 holds because
\[\radbd{j +1}{i}\leq r_1\leq r_2\leq \radbd{i}{j+1}.\]
The first inequality holds because $\subpolygon{j+1}{i}$ is contained in the minimum-radius geodesic disk
containing $Q\cap \subpolygon{\beta}{\alpha}$.
The last inequality holds because $Q\cap\subpolygon{\alpha}{\beta}$ is contained in $\subpolygon{i}{j+1}$.

For Claim 2, we observe that $\radbd{i+1}{j}$ is at most $\radbd{j}{i+1}$.
Assume to the contrary that $\radbd{i+1}{j}>\radbd{j}{i+1}$.
Then we have
\[r_1\leq\radbd{\beta}{\alpha}\leq
\radbd{j}{i+1}< \radbd{i+1}{j}\leq r_2.\]
The first inequality holds simply because $Q\cap\subpolygon{\beta}{\alpha}$ is contained in  $\subpolygon{\beta}{\alpha}$.
The second inequality holds because  $Q\cap\subpolygon{\beta}{\alpha}$ is contained in $\subpolygon{j}{i+1}$.
The third inequality holds by the assumption.
The last inequality holds $\subpolygon{i+1}{j}$ is contained in the minimum-radius enclosing geodesic disk containing $Q\cap \subpolygon{\alpha}{\beta}$. Thus, we have $\radbd{j}{i+1}\leq r_2$.
Let $\tilde{\alpha}$ be $v_{i+1}$ and $\tilde{\beta}$ be a point on  $\pi(v_{j-1},v_{j})$ and sufficiently close to $v_{j}$.
The radius of $Q\cap \subpolygon{\tilde{\alpha}}{\tilde{\beta}}$ is at most $c_2$ by definition,
and the radius of $Q\cap \subpolygon{\tilde{\beta}}{\tilde{\alpha}}$ is at most $\radbd{j}{i+1}$, which is less than $c_2$.
This contradicts the construction of $(\alpha,\beta)$.
Therefore,  $\radbd{i+1}{j}$ is at most $\radbd{j}{i+1}$, and Claim 2 holds.

Due to the two claims, $\vcw{i+1}$ appears after $\vccw{i}$ as
we move clockwise from $v_{i}$ along $\bd \CHPQ$ unless $(i,j)$ is a candidate partition pair of type (1).
Moreover, both $v_{j}$ or $v_{j+1}$ lie in the interior of $\subchain{\vccw{i}}{\vcw{i+1}}$.
Therefore, $(i, j)$ is a candidate partition pair of type (2) unless it is a candidate partition pair of type (1).
\end{proof}

\subsection{Applying the Optimization Algorithm for Candidate Pairs}
To compute the optimal radius $r^*$, we apply the optimization
algorithm in Section~\ref{sec:opt} on the set of all candidate pairs.
Let $C$ be the set of candidate pairs.
By Lemma~\ref{lem:candidate_pairs}, we have
$r^* = \min_{(i,j)\in C}{\radrestricted{i}{j}}$.
To apply the optimization algorithm, we have to compute an assistant interval $(r_L,r_U]$
satisfying that $r^* \in (r_L,r_U]$ and the combinatorial structure of $\bd D_r(q)$ for each $q\in Q$
remains the same for all $r\in (r_L,r_U]$.
\begin{lemma} \label{lem:assistant_interval}
 An assistant interval $(r_L,r_U]$ such that the combinatorial structure of
 $\bd D_r(q)$ for each point $q\in Q$ remains the same can be computed in
 $O(m(m+n)\log^3(m+n))$ time using $O(m+n)$ space.
\end{lemma}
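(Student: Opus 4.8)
The plan is to pin down $r^* = \min_{(i,j)\in C}\radrestricted{i}{j}$ between two consecutive \emph{critical radii} --- radii at which the combinatorial structure of some $\bd D_r(q)$ changes --- while invoking the decision algorithm of Theorem~\ref{thm:decision_algorithm} only $O(\log(m+n))$ times and never storing more than $O(m+n)$ critical radii at once. First I would characterize the critical radii. For a fixed $q\in Q$, the boundary $\bd D_r(q)$ is a concatenation of circular arcs centered at $q$ or at vertices of $\CHPQ$ that the wavefront from $q$ has already passed, together with segments along $\bd\CHPQ$; its combinatorial type changes only when the wavefront reaches a vertex of $\CHPQ$, i.e.\ at a radius in $S_q := \{\, d(q,v) : v \text{ a vertex of } \CHPQ \,\}$. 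Since $\CHPQ$ has $O(m+n)$ vertices, $|S_q| = O(m+n)$, and $S_q$ is read off from the shortest-path tree rooted at $q$ in $O((m+n)\log(m+n))$ time and $O(m+n)$ space using \cite{shortest-path}. The full set $S = \bigcup_{q\in Q} S_q$ thus has size $O(m(m+n))$, and any half-open interval $(r_L,r_U]$ whose interior avoids $S$ keeps every $\bd D_r(q)$ combinatorially fixed over $r\in(r_L,r_U]$. It therefore suffices to find the consecutive pair $\rho_a<\rho_b$ in the sorted order of $S$ with $r^*\in(\rho_a,\rho_b]$ and output $(r_L,r_U]=(\rho_a,\rho_b]$.

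Second, I would build a monotone global decision oracle $\mathsf{GDEC}(r)$ that returns ``yes'' iff $r\ge r^*$: because $r^*=\min_{(i,j)\in C}\radrestricted{i}{j}$, run the decision algorithm of Theorem~\ref{thm:decision_algorithm} on $(i,j,r)$ for every candidate pair $(i,j)\in C$ and answer ``yes'' iff at least one call does. By Lemma~\ref{lem:candidate} there are $O(m)$ candidate pairs, so a single oracle call costs $O(m(m+n)\log^2(m+n))$ time and $O(m+n)$ space. Locating the ``no''-to-``yes'' transition of $\mathsf{GDEC}$ inside $S$ is exactly locating the desired pair $(\rho_a,\rho_b]$.

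The key step, and the hard part, is to carry out this search in $O(m+n)$ space and with only $O(\log(m+n))$ oracle calls, even though $|S|=\Theta(m(m+n))$ values can neither be stored nor tested one by one. I would run an implicit weighted-median search, in the style of Frederickson--Johnson and Megiddo, over the $m$ implicitly sorted lists $S_q$. Maintain $(r_L,r_U]\ni r^*$, initialized to $(0,R_0]$ with sentinel bounds, where $R_0$ is the radius of the smallest geodesic disk containing $\CHPQ$ (computable in $O(m+n)$ time by \cite{1-center}); note $r^*\le R_0$. In each round, for every $q$ I recompute $S_q$ and use linear-time selection to obtain in $O(m+n)$ time the count $w_q := |S_q\cap(r_L,r_U]|$ and the median $\mu_q$ of that subset, then compute the weighted median $\mu$ of $\{\mu_q\}$ with weights $\{w_q\}$ and evaluate $\mathsf{GDEC}(\mu)$. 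If the answer is ``yes'' I set $r_U\gets\mu$, otherwise $r_L\gets\mu$; in either case $r^*$ stays in $(r_L,r_U]$, and the weighted-median property guarantees that a constant fraction of $\sum_q w_q$ falls outside the new interval and is discarded. After $O(\log|S|)=O(\log(m+n))$ rounds only $O(1)$ critical radii remain active, and $O(1)$ further oracle calls isolate $(\rho_a,\rho_b]$.

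Finally I would total the cost. There are $O(\log(m+n))$ rounds; each spends $O(m(m+n)\log(m+n))$ on the $m$ shortest-path-tree recomputations and selections plus one oracle call of $O(m(m+n)\log^2(m+n))$, so the oracle calls dominate and give $O(m(m+n)\log^3(m+n))$ time overall. The space never exceeds $O(m+n)$, since only one list $S_q$ and the $O(m)$ weights live in memory at any moment; in particular I deliberately \emph{recompute} each $S_q$ per round rather than storing all of them, which is what keeps the space linear. The parts that need genuine care are verifying the constant-fraction discard guarantee of the weighted-median step and confirming that the per-round medians and counts can be extracted by linear selection without any persistent sorted storage of $S$.
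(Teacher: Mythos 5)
Your proof has the same skeleton as the paper's: per-point sets of critical radii, a monotone global oracle obtained by running the decision algorithm of Theorem~\ref{thm:decision_algorithm} on all $O(m)$ candidate pairs, and a median-based parametric search that recomputes each per-point list on the fly so that only $O(m+n)$ values are ever stored. Your Frederickson--Johnson weighted median is a legitimate (and in fact more rigorously justified) substitute for the paper's plain median-of-medians, and your cost accounting --- $O(\log(m+n))$ rounds, each dominated by one oracle call of cost $O(m(m+n)\log^2(m+n))$ --- matches the paper's bound.

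However, there is a genuine gap in your first step, the characterization of critical radii, and everything downstream inherits it. You claim that the combinatorial type of $\bd D_r(q)$ changes only when the wavefront reaches a vertex of $\CHPQ$, i.e.\ only at radii in $S_q=\{d(q,v): v \text{ a vertex of } \CHPQ\}$. This is false. When $D_r(q)$ first becomes tangent to the relative interior of an edge $e$ of $\bd\CHPQ$, a new boundary segment (a portion of $e$) appears on $\bd D_r(q)$ and the circular arc that reached $e$ splits into two arcs whose endpoints now lie on $e$; this is a combinatorial change in the sense required by the lemma, and it occurs at the geodesic distance from $q$ to the point of $e$ closest to $q$, which is a perpendicular foot interior to $e$ and generically not equal to $d(q,v)$ for any vertex $v$. (Simplest instance: $\CHPQ$ a triangle and $q\in Q_I$ an interior point --- the growing disk meets each edge at a perpendicular foot well before it reaches any vertex.) A second missed family of events: an arc anchored at a reflex vertex $a$ disappears when the wavefront sweeps past the farthest point of $a$'s cell in $\spm{q}$, and that point can lie in the interior of an edge of $\bd\CHPQ$ (a window endpoint of the map), again not a vertex of $\CHPQ$. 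Consequently an interval whose interior avoids your $S$ can still contain radii at which some $\bd D_r(q)$ changes combinatorially, so the interval you output need not satisfy the lemma, and the later steps that rely on this invariance (Lemmas~\ref{lem:narrowed_interval} and~\ref{lem:time_sorting}) would break. This is precisely why the paper takes the critical radii to be $d(q,x)$ over all vertices $x$ of the shortest path map $\spm{q}$, observing that combinatorial changes occur exactly when $D_r(q)$ touches a vertex \emph{or an edge} of the polygon. The repair is local --- replace your $S_q$ by these $O(m+n)$ distances per point, which changes nothing in your counting, selection, or oracle steps --- but as written the characterization step fails.
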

\begin{proof}
Let $q \in Q$ and $r \geq 0$.
Recall that the boundary $\bd D_r(q)$ of geodesic disk $D_r(q)$ consists of
boundary arcs and line segments that are portions of $\bd P$.
A boundary arc is a circular arc with endpoints lying
on an edge of $P$, of $\fvd{Q_1}$, or of $\fvd{Q_2}$,
while each non-extreme part is a portion of an edge of $P$.
Hence, the combinatorial structure of $\bd D_r(q)$ can be represented by
a (cyclic) sequence of edges of $P$, $\fvd{Q_1}$, and $\fvd{Q_2}$ on which each endpoint of
the boundary arcs or the line segments in $\bd D_r(q)$ lies.

Imagine we blow the geodesic disk $D_r(q)$ by increasing $r$ from $0$ continuously.
The combinatorial structure of $D_r(q)$ changes
exactly when a new boundary arc or non-extreme segment appears
or an existing arc or segment disappears.
We observe that such a change occurs exactly when
when the disk $D_r(q)$ touches a vertex $v$ of $P$
or an edge of $P$.
Both cases can be captured by the shortest path map $\spm{q}$ of $q$:
when $D_r(q)$ touches a vertex $x$ of $\spm{q}$,
or equivalently, when $r = d(q, x)$ for some vertex $x$ of $\spm{q}$.
This implies that for an assistant interval $(r_L,r_U]$ with the desired property,
there is no $q \in Q$ and no vertex $x$ of $\spm{q}$ such that
$d(q, x) \in (r_L,r_U]$.

In order to compute such an interval,
we initially set $r_L$ to $0$ and $r_U$ to $\infty$,
and perform a binary search on the set of values $d(q, x)$ for  all $q\in Q$
and all vertices $x$ of $\spm{q}$ as follows:
\begin{enumerate} [(i)] \denseitems
\item For each $q\in Q$, build the shortest path map $\spm{q}$,
 compute $d(q, x)$ for all vertices $x$ of $\spm{q}$,
 and find the median $d_q$ of those values that lie in $(r_L,r_U]$, if any.
\item Find the median $r$ of the medians $d_q$ for all $q\in Q$.
\item Check if $r \geq r^*$ as follows:
% or not.
% To test if $r \geq r^*$ or not, we do the following:
  Execute our decision algorithm for $(i, j, r)$ described in Section~\ref{sec:decision}
     for all candidate pairs $(i, j)$ computed by Lemma~\ref{lem:candidate}.
  By Lemma~\ref{lem:candidate_pairs},
  if there exists a candidate pair $(i, j)$ such that the decision algorithm returns ``yes'',
  then we have $r \geq r^*$; otherwise, we have $r < r^*$.
\item  If $r \geq r^*$, then set $r_U$ to $r$; otherwise, set $r_L$ to $r$.
\item Repeat from Step (i) until the interval $(r_L,r_U]$ does not change any longer.
\end{enumerate}
The above procedure iteratively reduces the interval $(r_L,r_U]$ while
keeping the property that $r^* \in (r_L,r_U]$.
In addition, $(r_L,r_U]$ eventually contains no distance $d(q, x)$
for any $q\in Q$ and a vertex $x$ of $\spm{x}$.
Hence, the procedure guarantees to reduce
the interval $(r_L,r_U]$ with the desired property.

Now, we analyze its time and space complexity.
As a preprocessing, compute the set of all candidate pairs
in $O(m(m+n)\log (m+n))$ time and $O(m+n)$ space by Lemma~\ref{lem:candidate}.
Note that there are $O(m)$ candidate pairs.
We spend $O(mn \log n)$ time for Steps (i) and (ii),
and $O(m(m+n)\log^2 (m+n))$ time for Step (iii) by Theorem~\ref{thm:decision_algorithm}.
By selecting the median $r$ of medians, the procedure terminates after $O(\log (n+m))$ iterations.
Therefore, it takes $O(m(m+n)\log^3(m+n))$ time in total.
Also, it is not difficult to see that this procedure can be implemented using $O(m+n)$ space
if we do not keep $\spm{q}$ and the distances $d(q,x)$.
\end{proof}

Now, we are ready to execute our optimization algorithm.
We run the optimization algorithm for each $(i, j) \in C$ and
find the minimum of $\radrestricted{i}{j}$ over $(i, j)\in C$.

\begin{theorem}
  \label{thm:main}
  An optimal two-center of $m$ points with respect to a simple $n$-gon
  can be computed in $O(m(m+n)\log^3(m+n) \log m)$ time using $O(m+n)$
  space.
\end{theorem}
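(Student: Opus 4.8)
The plan is to combine the subprocedures developed in the previous sections into a single algorithm and then verify its correctness and resource bounds. I would begin by computing the geodesic convex hull $\CHPQ$ of $Q$ in $O(n + m\log(n+m))$ time via~\cite{shortest-path}, and thereafter treat $\CHPQ$ as the working polygon; Lemma~\ref{lem:chpq} guarantees that this loses no generality, since some optimal two-center has both of its centers inside $\CHPQ$.

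For correctness, the key fact I would invoke is Lemma~\ref{lem:candidate_pairs}: there is an optimal partition pair lying in the set $C$ of candidate pairs. Consequently $r^* = \min_{(i,j)\in C} \radrestricted{i}{j}$, and it suffices to compute $\radrestricted{i}{j}$ for every candidate pair and return the smallest value together with its associated centers. I would compute the candidate set $C$ by Lemma~\ref{lem:candidate} and an assistant interval $(r_L,r_U]$ with $r^* \in (r_L,r_U]$ by Lemma~\ref{lem:assistant_interval}, and then run the optimization algorithm of Theorem~\ref{thm:optimization_algorithm} on each $(i,j)\in C$ with this interval. The one subtle point to check is that the precondition of that theorem is satisfied for the relevant pair: for the optimal candidate pair we have $\radrestricted{i}{j} = r^* \in (r_L,r_U]$, so the optimization algorithm returns the correct value; for the remaining pairs it either returns $\radrestricted{i}{j}$ (when it falls in the interval) or correctly reports that $\radrestricted{i}{j} > r_U$, in which case that pair cannot be optimal and is safely discarded. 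Taking the minimum over all returned values therefore yields $r^*$ and a corresponding optimal two-center.

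Finally I would tally the complexity. Computing $C$ costs $O(m(m+n)\log(m+n))$ time by Lemma~\ref{lem:candidate}, and computing the assistant interval costs $O(m(m+n)\log^3(m+n))$ time by Lemma~\ref{lem:assistant_interval}, each using $O(m+n)$ space. Since $|C| = O(m)$ and a single call to the optimization algorithm runs in $O((m+n)\log^3(m+n)\log m)$ time by Theorem~\ref{thm:optimization_algorithm}, the $O(m)$ calls together take $O(m(m+n)\log^3(m+n)\log m)$ time, which dominates all preceding steps and matches the claimed bound. Because the candidate pairs are processed sequentially and every subprocedure reuses $O(m+n)$ working space, the total space is $O(m+n)$. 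The main work here is not a conceptual obstacle—all of the difficult structural and algorithmic arguments reside in the earlier lemmas—but rather the careful verification that the assistant interval is valid for the optimal pair and that the per-pair costs aggregate to the stated running time.
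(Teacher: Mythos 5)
Your proposal is correct and follows essentially the same route as the paper's own proof: compute the candidate pairs (Lemma~\ref{lem:candidate}), compute an assistant interval (Lemma~\ref{lem:assistant_interval}), run the optimization algorithm of Theorem~\ref{thm:optimization_algorithm} on each of the $O(m)$ candidate pairs, and take the minimum, with correctness resting on Lemma~\ref{lem:candidate_pairs}. Your explicit check that the precondition $\radrestricted{i}{j}\in(r_L,r_U]$ holds for the optimal pair (and that other pairs are either resolved or safely discarded) is a point the paper leaves implicit, but it is the same argument.
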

\begin{proof}
  The correctness of our algorithm follows from the arguments above.
  We thus focus on analyzing the time complexity.

  In our algorithm, we first compute the set of candidate pairs in
  $O(m(m+n)\log (m+n))$ time by Lemma~\ref{lem:candidate}.  Before
  running the optimization algorithm, we spend $O(m(m+n)\log^3(m+n))$
  time by Lemma~\ref{lem:assistant_interval} to compute an assistant
  interval $(r_L,r_U]$.  The main procedure consists of $O(m)$ calls
  of the optimization algorithm, which takes
  $O(m(m+n)\log^3(m+n)\log m)$ time by
  Theorem~\ref{thm:optimization_algorithm}.  Also, the space usage is
  bounded by $O(m+n)$ in every step.
\end{proof}

%=====================end of document=================

%\newpage

{
\bibliographystyle{abbrv}
\bibliography{paper}
}

\end{document}